\documentclass[aps,prd,onecolumn,showpacs,groupedaddress,nofootinbib]{revtex4-2}
\usepackage{graphicx}% Include figure files
\usepackage{dcolumn}% Align table columns on decimal point
\usepackage{bm}% bold math
\usepackage{color}
\usepackage{longtable}
\usepackage{supertabular}
\usepackage[T1]{fontenc}
\usepackage{epstopdf}
\usepackage{amsmath}
\usepackage{amssymb}
\usepackage{setspace}
\usepackage{multirow}
\usepackage{booktabs}
\usepackage[section]{placeins}
\usepackage{mathrsfs}
\usepackage{hyperref}

\usepackage{amsthm}

%\usepackage{latexsym}

%%%%%%%%%%%%%%%%%%%%%%%%%%%%%%%%%%%%%%%%%%%%%%%%%%%%%%%%%%%%%%%
 \newcommand{\bq}{\begin{equation}}
 \newcommand{\eq}{\end{equation}}
 \newcommand{\bqn}{\begin{eqnarray}}
 \newcommand{\eqn}{\end{eqnarray}}
 \newcommand{\nb}{\nonumber}

\newtheorem{definition}{\textbf{Definition}}[section]
\newtheorem{Lem}[definition]{\textbf{Lemma}}
\newtheorem{Thm}[definition]{\textbf{Theorem}}

\newtheorem{Cor}[definition]{\textbf{Corollary}}

\begin{document}

\newcommand{\markerone}{\raisebox{0pt}{\tikz{\node[draw,scale=0.3,regular polygon, regular polygon sides=3,fill=red!1!red,rotate=0](){};}}}
\newcommand{\markertwo}{\raisebox{0.5pt}{\tikz{\node[draw,scale=0.3,regular polygon, regular polygon sides=3,fill=blue!1!blue,rotate=180](){};}}}

\title{Capacity drop induced by stability modifications in stochastic dynamic systems}

\author{Mariana Pereira de Melo$^{1}$}\email[E-mail: ]{marianapm@usp.br}
\author{Leon Alexander Valencia$^{2}$}\email[E-mail: ]{lalexander.valencia@udea.edu.co}
\author{Wei-Liang Qian$^{1,3,4}$}\email[E-mail: ]{wlqian@usp.br (corresponding author)}

\affiliation{$^{1}$ Escola de Engenharia de Lorena, Universidade de S\~ao Paulo, 12602-810, Lorena, SP, Brazil}
\affiliation{$^{2}$ Instituto de Matemáticas, Universidad de Antioquia, Calle 67 Nro. 53-108, Medellín, Colombia}
\affiliation{$^{3}$ Faculdade de Engenharia de Guaratinguet\'a, Universidade Estadual Paulista, 12516-410, Guaratinguet\'a, SP, Brazil}
\affiliation{$^{4}$ Center for Gravitation and Cosmology, School of Physical Science and Technology, Yangzhou University, 225002, Yangzhou, Jiangsu, China}

%\date{\today}
%\date{Oct. 24th, 2022}
\date{Mar. 14th, 2025}

\begin{abstract}
It is well-known that the fundamental diagram in a realistic traffic system is featured by capacity drop. 
From a mesoscopic approach, we demonstrate that such a phenomenon is linked to the unique properties of stochastic noise, which, when considered a specific perturbation, may counterintuitively enhance the stability of the originally deterministic system.
We argue that the system achieves its asymptotic behavior through a trade-off between the relaxation towards the stable attractors of the underlying deterministic system and the stochastic perturbations that non-trivially affect such a process.
Utilizing It\^o calculus, the present study analyzes the threshold of the relevant perturbations that appropriately give rise to such a physical picture.
In particular, we devise a scenario for which the stochastic noise is introduced in a minimized fashion to a deterministic fold model, which is known to reproduce the main feature of the fundamental diagram successfully.
Our results show that the sudden increase in vehicle flow variance and the onset of capacity drop are intrinsically triggered by stochastic noise.
Somewhat counterintuitively, we point out that the prolongation of the free flow state's asymptotic stability that forms the inverse lambda shape in the fundamental diagram can also be attributed to the random process, which typically destabilizes the underlying system.
The asymptotic behaviors of the system are examined, corroborating well with the main features observed in the experimental data.
The implications of the present findings are also addressed.
\end{abstract}

\maketitle
\newpage

\section{Introduction}\label{sec1}

Over the past decades, the study of traffic flow dynamics has seen a surge in interest, stemming from a demand to understand and mitigate the challenges of growing urbanization and motorization. 
The study on the subject dates back to the 1930s, with notable reference to Greenshield's analysis on traffic capacity~\cite{traffic-flow-data-03}.
The approach pioneered by Lighthill, Whitham~\cite{traffic-flow-hydrodynamics-01}, and Richards~\cite{traffic-flow-hydrodynamics-02} (LWR) marks the first attempt to depict the traffic flow along a road segment as a continuum.
It effectively treats a system consisting of discrete vehicles by a continuous distribution of matter, whose dynamics are governed by deterministic hydrodynamic equations.

Since the inception of this field, the research community has substantially diversified.
A viable perspective for distinguishing different theoretical frameworks is to focus on the level of data aggregation corresponding to the scale at which traffic data is collected.
This spectrum spans from macroscopic, which focuses on large-scale flow characteristics, to mesoscopic, which bridges the gap between macro and micro scales by focusing on the accumulations of vehicles in the underlying phase space, and finally to microscopic, which captures detailed individual vehicle dynamics.
Macroscopic models~\cite{traffic-flow-hydrodynamics-01, traffic-flow-hydrodynamics-02, traffic-flow-hydrodynamics-11, traffic-flow-hydrodynamics-12, traffic-flow-hydrodynamics-13, traffic-flow-hydrodynamics-14, traffic-flow-hydrodynamics-17, traffic-flow-hydrodynamics-22, traffic-flow-hydrodynamics-23, traffic-flow-hydrodynamics-24, traffic-flow-hydrodynamics-25, traffic-flow-hydrodynamics-30, traffic-flow-hydrodynamics-36, traffic-flow-hydrodynamics-37, traffic-flow-hydrodynamics-40} primarily follow the approach initiated by the LWR model and treat traffic flow as a fluid.
Mesoscopic approaches~\cite{traffic-flow-btz-01, traffic-flow-btz-02, traffic-flow-btz-03, traffic-flow-btz-05, traffic-flow-btz-14, traffic-flow-btz-15, traffic-flow-btz-lob-02}, typically presented in the form of the Boltzmann equation, consider the system's evolution in phase space, interpreting traffic as an interacting ``gas'' of particles.
Microscopic schemes~\cite{traffic-flow-micro-15, traffic-flow-micro-18, traffic-flow-micro-19, traffic-flow-micro-32, traffic-flow-micro-33, traffic-flow-micro-37, traffic-flow-micro-51}, often implemented through car-following models, emphasize the behaviors and interactions of individual vehicles, including cellular automaton approaches~\cite{traffic-flow-cellular-01, traffic-flow-cellular-02, traffic-flow-cellular-03, traffic-flow-cellular-04} defined on a discrete space-time grid.
Due to the distinct nature of these theoretical tools, each approach offers unique insights into its underlying physics.
In this regard, the above approaches have been utilized to scrutinize various realistic aspects of traffic flow, such as the human driver behavior analysis regarding lane-changing and gap acceptance~\cite{traffic-flow-phenomenology-12, traffic-flow-micro-41, traffic-flow-micro-62, traffic-flow-micro-63}, intricate traffic networks that account for urban road infrastructure~\cite{traffic-flow-phenomenology-40, traffic-flow-phenomenology-41, traffic-flow-phenomenology-42}, and the optimization of traffic signals~\cite{traffic-flow-micro-78, traffic-flow-micro-81}.
Also, much attention has been concentrated on sustainable mobility and urban traffic integration, aiming at a transportation system that is both efficient and environmentally and socially sustainable~\cite{traffic-flow-phenomenology-43, traffic-flow-phenomenology-48}. 
Studies have been performed regarding macroscopic traffic patterns in an attempt to understand the traffic system at an aggregate level across the transportation network. 
By examining the overall traffic flow characteristics, one aims to explain how these factors interact over entire networks rather than at the individual vehicle level~\cite{traffic-flow-phenomenology-25, traffic-flow-review-20, traffic-flow-phenomenology-46}.
Furthermore, data-driven analysis has come to play an increasingly important role~\cite{traffic-flow-micro-57, traffic-flow-micro-80}, and in particular, machine learning algorithms have manifestly demonstrated their potential~\cite{traffic-flow-data-ML-15, traffic-flow-data-ML-20, traffic-flow-data-ML-21, traffic-flow-data-ML-27, traffic-flow-data-ML-28, traffic-flow-data-ML-29, traffic-flow-data-ML-30}.

A central topic that remains relevant in studying traffic flow is the instability of the underlying dynamical system that is interpreted to trigger a congested traffic state.
Since it was first pointed out in seminal works by Bando {\it et al.}~\cite{traffic-flow-micro-15} and Kerner {\it et al.}~\cite{traffic-flow-hydrodynamics-14, traffic-flow-hydrodynamics-17}, instability and the subsequential phase transition in traffic systems have since played a pivotal role across models of different levels of aggregation, specifically, the macroscopic~\cite{traffic-flow-hydrodynamics-14, traffic-flow-hydrodynamics-17, traffic-flow-hydrodynamics-23, traffic-flow-hydrodynamics-25}, mesoscopic~\cite{traffic-flow-btz-01, traffic-flow-btz-03, traffic-flow-btz-lob-02, traffic-flow-btz-lob-04}, and microscopic~\cite{traffic-flow-micro-15, traffic-flow-micro-19, traffic-flow-micro-33, traffic-flow-cellular-04, traffic-flow-micro-37} approaches.

Another crucial feature accompanying the traffic data is its stochasticity, which is attributed to the problem's intrinsic statistical nature.
Deterministic approaches, although intuitive, do not straightforwardly account for such uncertainties or randomness. 
Besides, stochastic modeling is a natural candidate for capturing the uncertainties in the traffic data.
In the literature, the stochastic nature of vehicle traffic has gained widespread recognition in recent years. 
On the microscopic level, for instance, by incorporating the Wiener or Ornstein–Uhlenbeck process into a car-following model, the resulting dynamical system might feature stop-and-go waves, jamiton, or phantom jam as one sits in the linearly unstable region of the parameter space~\cite{traffic-flow-micro-45, traffic-flow-micro-46, traffic-flow-micro-47, traffic-flow-micro-48, traffic-flow-micro-77, traffic-flow-micro-82}.
It is particularly intriguing to point out that in the presence of stochastic noise, the stop-and-go phenomenon can be triggered even when the system is located in a linearly stable region~\cite{traffic-flow-micro-77, traffic-flow-micro-82}.
Similarly, the inclusion of stochastic noise has also been carried out for the cellular automaton~\cite{traffic-flow-cellular-20, traffic-flow-cellular-07, traffic-flow-cellular-25} and hydrodynamic~\cite{traffic-flow-hydrodynamics-52} models.
From the mesoscopic perspective, efforts~\cite{traffic-flow-btz-lob-01, traffic-flow-btz-lob-03, traffic-flow-btz-lob-04} have striven to incorporate this uncertainty into stochastic models.
The primary focus is also on how the stochasticity will dictate the uncertainty and/or modify the stability of the dynamic system, along with other macroscopic features of the traffic flow when compared against their deterministic counterparts. 
One particular feature pertinent to the stochastic nature of traffic systems is the inverse-$\lambda$ shape~\cite{traffic-flow-data-inverse-lambda-01,traffic-flow-data-inverse-lambda-02,traffic-flow-data-inverse-lambda-03,traffic-flow-data-inverse-lambda-04} accompanying capacity drop~\cite{traffic-flow-data-capacity-drop-01,traffic-flow-data-capacity-drop-02,traffic-flow-data-capacity-drop-03} in the fundamental diagram~\cite{traffic-flow-review-02}.
In practice, it typically occurs at active bottlenecks, where the maximum flow is suppressed when queues are formed upstream, reducing bottleneck throughput after congestion onset. 
In the fundamental diagram, an inverse-$\lambda$ shape occurs by a discontinuity in the flow as a function of vehicle concentration, which occurs in the vicinity of the maximum flow.
Consequently, the flow-concentration curve is divided into two different regions of lower and higher vehicle density, respectively, known as {\it free} and {\it congested} flow.
The latter is characterized by significant variance as the data smears over the phase space.
In the literature, the scatter of the data is associated with the stability of the traffic system.
Specifically, it has been either attributed to a transient temporal evolution of a deterministic system such as stop-and-go wave~\cite{traffic-flow-hydrodynamics-23,traffic-flow-hydrodynamics-25,traffic-flow-hydrodynamics-24,traffic-flow-hydrodynamics-22}, or a metastable region in the fundamental diagram such as the so-called {\it synchronized flow} in the context of the three-phase traffic theory~\cite{traffic-flow-three-phase-01,traffic-flow-three-phase-02}.
Nonetheless, debate continues regarding the origin of the scatter~\cite{traffic-flow-data-17,traffic-flow-data-18} and capacity drop~\cite{traffic-flow-hydrodynamics-35,traffic-flow-hydrodynamics-40, traffic-flow-data-capacity-drop-08, traffic-flow-data-capacity-drop-09, traffic-flow-data-capacity-drop-10}.
Via numerical simulation, it was pointed out by some of us~\cite{traffic-flow-btz-lob-04} that, from a mesoscopic viewpoint, the capacity drop might merge as a system's dynamic feature while triggered by varying noise strength.
It was shown that both the scatter and capacity drop are reproduced numerically, which was attributed to the modification to the dynamical model's stability.
%Such a feature can be precisely reproduced and explained in terms of the stability of the dynamic system from a stochastic mesoscopic perspective.

Motivated by these considerations, our study aims to scrutinize the topic further with mathematical rigor. 
Using a mesoscopic stochastic model, we analyze the properties of the fundamental diagram, capacity drop, and its inherent uncertainties. 
Specifically, the capacity drop can be interpreted as a trade-off between the system's relaxation towards the deterministic model's stationary states and the stochastic perturbations that undermine this stability. 
Through Itô calculus, we quantitatively analyze the threshold of relevant perturbations that align with this physical depiction. 
By incorporating stochastic noise into a minimalist stochastic fold model, known to reproduce the fundamental diagram's inverse-lambda shape successfully, we demonstrate how stochastic perturbations instigate capacity drops. 
Finally, we explore the system's asymptotic properties and discuss the implications of our findings.

The remainder of the paper is organized as follows.
In the next section, we revisit a two-phase deterministic model based on which the stochastic approach is devised.
In a minimized fashion, the deterministic model incorporates the features of the fundamental diagram consisting of two stable, steady states, namely, the free-flow and congestion states.
In Sec.~\ref{sec3}, the deterministic model will be extended by including stochastic noise in a minimized fashion, giving rise to its stochastic generalizations.
The resulting stochastic mesoscopic model differs from its deterministic counterpart regarding its steady states' stabilities and variance. 
Specifically, we show that the resultant model is physically meaningful regarding the main features of the fundamental diagram and, in particular, the dynamic capacity drop.
The relevant properties and inclusively asymptotic behavior are presented in detail.
We relegate mathematical derivations of the theorems elaborated in the main text to Appx.~\ref{app2}.
In Sec.~\ref{sec4}, numerical simulations are carried out, which manifestly illustrate the physical interpretations of the mathematical theorems.
We explore the model's parameter space, through which a rich structure is revealed.
It is observed that the empirically observed features in the fundamental diagram can be reasonably captured by the model, attributed to an interplay between the few model parameters.
Further analyses, inclusively the numerical validation of the theorems, are relegated to Appx.~\ref{app3} and~\ref{app4}.
The concluding remarks are given in the last section.

\section{A two-phase deterministic model}\label{sec2}

The stochastic mesoscopic traffic model will be established on its deterministic counterpart, a fold catastrophe model first proposed in~\cite{traffic-flow-btz-lob-03}.
In this section, we elaborate on the details of the latter. 
Despite being a bare-bone approach, it features two steady-state solutions, namely, the free-flow and congestion states, both of which are stable.
The equation of motion of the model is presented in Sec.~\ref{sec2.1}, and its solutions are elaborated in Sec.~\ref{sec2.2}.

\subsection{The equations of motion}\label{sec2.1}

Following~\cite{traffic-flow-btz-lob-02}, we consider a deterministic fold catastrophe model that describes a (possibly small) section of the highway with length $L$, which contains $N$ vehicles.
The equation of motion of the system is governed by the rate of change of the occupation number $n_i$ of the {\it{i}}th state with speed $v_i$, which reads
\bqn\label{eqDetEoS}
 \frac{dn_1}{dt}   &=& -c_1n_1+c_2n_1n_2\frac{1}{N_\mathrm{max}-N} \nb\\
 \frac{dn_2}{dt}   &=& c_1n_1-c_2n_1n_2\frac{1}{N_\mathrm{max}-N}  ,
\eqn
where one considers the simplest scenario, which involves only two speed states $i=1,2$ and $c_1$, $c_2$, $v_1$, $v_2$, and $N_\mathrm{max}$ are model parameters.
In principle, the values of the parameters are governed by a generic microscopic theory reflecting the underlying dynamics, e.g., those tailored to the bottleneck section of a highway.
Without loss of generality, we take $c_1,\, c_2>0$, which ensures both a loss and a gaining mechanism.
Also, one assumes $v_1<v_2$ so that the speed $v_2$ corresponds to that of the free-flow state.

In what follows, we briefly discuss the physical contents of the equation of motion.
The rate of change of the occupation number for speed state ``$1$'' is governed by the first line of Eq.~\eqref{eqDetEoS}, consisting of a loss and gaining terms.
The loss term $c_1n_1$ is proportional to the occupation number of the state.
It describes the process when fast drivers leave this state at a rate determined by the transition coefficient $c_1$.
On the other hand, the gain term is proportional to both $n_1$ and $n_2$.
This is because cautious drivers decelerate owing to the occupation of the higher speed state, and meanwhile, the presence of slower vehicles forces the fast ones to slow down to their speed.
The extra constant factor $\frac{1}{N_\mathrm{max}-N}$ takes the total congestion into account, where the parameter $N_\mathrm{max}$ measures the maximum occupation number. 
As elaborated further below, as $N$ approaches $N_\mathrm{max}$ from below, the deceleration effect becomes more significant.

Eq.~\eqref{eqDetEoS} guarantees the total vehicle number is conserved.
In other words, we have 
\bqn
N=n_1+n_2 .\label{Nconservation}
\eqn 
Therefore, the equation of motion effectively possesses only one degree of freedom.

\subsection{Steady solutions and their physical interpretations}\label{sec2.2}

We can simplify Eq.~\eqref{eqDetEoS} using the total vehicle number conservation Eq.~\eqref{Nconservation}, and we have
\begin{equation}\label{eqDetn1}
\frac{dn_1}{dt}=f(n_1)=-c_1n_1+c_2n_1 (N-n_1)\frac{1}{N_\mathrm{max}-N} .
\end{equation}

Given the physical condition $0\le n_1\le N < N_\mathrm{max}$, Eq.~\eqref{eqDetn1} leads to the follow solution:
\begin{equation}\label{eqDetn1sol}
n_1(t)=\frac{\left[(c_1+c_2)N-c_1N_\mathrm{max}\right]\exp\left[N(c_1+c_2)\left(c_1+\frac{t}{N_\mathrm{max}-N}\right)\right]}
{c_2\exp\left[N(c_1+c_2)\left(c_1+\frac{t}{N_\mathrm{max}-N}\right)\right]-\exp\left[c_1\left(c_1+\frac{t}{N_\mathrm{max}-N}\right) N_\mathrm{max}\right]} ,
\end{equation}
when 
\bqn
N > N_c \equiv \frac{c_1}{c_1+c_2}N_\mathrm{max} .\label{defDetBd}
\eqn
It asymptotically approaches a steady solution
\begin{equation}\label{n1AssCon}
n_1^g\equiv n_1(t\to +\infty)=N-\frac{c_1}{c_2}(N_\mathrm{max}-N) .
\end{equation}
The stability of the solution can be found by analyzing the linearized equation of motion for the deviation $\delta n_1 = n_1-n_1^g$, which is found to be
\begin{equation}\label{eqDetDn1}
\frac{d(\delta n_1)}{dt}
=\left[-c_1+c_2 N\frac{1}{N_\mathrm{max}-N}-2c_2n_1^g\frac{1}{N_\mathrm{max}-N}\right]\delta n_1 .
%=\left[c_1-c_2 N\frac{1}{N_\mathrm{max}-N}\right]\delta n_1 .
\end{equation}
It is not difficult to show that the quantity in the bracket is negative definite for the region $N > N_c$. 
Therefore, the above steady solution $n_1^g$ is stable.
In other words, Eq.~\eqref{n1AssCon} can be viewed as an attractor of the dynamical system Eq.~\eqref{eqDetEoS} for $N>N_c$, and the fixed point becomes a unphysical ($n_1<0$) repulsor (unstable) when $N < N_c$.

On the other hand, for $N < N_c $, the relevant solution is given by the trivial attractor
\begin{equation}\label{n1AssFree}
n_1^f\equiv 0.
\end{equation}
One can similarly show that the solution is also stable for the relevant region.

The physical interpretation of the above solutions is straightforward by evaluating the corresponding vehicle flows
\begin{equation}\label{qDef}
q = \frac{n_1 v_1+(N-n_1)v_2}{L} .
\end{equation}
One finds, for the flow-concentration curve,
\begin{equation}\label{qEvo}
q(k)=\left\{  
\begin{array}{ll}
    q^f=kv_2 & N\leq N_c, \\
    q^g=q_c+[v_1-\frac{c_1}{c_2}(v_2-v_1)](k-k_c) & N_c<N\leq N_\mathrm{max}\\
\end{array}
\right.
\end{equation}
where $k=N/L$, $k_c=N_c/L$, and $q_c=k_c v_2$.

According to the first line of Eq.~\eqref{qEvo}, the flow arises linearly as a function of concentration.
As indicated by point ``A'' in Fig.~\ref{fig01_fundamental_det}, it reaches its maximum $q=q_c$ at the concentration $k=k_c$ associated with $N=N_c$.
Subsequently, governed by the second line of Eq.~\eqref{qEvo}, the flow decreases monotonically with increasing concentration until the state of total congestion at $N=N_\mathrm{max}$ for which $n_2=0$.
The latter is guaranteed by the factor $\frac{1}{N_\mathrm{max}-N}$ in the equation of motion Eq.~\eqref{eqDetEoS}.
Therefore, $N_c$ marks the bound between free flow and congestion states.
The corresponding fundamental diagram of this simple model is illustrated in Fig.~\ref{fig01_fundamental_det}.
Therefore, the properties of the flow-concentration curve indicate that $n_1^f$ and $n_1^g$ can be readily interpreted as the free-flow and congestion states.
The above notion of the system's stability becomes somewhat different when one includes stochastic noise.

\begin{figure}[!htb]
\begin{minipage}{200pt}
\centerline{\includegraphics*[width=8cm]{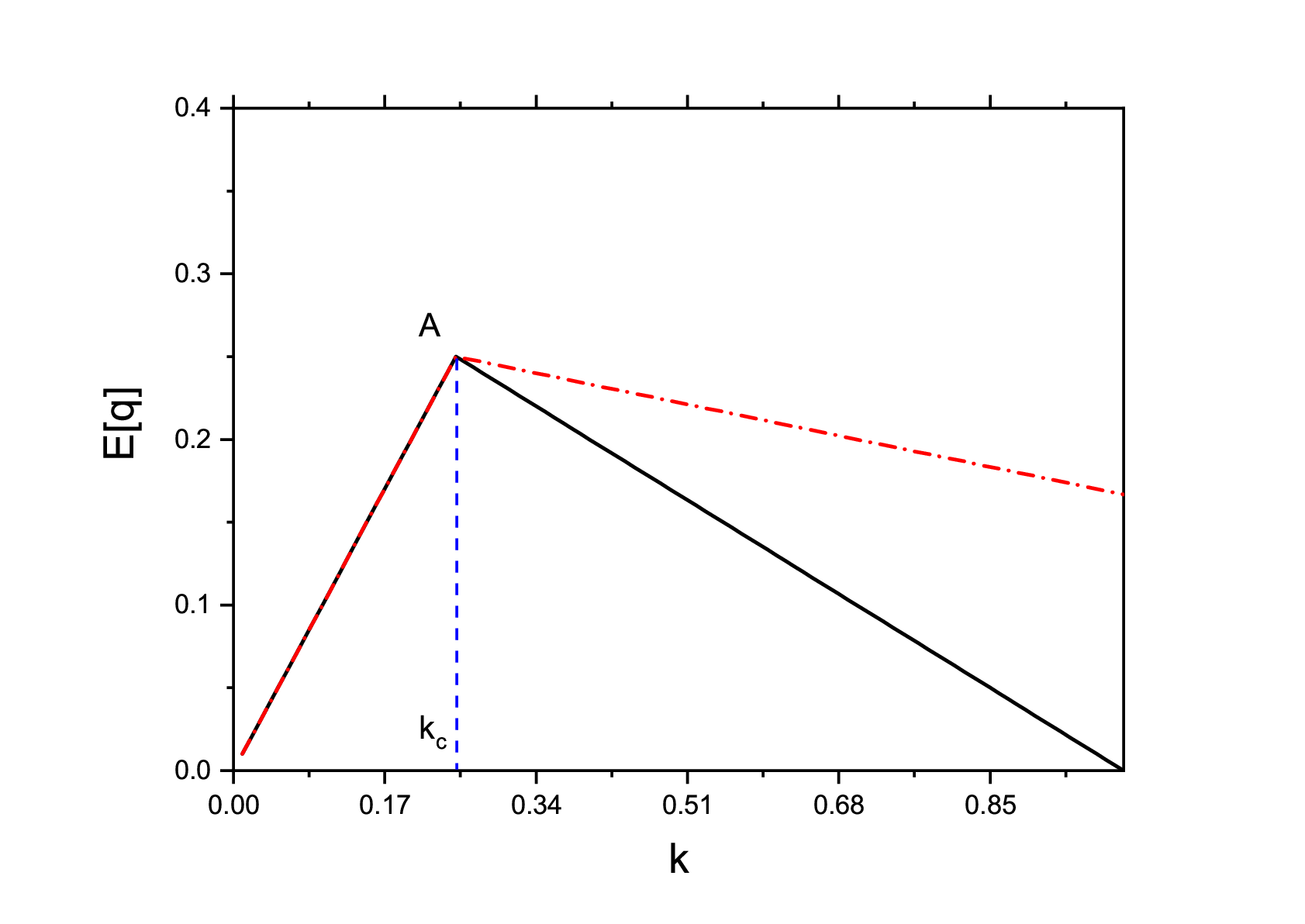}}
\end{minipage}
\caption{The flow-concentration curve for the deterministic two-speed-state model.
As an illustration, the fundamental diagrams are obtained by assuming $v_2=c_1=N_\mathrm{max}=1$, $c_2=3$.
The solid black curve corresponds to the parameter $v_1=0$, while the dashed-dotted red curve corresponds to $v_1=1/6$.}
\label{fig01_fundamental_det}
\end{figure} 

\section{Generalization to a stochastic traffic model}\label{sec3}

The deterministic model presented in the last section can be viewed as a minimized approach that captures the main features of the fundamental diagram.
However, it does not indicate a capacity drop. 
Specifically, the free-flow state immediately becomes unstable once it enters the congestion region, and the connection between the two states is continuous.
Moreover, the model does not feature any uncertainty, and subsequently, the variance of any quantity vanishes.
Regarding a more realistic traffic flow scenario, its stochastic nature and capacity drop constitute its crucial characteristics.
The main goal of the remainder of the present study is to demonstrate that the above features can be realized by introducing stochastic noise and, again, via a minimized fashion.

Now, let us introduce stochastic noise on top of the above deterministic version of the model.
Analytic properties of the resultant dynamic system will be explored, for which our analysis will mainly reside on the analytic side.
The mathematical aspect of the stochastic differential equation will be presented in terms of a few theorems, which are further elaborated in Appx.~\ref{app2}.
More importantly, we elaborate on the insights from a physical perspective, and particular attention is given to the modification to the system's stability.
The latter is mainly non-trivial and physically intriguing to effectively capture the realistic features observed in traffic data.

\subsection{Stochastic differential equation}\label{sec3.1}

As discussed earlier, we aim to introduce stochastic dynamics in a minimized fashion.
To this end, we introduce stochastic noise to the gain term in the first line of Eq.~\eqref{eqDetEoS} while ensuring the total vehicle number conservation.
Specifically, one replaces 
\bqn
c_2 \to c_2+\sigma \dot{B}(t) ,\label{EqStoAsum}
\eqn
where $\sigma$ is the intensity noise and $B(t)$ is a standard Brownian motion independent of the initial condition.
This corresponds to certain uncertainties when a driver speeds up their vehicle while the remaining transitions are deterministic.

Subsequently, the dynamics of the system are governed by the following stochastic differential equation
\begin{equation}\label{SDE}
    dn_1(t)=n_1(t)\left[ \left(-c_1+c_2(N-n_1(t))\frac{1}{N_\mathrm{max}-N}\right)dt+\sigma (N-n_1(t))\frac{1}{N_\mathrm{max}-N}dB(t) \right] ,
\end{equation}

From a mathematical perspective, the asymptotic properties of the system can be captured by a few theorems, which can be derived by partly following the strategy developed in~\cite{khasminskii, Gray}.
In the remainder of this section, we will first present the relevant theorems and then elaborate on their physical insights into the traffic system. 
We relegate further details of the derivations of the theorems to Appx.~\ref{app2}.

\subsection{The stability of the free-flow state}\label{sec3.2}

Now, Eq.~\eqref{SDE} is a stochastic differential equation.
The stability of the initially steady states of its deterministic counterpart Eq.~\eqref{eqDetn1} must be understood regarding the notion of convergence of random variables.
For our present purpose, we do not intend to delve into the relatively subtle difference between convergence in probability and that in distribution~\cite{book-statistical-inference-Wasserman} when applying the following theorems to realistic scenarios. 
Specifically, we will, by and large, be content with ourselves if convergence in distribution is achieved. 

First, the following lemma guarantees that the solution will be physical for the relevant time domain.
In other words, even though persistent, stochastic noise will not perturb the system into an unphysical region.
\begin{Lem}\label{invariance}
    For any given value $n_1(0)\in(0,N)$, the stochastic differential equation (\ref{SDE}) has a unique global positive solution such that:
\begin{equation}
    \mathbb{P}(n_1(t)\in (0,N),\forall t\geq 0)=1 .
\end{equation}
\end{Lem}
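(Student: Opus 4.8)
The plan is to establish the two claims — existence/uniqueness of a local solution, and non-explosion within the physical interval $(0,N)$ — by a standard localization argument in the spirit of Khasminskii and of Gray \emph{et al.} First I would note that the drift coefficient $b(x) = x\left(-c_1 + c_2(N-x)\frac{1}{N_\mathrm{max}-N}\right)$ and the diffusion coefficient $\sigma(x) = \sigma x(N-x)\frac{1}{N_\mathrm{max}-N}$ are polynomials in $x$, hence locally Lipschitz on $\mathbb{R}$. Therefore, by the classical existence–uniqueness theorem for SDEs with locally Lipschitz coefficients, for any initial value $n_1(0) \in (0,N)$ there is a unique maximal local solution $n_1(t)$ defined up to an explosion time $\tau_e$. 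The remaining work is to show $\tau_e = +\infty$ almost surely and that the solution never leaves $(0,N)$.

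The key device is a \emph{stopping-time sandwich}. For integers $m$ large enough that $n_1(0) \in (1/m,\, N - 1/m)$, define
\begin{equation}
\tau_m = \inf\left\{ t \ge 0 \;:\; n_1(t) \notin \left(\tfrac{1}{m},\, N - \tfrac{1}{m}\right) \right\},
\end{equation}
so that $\tau_m$ is nondecreasing in $m$; set $\tau_\infty = \lim_{m\to\infty}\tau_m \le \tau_e$. It suffices to prove $\tau_\infty = +\infty$ a.s., since on $\{t < \tau_\infty\}$ the solution automatically stays in $(0,N)$ and cannot explode. The standard route is to construct a nonnegative $C^2$ Lyapunov-type function $V:(0,N)\to[0,\infty)$ with $V(x)\to +\infty$ as $x\to 0^+$ or $x\to N^-$, and to show that the generator applied to $V$ satisfies $\mathcal{L}V \le K V + K$ for some constant $K$ (or more simply $\mathcal{L}V \le K$) on $(0,N)$. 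A natural candidate, given that both endpoints are "dangerous," is something like $V(x) = -\log\!\big(\tfrac{x}{N}\big) - \log\!\big(1 - \tfrac{x}{N}\big)$, or a shifted variant $V(x) = (x - \log x) + \big((N-x) - \log(N-x)\big)$ adjusted to be nonnegative; one then computes $\mathcal{L}V$ using It\^o's formula. Applying It\^o to $V(n_1(t))$, taking expectations (the stochastic integral up to $\tau_m \wedge T$ is a genuine martingale because the integrand is bounded on the stopped path), and using the bound on $\mathcal{L}V$ gives $\mathbb{E}\,V(n_1(\tau_m \wedge T)) \le V(n_1(0)) + K\,\mathbb{E}[\tau_m\wedge T] \le V(n_1(0)) + KT$. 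Since on $\{\tau_m \le T\}$ the value $V(n_1(\tau_m))$ is at least $\min\{V(1/m),\, V(N-1/m)\} \to +\infty$, a Chebyshev/Markov-type estimate forces $\mathbb{P}(\tau_m \le T) \to 0$ as $m\to\infty$, hence $\mathbb{P}(\tau_\infty \le T) = 0$ for every $T$, i.e. $\tau_\infty = +\infty$ a.s.

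The main obstacle — and the step requiring genuine care rather than routine bookkeeping — is the verification that the chosen $V$ actually yields a one-sided bound $\mathcal{L}V \le KV + K$ uniformly on the \emph{open} interval $(0,N)$, including near both endpoints. Near $x = N$ the diffusion term $\sigma(x)^2 V''(x)$ and the drift interact delicately: $V''$ blows up like $(N-x)^{-2}$ while $\sigma(x)^2$ vanishes like $(N-x)^2$, so the product stays bounded, but one must check the $O(1)$ constant that survives, together with the drift contribution $b(x)V'(x)$, does not have the wrong sign with an unbounded coefficient. Near $x = 0$ the analysis is similar but the drift $-c_1 x$ times $V'(x) \sim -1/x$ gives a \emph{favorable} $+c_1$ contribution, while $\sigma(x)^2 \sim \sigma^2 x^2 (N/(N_\mathrm{max}-N))^2$ against $V'' \sim 1/x^2$ again yields a bounded term. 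Assembling these estimates into a clean single constant $K$ — possibly after replacing "$\le K$" by "$\le KV + K$" to absorb a linearly growing remainder — is where the real content lies; everything else (localization, the martingale property of the stopped It\^o integral, the final Markov-inequality limit) is boilerplate. I would also remark that positivity alone is immediate from the multiplicative structure $dn_1 = n_1(\cdots)$, since $n_1 \equiv 0$ is itself a solution and uniqueness prevents a solution started in $(0,N)$ from reaching it in finite time; this gives the lower barrier essentially for free, leaving the upper barrier at $N$ as the one genuinely needing the Lyapunov estimate.
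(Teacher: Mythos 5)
Your proposal is correct and follows essentially the same route as the paper: a Khasminskii-type Lyapunov (non-exit) argument with a barrier function blowing up at both endpoints of $(0,N)$, the paper simply taking $V(x)=\frac{1}{x}+\frac{1}{N-x}$, showing $LV\leq CV$ with $C=c_1\vee(c_2\alpha N)+\sigma^2\alpha^2N^2$, and citing Khasminskii's Theorem 3.5 instead of writing out the stopping-time localization. Your logarithmic barrier works just as well (the cancellations you flag near $x=0$ and $x=N$ do go through and even give the simpler bound $\mathcal{L}V\leq K$), so the difference is only the choice of $V$ and the level of detail.
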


\begin{definition}
  Let $R_0^s:=\frac{\alpha c_2N}{c_1}- \frac{\alpha^2\sigma^2N^2}{2c_1}$, where $\alpha=\frac{1}{N_\mathrm{max}-N}$. 
\end{definition}

As it turns out, the quantity $R_0^s$ plays a crucial role in characterizing different traffic states.
In particular, one might be interested in the emergence of an asymptotic free-flow state at the limit $t\to\infty$.
Using the above definition, the impact on the stability of the free-flow state Eq.~\eqref{n1AssFree} can be inferred from the following theorem.

\begin{Thm}\label{2a3}
If $R_0^s<1$ and $\sigma^2< \frac{c_2}{\alpha N}$, then for any initial condition $n_1(0)\in(0,N)$, the solution of (\ref{SDE}) satisfies

\begin{equation}\label{eqFreeFlow1}
    \limsup_{t\to\infty}\frac{1}{t}\log(n_1(t))\leq \alpha c_2N-c_1-\frac{1}{2}\alpha^2\sigma^2N^2<0 \hspace{.2cm}\text{a.s.}
\end{equation}
where ``a.s.'' stands for almost sure convergence.
\end{Thm}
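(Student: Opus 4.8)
The plan is to apply It\^o's formula to $\log n_1(t)$ and thereby reduce the statement to two ingredients: a purely deterministic upper bound on the resulting drift coefficient, and the strong law of large numbers for the martingale part. By Lemma~\ref{invariance}, for $n_1(0)\in(0,N)$ the solution stays in $(0,N)$ almost surely, so $\log n_1(t)$ is well defined for every $t\ge 0$ and the quantities below are finite along every trajectory; this is where that lemma is used.

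First I would compute, from Eq.~\eqref{SDE} (whose diffusion coefficient is $\sigma\alpha\,n_1(N-n_1)$) and It\^o's formula with $f(x)=\log x$, that
\begin{equation}
d\log n_1(t)=g\big(n_1(t)\big)\,dt+\sigma\alpha\big(N-n_1(t)\big)\,dB(t),\qquad g(x):=-c_1+\alpha c_2 (N-x)-\tfrac12\alpha^2\sigma^2 (N-x)^2 .
\end{equation}
Integrating on $[0,t]$ and dividing by $t$ gives
\begin{equation}
\frac{1}{t}\log n_1(t)=\frac{1}{t}\log n_1(0)+\frac{1}{t}\int_0^t g\big(n_1(s)\big)\,ds+\frac{M(t)}{t},\qquad M(t):=\sigma\alpha\int_0^t\big(N-n_1(s)\big)\,dB(s).
\end{equation}
The key deterministic step is to maximize $g$ over $x\in[0,N]$. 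Putting $u=N-x\in[0,N]$, $g$ is a concave quadratic in $u$ with vertex at $u^\ast=c_2/(\alpha\sigma^2)$, and the hypothesis $\sigma^2<c_2/(\alpha N)$ says precisely that $u^\ast>N$; hence $g$ is increasing on $[0,N]$ and its maximum is attained at $u=N$ (i.e.\ $x=0$), giving $g(x)\le \alpha c_2 N-c_1-\tfrac12\alpha^2\sigma^2 N^2=:\lambda$. By the definition of $R_0^s$, the second hypothesis $R_0^s<1$ is equivalent to $\lambda<0$. Therefore $\frac{1}{t}\int_0^t g(n_1(s))\,ds\le\lambda$ for all $t>0$.

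It remains to control the martingale term. Its quadratic variation satisfies $\langle M\rangle_t=\sigma^2\alpha^2\int_0^t (N-n_1(s))^2\,ds\le \sigma^2\alpha^2 N^2\,t$, so $\limsup_{t\to\infty}\langle M\rangle_t/t\le \sigma^2\alpha^2 N^2<\infty$, and the strong law of large numbers for continuous local martingales (in the form used in~\cite{Gray}) yields $M(t)/t\to 0$ almost surely. Combining the three contributions—$\frac{1}{t}\log n_1(0)\to 0$, $\frac{1}{t}\int_0^t g\,ds\le\lambda$, and $M(t)/t\to0$—gives $\limsup_{t\to\infty}\frac1t\log n_1(t)\le\lambda<0$ a.s., as claimed.

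This argument is a fairly standard It\^o-plus-martingale-SLLN scheme, so I expect no deep obstacle; the two points needing care are bookkeeping in nature. One must genuinely invoke Lemma~\ref{invariance} so that $g(n_1(s))\le\lambda$ holds along every path (this would fail if $n_1$ could leave $[0,N]$), and one must recognize that the role of the extra hypothesis $\sigma^2<c_2/(\alpha N)$ is exactly to push the maximizer of $g$ to the boundary $x=0$; without it the relevant bound would be the interior value $c_2^2/(2\sigma^2)-c_1$, which need not be negative even when $R_0^s<1$, and the stated estimate \eqref{eqFreeFlow1} would no longer be the sharp one.
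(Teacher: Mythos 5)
Your proposal is correct and follows essentially the same route as the paper's own argument: apply It\^o's formula to $\log n_1(t)$, bound the resulting drift by $\alpha c_2 N-c_1-\tfrac12\alpha^2\sigma^2N^2$ uniformly on $(0,N)$ using the two hypotheses, and kill the stochastic integral with the strong law of large numbers for martingales. Your write-up is in fact slightly more explicit than the paper's sketch (the maximization over $u=N-x$ showing why $\sigma^2<c_2/(\alpha N)$ forces the maximizer to the boundary, the quadratic-variation check, and the explicit appeal to Lemma~\ref{invariance}), and it also fixes a typographical slip in the paper's drift function, where the $\alpha^2$ factor is omitted from the $\tfrac12\sigma^2(N-x)^2$ term.
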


The above theorem furnishes a sufficient condition for the system to evolve into a free-flow state as the system stabilizes in time.
In what follows, we elaborate further on the physical implications.

First, theorem~\ref{2a3} guarantees the presence of a stable free flow state as $n_1$ asymptotically approaches $0$ faster than some exponential function as $t\to \infty$\footnote{At a first glimpse, one might think $n_1<1$ suffices for Eq.~\eqref{eqFreeFlow1}.
However, Eq.~\eqref{eqFreeFlow1} implies an even stronger statement as it implies the existence of some $\epsilon>0$ that, intuitively, $n_1(t)<\exp(-\epsilon t)$ for sufficiently large $t$.}.

Such free-flow states occur at insignificant vehicle occupation and stochastic noise, which are physically pertinent.
In fact, given these conditions, one can proceed further with some quantitative estimations.
The limit of small vehicle occupation can be interpreted mathematically as $N\ll N_\mathrm{max}$.
The condition $R_0^s <1$ roughly implies that $N< \frac{c_1}{c_1+c_2}N_\mathrm{max}$ since the first term of $R_0^s$ is dominant when the stochastic noise is negligible.
Specifically, the latter assumption indicates $\sigma^2 < \frac{N_\mathrm{max}-N}{N}c_2$.
Therefore, using the above two inequalities, straightforward algebra gives us a rough estimation of a sufficient bound of the stable free-flow state.
To be more specific, it is governed by the lesser of two values 
\begin{equation}\label{bdFreeFlow}
N_\mathrm{bound}= \mathrm{min}\left(N_c, N_s\right) ,
\end{equation}
where $N_c$ is defined in Eq.\eqref{defDetBd}, and
\begin{equation}\label{NsDef}
N_s\equiv\frac{c_2 N_\mathrm{max}}{\sigma^2+c_2} .
\end{equation}
On the one hand, one notices that $N_c$ is nothing but the bound of the deterministic model defined by Eq.~\eqref{defDetBd} in Sec.~\ref{sec2.2}.
On the other hand, the stochastic one $N_s$ is found to decrease as the strength of the noise strength $\sigma$ increases, which is also physically sound.
The above dependence on the strength of stochastic noise will only become relevant when $\sigma^2 > \frac{c_2^2}{c_1}$.
In other words, stochastic noise must be strong enough until it plays a role in undermining the free-flow state's stability.

Moreover, it is not difficult to show that the bound estimated above can be further improved by explicitly considering the correction due to the second term of $R_0^s$.
Formally, a refined version of Eq.~\eqref{bdFreeFlow} reads
\begin{equation}\label{bdFreeFlowRef}
N_\mathrm{bound}= \mathrm{min}\left(N_c', N_s\right) ,
\end{equation}
where the deterministic bound $N_c$ suffers a correction due to the stochastic noise, namely,
\begin{equation}\label{Ncprime}
N_c' = N_c + \Delta N_c .
\end{equation}
The term $\Delta N_c$ is owing to the correction of the stochastic noise, which is positive definite
\begin{equation}\label{DeltaNc}
\Delta N_c > 0 .
\end{equation}
Although well defined, the specific form of $\Delta N_c$ is tedious.
However, a reasonably good approximate form can be obtained when $\sigma^2$ is small.
Specifically, we have
\begin{equation}\label{DeltaNcEst}
\Delta N_c \sim \frac{c_1\sigma^2}{2c_2(c_1+c_2)} .
\end{equation}

In the context of the present model, the mathematical fact that we always have the relation $N_c'> N_c$ in the presence of moderate stochastic noise can be readily interpreted as {\it capacity drop}, an intriguing feature from a physical perspective.
In other words, such a relation implies that the free-flow state stretches beyond its deterministic counterpart and above the congestion state.
The above result is counterintuitive.
This is because it seems more plausible that external perturbations would destabilize a state.
Conversely, Eqs.~\eqref{Ncprime} and~\eqref{DeltaNc} indicate that the free-flow state's stability is further strengthened due to stochastic noise.
Nonetheless, as Eq.~\eqref{NsDef} dictates, $N_s$ still destabilizes the free-flow state.
The eventual outcome is a balance between the two competing effects.

Before closing this subsection, it is also instructive to consider the other extreme, namely, the region when $N\to N_\mathrm{max}$.
In particular, one might worry that the second term $\frac{\alpha^2\sigma^2N^2}{2c_1}$ overwhelms the first term $\frac{\alpha c_2 N}{c_1}$ in $R_0^s$, so that $R_0^s<0$.
The latter may undermine the model's validity as it indicates the emergence of stable free flow at unreasonably high vehicle occupation, a physically unrealistic scenario.
Fortunately, it can be shown that such a scenario will not occur.
To see this, one equates the two terms to give $\sigma^2=\frac{2c_2}{\alpha N}$.
However, since free flow demands $\sigma^2 < \frac{c_2}{\alpha N}$, the latter condition eliminates the possibility of obtaining free flow at a significant vehicle density.
Since both terms diverge as $N\to N_\mathrm{max}$, substituting the limiting value $\sigma^2 = \frac{c_2}{\alpha N}$ gives $R_0^s \gg 1$.
In other words, theorem~\ref{2a3} correctly associates the free-flow state with low vehicle occupations, as one would have expected.

\subsection{Flux persistence in the congestion state}\label{sec3.3}

The theorems and lemma given in the following two subsections are regarding the properties of the congestion traffic state.
The present subsection is devoted to the persistence and randomness in the congestion state, while the following subsection elaborates on the stationary aspect of the congestion flow.

\begin{Thm}\label{flux1}
If $R_0^s>1$, then for any given initial value $n_1(0)\in(0,N)$, the solution of (\ref{SDE}) satisfies
\begin{equation}\label{first}
       \limsup_{t\to\infty}n_1(t)\geq \xi\hspace{.2cm}a.s.,
\end{equation}
and
\begin{equation}\label{second}
        \liminf_{t\to\infty}n_1(t)\leq \xi\hspace{.2cm}a.s.,
\end{equation}
where 
\begin{equation}\label{xi}
    \xi=\frac{1}{\alpha^2\sigma^2}\left[ \sqrt{\alpha^2c_2^2-2\alpha^2\sigma^2c_1}-(\alpha c_2-\alpha^2\sigma^2N)\right].
\end{equation}
\end{Thm}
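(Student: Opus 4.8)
The plan is to reduce \eqref{SDE} to an \emph{additive}-noise equation for $\log n_1(t)$ via It\^o's formula and then read off the persistence statement from a sign analysis of the resulting drift, combined with the strong law of large numbers for continuous local martingales; this follows the line of argument in~\cite{khasminskii, Gray}. Applying It\^o's formula and using $(dn_1)^2=\sigma^2\alpha^2 n_1^2(N-n_1)^2\,dt$, one obtains
\begin{equation}\label{logEq}
\log n_1(t)=\log n_1(0)+\int_0^t g(n_1(s))\,ds+M(t),\qquad M(t):=\int_0^t\sigma\alpha\,(N-n_1(s))\,dB(s),
\end{equation}
where $g(x)=-c_1+c_2\alpha(N-x)-\tfrac12\sigma^2\alpha^2(N-x)^2$. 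Regarded as a concave quadratic in the variable $u=N-x$, $g$ vanishes at $u_\pm=\big(c_2\pm\sqrt{c_2^2-2\sigma^2c_1}\,\big)/(\sigma^2\alpha)$, and a direct computation identifies $\xi=N-u_-$. The hypothesis $R_0^s>1$ is exactly the statement $g(0)=c_1(R_0^s-1)>0$; since $g(0)>0$ it follows that $c_2^2>2\sigma^2c_1$ (so $\xi$ is real), that $u_-\in(0,N)$ hence $\xi\in(0,N)$, and that $u_+>N$. Therefore $g(x)>0$ on $(0,\xi)$ and $g(x)<0$ on $(\xi,N)$, with $g$ continuous up to the endpoints and $g(0)=c_1(R_0^s-1)>0$, $g(N)=-c_1<0$.

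For the martingale term, Lemma~\ref{invariance} gives $0<n_1(s)<N$ for all $s$ a.s., so $\langle M\rangle_t=\int_0^t\sigma^2\alpha^2(N-n_1(s))^2\,ds\le\sigma^2\alpha^2N^2\,t$. The strong law of large numbers for local martingales then yields $M(t)/t\to0$ a.s.

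To prove \eqref{first}, suppose for contradiction that $\limsup_{t\to\infty}n_1(t)<\xi$ on a set of positive probability; decomposing this set over the events $F_k=\{\limsup_{t\to\infty}n_1(t)\le\xi-1/k\}$, some $F_{k_0}$ has positive probability. On $F_{k_0}$ there is a random time $T$ and a level $\beta\in(0,\xi)$ (e.g.\ $\beta=\xi-1/(2k_0)$) such that $n_1(s)\in(0,\beta]$ for all $s\ge T$; on this range $g\ge\delta:=\min_{[0,\beta]}g>0$ by continuity and the sign analysis above. Splitting the drift integral in \eqref{logEq} at $T$ (the part over $[0,T]$ being a finite random constant) and using $M(t)/t\to0$ gives $\liminf_{t\to\infty}\tfrac1t\log n_1(t)\ge\delta>0$ on $F_{k_0}$, so $n_1(t)\to\infty$ there — contradicting the boundedness $n_1(t)<N$ from Lemma~\ref{invariance}. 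Hence \eqref{first} holds a.s. The proof of \eqref{second} is symmetric: if $\liminf_{t\to\infty}n_1(t)>\xi$ with positive probability, pass to $G_k=\{\liminf_{t\to\infty}n_1(t)\ge\xi+1/k\}$; on such a set $n_1(s)\in[\beta',N)$ eventually with $\beta'\in(\xi,N)$, where $g\le-\eta:=\max_{[\beta',N]}g<0$, so $\tfrac1t\log n_1(t)\to-\eta<0$ and $n_1(t)\to0$, which is impossible because $n_1(s)\ge\beta'$ eventually.

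The genuinely delicate point is the algebra of the first step: confirming that the condition $R_0^s>1$ is equivalent to $g(0)>0$ and that this single inequality simultaneously makes $\xi$ real, places it in $(0,N)$, and pins it down as the unique point in $(0,N)$ separating the region where the drift pushes $n_1$ up from the region where it pushes it down. Once that is in place, the two convergence claims are a standard comparison argument, the only bookkeeping being the reduction to a countable family of events $F_k$, $G_k$ to handle the $\omega$-dependence of the threshold time $T$.
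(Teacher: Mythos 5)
Your proposal is correct and follows essentially the same route as the paper: It\^o's formula for $\log n_1$, identification of $\xi$ as the unique zero of the drift in $(0,N)$ with the sign change forced by $R_0^s>1$, the martingale strong law to kill $M(t)/t$, and contradiction arguments against the invariance of $(0,N)$ from Lemma~\ref{invariance}. The only differences are cosmetic — you bound the drift via continuity on compact subintervals and handle the exceptional-set bookkeeping with a countable family $F_k$, $G_k$, where the paper uses the explicit monotonicity of $f$ and a small-$\epsilon$ choice of the event.
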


The above theorem concerning limits inferior and superior indicates that the congestion state typically features a finite variance, namely, uncertainty in vehicle occupation.
Moreover, as discussed further below and in the appendix, such a state is intrinsically neither static nor stationary but can be asymptotically characterized by a probability distribution.
Theorem~\ref{flux1} implies that, as the system evolves, it will repeatedly traverse a specific vehicle occupation, denoted by $\xi$, infinitely often with probability one.
The following corollary examines the influence of the stochastic noise’s strength $\sigma$.

\begin{Cor}\label{lemTen}
Suppose that $R_0^s>1$, and consider $\xi$ defined by Eq.~\eqref{xi} as a function of $\sigma$.
For the interval
 \begin{equation}
 0<\sigma<\tilde{\sigma}  , 
 \end{equation} 
$\xi$ is strictly decreasing and attains the limits
\begin{equation}
 \lim_{\sigma\to 0^+}\xi=N\left( 1-\frac{c_1}{\alpha c_2 N} \right) ,\label{limZeroSig}
\end{equation} 
and
\begin{equation}
 \lim_{\sigma\to\tilde{\sigma}^{-}}\xi=\left\{
 \begin{array}{ll}
   0   & \text{if $1\leq R_0 \leq 2$}, \\
    N\left(\frac{R_0-2}{R_0-1}\right)  & \text{if $R_0>2$ },\\
 \end{array}
 \right.
\end{equation}
where $\tilde{\sigma} = \frac{\sqrt{2(\alpha c_2N-c_1)}}{N}$ and $R_0=\frac{\alpha c_2 N}{c_1}$.
\end{Cor}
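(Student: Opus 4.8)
\textbf{Proof proposal for Corollary \ref{lemTen}.}

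The plan is to treat $\xi$ as an explicit function of $\sigma$ on the interval $(0,\tilde\sigma)$ and extract its monotonicity and boundary limits by direct analysis of the formula \eqref{xi}. First I would rewrite $\xi$ in a form where the $\sigma$-dependence is transparent. Setting $u=\sigma^2$ and recalling $\alpha=\frac{1}{N_\mathrm{max}-N}$ is a constant, the expression becomes
\begin{equation}
\xi(u)=\frac{\sqrt{\alpha^2c_2^2-2\alpha^2 u\,c_1}-\alpha c_2+\alpha^2 u N}{\alpha^2 u}
=\frac{\sqrt{c_2^2-2u c_1}-c_2}{\alpha u}+N .
\end{equation}
The constraint that the argument of the square root be nonnegative is exactly $u\le \frac{c_2^2}{2c_1}$; the condition $R_0^s>1$ together with $\sigma^2<\frac{c_2}{\alpha N}$ (needed so that $\xi$ is the relevant root) is what pins down the admissible window, and one checks that $\tilde\sigma^2=\frac{2(\alpha c_2N-c_1)}{N^2}$ lies inside it, i.e. $\tilde\sigma$ is a legitimate endpoint. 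The numerator $\sqrt{c_2^2-2uc_1}-c_2$ is negative for $u>0$, so $\xi(u)-N<0$, consistent with $\xi<N$.

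For monotonicity I would differentiate $g(u):=\frac{\sqrt{c_2^2-2uc_1}-c_2}{u}$ with respect to $u$. Writing $h(u)=\sqrt{c_2^2-2uc_1}$ so $h'(u)=-c_1/h(u)$, one gets
\begin{equation}
g'(u)=\frac{u\,h'(u)-\bigl(h(u)-c_2\bigr)}{u^2}
=\frac{-\,u c_1/h(u)-h(u)+c_2}{u^2}
=\frac{c_2 h(u)-h(u)^2-u c_1}{u^2 h(u)} .
\end{equation}
Since $h(u)^2=c_2^2-2uc_1$, the numerator equals $c_2 h(u)-c_2^2+2uc_1-uc_1=c_2 h(u)-c_2^2+uc_1 = c_2\bigl(h(u)-c_2\bigr)+uc_1$. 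Because $h(u)\le c_2$ we have $h(u)-c_2=\frac{h(u)^2-c_2^2}{h(u)+c_2}=\frac{-2uc_1}{h(u)+c_2}$, so the numerator is $uc_1\bigl(1-\frac{2c_2}{h(u)+c_2}\bigr)=uc_1\,\frac{h(u)-c_2}{h(u)+c_2}<0$. Hence $g'(u)<0$ on the whole admissible range, so $\xi=\frac{1}{\alpha}g(\sigma^2)+N$ is strictly decreasing in $\sigma^2$, hence strictly decreasing in $\sigma>0$, which is the first claim.

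For the endpoint limits I would handle $\sigma\to0^+$ by a first-order expansion: $\sqrt{c_2^2-2uc_1}=c_2\sqrt{1-2uc_1/c_2^2}=c_2-\frac{uc_1}{c_2}+O(u^2)$, so $g(u)\to -c_1/c_2$ and $\xi\to N-\frac{c_1}{\alpha c_2}=N\bigl(1-\frac{c_1}{\alpha c_2 N}\bigr)$, matching \eqref{limZeroSig}. For $\sigma\to\tilde\sigma^-$ I would substitute $u=\tilde\sigma^2=\frac{2(\alpha c_2N-c_1)}{\alpha^2 N^2}$ — note I must keep the $\alpha$'s since $\tilde\sigma$ in the statement is $\frac{\sqrt{2(\alpha c_2N-c_1)}}{N}$, which already contains $\alpha$ — evaluate $h(\tilde\sigma^2)$, simplify, and sort out the two cases according to whether the resulting expression is negative (forced up to $0$ because $\xi\ge0$ by Lemma \ref{invariance}-type positivity, or rather because the limit superior statement only gives a nontrivial bound when $\xi>0$) or positive, the threshold being $R_0=2$; the case $R_0>2$ should collapse algebraically to $N\frac{R_0-2}{R_0-1}$ after clearing $\alpha$'s. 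The main obstacle I anticipate is purely bookkeeping: carrying the factors of $\alpha$ correctly through the $\tilde\sigma\to$ substitution and recognizing when $R_0^s$ versus $R_0$ appears, since $R_0^s=R_0-\frac{\alpha^2\sigma^2N^2}{2c_1}$ and at $\sigma=\tilde\sigma$ one has $\frac{\alpha^2\tilde\sigma^2 N^2}{2c_1}=\frac{\alpha c_2 N-c_1}{c_1}=R_0-1$, so $R_0^s|_{\tilde\sigma}=1$; disentangling which quantity governs which case boundary is where care is needed, but it is mechanical once the substitution is set up cleanly.
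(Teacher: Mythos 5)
Your argument is essentially correct, and there is nothing in the paper to compare it against: the appendix proves the lemma and the theorems but gives no proof of this corollary, so your proof stands on its own. The substitution $u=\sigma^2$, the rewriting $\xi=\frac{\sqrt{c_2^2-2uc_1}-c_2}{\alpha u}+N$, the derivative computation whose numerator reduces to $uc_1\frac{h(u)-c_2}{h(u)+c_2}<0$, and the expansion giving $\lim_{\sigma\to0^+}\xi=N-\frac{c_1}{\alpha c_2}$ are all correct, as is the check that $c_2^2-2uc_1\geq 0$ on the whole window because $4c_1(\alpha c_2N-c_1)\leq \alpha^2c_2^2N^2$ is just $(\alpha c_2N-2c_1)^2\geq 0$.

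Two points should be cleaned up. First, the $\alpha$-bookkeeping you flag is not mere bookkeeping: the endpoint you actually substitute, $u^*=\frac{2(\alpha c_2N-c_1)}{\alpha^2N^2}$ (equivalently the value of $\sigma^2$ at which $R_0^s=1$), is the one that yields the stated limits, since then $c_2^2-2u^*c_1=\bigl(\frac{\alpha c_2N-2c_1}{\alpha N}\bigr)^2$ and the limit evaluates to $0$ for $1\leq R_0\leq 2$ and to $N\frac{R_0-2}{R_0-1}$ for $R_0>2$; the literal $\tilde{\sigma}=\frac{\sqrt{2(\alpha c_2N-c_1)}}{N}$ written in the corollary (no $\alpha$ in the denominator) does not reproduce these limits unless $\alpha=1$, so you should state explicitly that you take $\tilde{\sigma}=\frac{\sqrt{2(\alpha c_2N-c_1)}}{\alpha N}$, i.e.\ the threshold of $R_0^s>1$, rather than quoting the statement's formula and then squaring to a different expression. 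Second, the hedge about the $1\leq R_0\leq 2$ case being ``forced up to $0$'' by positivity is unnecessary and slightly misleading: substituting $u^*$ gives $h(u^*)=\frac{|\alpha c_2N-2c_1|}{\alpha N}$, and when $R_0\leq 2$ the absolute value flips sign so that $\xi(u^*)=-N+N=0$ exactly; only continuity of the closed-form expression \eqref{xi} is needed. Likewise, the extra condition $\sigma^2<\frac{c_2}{\alpha N}$ is neither assumed nor needed (indeed it fails on part of the window when $R_0>2$); that $\xi$ is the unique root in $(0,N)$ follows already from $f(0)=(R_0^s-1)c_1>0$ and $f(N)=-c_1<0$. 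With the endpoint algebra written out, which is a few mechanical lines, the proof is complete.
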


Some comments are in order.
Theorem~\ref{flux1} implies that the vehicle concentration fluctuates with a superior bound above $\xi$ while its inferior bound is below $\xi$.
Based on the above discussions, $R_0^s>1$ largely corresponds to the congestion state.
Let us first consider the scenario with vanishing stochastic noise $\sigma \to 0^+$.
At this limit, Eq.~\eqref{xi} gives rise to Eq.~\eqref{limZeroSig}, namely, $\xi=\frac{N(c_1+c_2)-N_\mathrm{max}c_1}{c_2}$.
In particular, it attains $\xi=0$ at $N=N_c$ (where $R_0^s=1$) and increases linearly with $N$ for $N>N_c$ until $\xi=N_\mathrm{max}$ at $N=N_\mathrm{max}$.
This agrees with the intuition that the system falls back to the flow-concentration diagram demonstrated in Fig.~\ref{fig01_fundamental_det} under the above circumstances.

Now, for $0<\sigma^2 < \tilde{\sigma}^2$ when stochastic noise starts to play a role, corollary~\ref{lemTen} dictates that $\xi$ mainly sits beyond the above line segment.
However, it is readily verified that one still has $\xi=0$ at $N=N_c$ and $\xi=N_\mathrm{max}$ at $N=N_\mathrm{max}$.
Therefore, it seems inviting to speculate that the flow capacity, by and large, forms a convex shape when compared to the picture of a deterministic dynamic system.

Moreover, intuitively, owing to the stochastic uncertainty, the data are expected to spread about the deterministic model.
A quantitative estimation of the flow variance will be given below by corollary~\ref{lemTen} and theorem~\ref{meanandvariance}.
Further insights will be elaborated by applying the above theorem in conjunction with those in the following subsection.
Before closing this subsection, we emphasize the interpretation that the congestion-flow state is not constituted by steady-state solutions.
This is because, for such a system, it evolves $n_1(t)$ perpetually as it traverses the specific value $\xi$ infinitely, often with probability one.
From an observational perspective, the vehicle's concentration will constantly traverse a given flow state denoted by $n_1=\xi$.
To appropriately describe such a traffic state, one employs the notion of statistical distribution, as discussed below.

\subsection{Stationary distribution in the congestion state}\label{sec3.4}

Let $\mathbb{P}_{0}(t)$ be the probability measured induced by $n_1(t)$ with initial value $n_1(0)$, more specifically 

\begin{equation}
\mathbb{P}_0(t)(A)=\mathbb{P}(n_1(t)\in A|n_1(0)),
\end{equation}
for all $A\in \mathcal{F}$, where $\mathcal{F}$ is the $\sigma$-field of Borel~\cite{khasminskii} on $\Omega=(0,N)$. 
If there is a probability measure $\pi(\cdot)$ on the measurable space $((0, N), \mathcal{F})$  such that $\mathbb{P}_0(t)$ converges weakly to $\pi$ as $t\to\infty$, then $\pi$ will be referred to, below in theorems~\ref{stationary} and~\ref{meanandvariance}, as a stationary distribution or invariant measure of the dynamical system. 

In practice, assigning probabilities to all subsets of the sample space $\Omega$ is often not feasible.
This is why one resorts to dealing with a specific class of sets, namely, the Borel $\ sigma$ field.
In this subsection, we discuss the feasibility of the stationary distribution in the congestion traffic state in our model, elaborated by the following theorems.

\begin{Thm}\label{stationary}
  If $R_0^s>1$, the dynamic system governed by Eq.~\eqref{SDE} has a unique stationary distribution.
\end{Thm}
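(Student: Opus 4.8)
The plan is to establish positive recurrence of the diffusion~\eqref{SDE} on the open interval $D=(0,N)$ and then read off the unique stationary law, following the strategy of~\cite{khasminskii,Gray}. By Lemma~\ref{invariance} the solution stays in $D$ for all times almost surely, so it may be treated as a regular one-dimensional diffusion on $D$ with generator
\begin{equation}
Lf(x)=a(x)f'(x)+\tfrac12 b(x)^2 f''(x),\qquad a(x)=x\bigl(-c_1+\alpha c_2(N-x)\bigr),\quad b(x)=\alpha\sigma\,x(N-x),
\end{equation}
where $\alpha=1/(N_\mathrm{max}-N)$. Since $b(x)^2>0$ throughout $D$ and $a,b$ are smooth, the diffusion is uniformly elliptic on every compact subset of $D$; this supplies the local nondegeneracy (hence irreducibility and the strong Feller property) required by Khasminskii's criterion, and reduces the proof to producing a nonnegative $V\in C^2(D)$ and a compact $K\subset D$ with $LV\le-1$ on $D\setminus K$, after which Dynkin's formula gives a finite mean hitting time of $K$ from any $x\in D\setminus K$ and the existence of an invariant probability $\pi$ with $\mathbb{P}_0(t)$ converging weakly to $\pi$.

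First I would take $V(x)=x^{-\theta}+(N-x)^{-\theta}$ with $\theta>0$ to be fixed, and analyse $LV$ near the two endpoints. Near $x=N$ the drift satisfies $a(x)\to-Nc_1<0$ while $b(x)^2=O((N-x)^2)$, so the leading term of $L[(N-x)^{-\theta}]$ is $\theta a(x)(N-x)^{-\theta-1}\sim-\theta Nc_1\,(N-x)^{-\theta-1}\to-\infty$; thus $LV\to-\infty$ at $N$ for every $\theta>0$, the strong inward drift produced by the factor $1/(N_\mathrm{max}-N)$ in~\eqref{SDE} rendering the upper boundary non-attracting. Near $x=0$ one has $a(x)\sim(\alpha c_2N-c_1)x$ and $b(x)^2\sim\alpha^2\sigma^2N^2x^2$, whence
\begin{equation}
L\bigl[x^{-\theta}\bigr]\sim\theta\Bigl[\tfrac12(\theta+1)\alpha^2\sigma^2N^2-(\alpha c_2N-c_1)\Bigr]x^{-\theta},
\end{equation}
which tends to $-\infty$ provided $\theta+1<2(\alpha c_2N-c_1)/(\alpha^2\sigma^2N^2)$. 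The hypothesis $R_0^s>1$ is precisely the inequality $\alpha c_2N-c_1>\tfrac12\alpha^2\sigma^2N^2$, i.e.\ that the ratio on the right is strictly larger than $1$, so one may fix $\theta>0$ small enough that the bound holds. For this $\theta$ one then has $LV(x)\to-\infty$ as $x\to0^+$ and as $x\to N^-$, hence $LV\le-1$ off a compact $K\subset(0,N)$; Khasminskii's theorem then yields a stationary distribution, and uniqueness follows from the irreducibility of the nondegenerate diffusion.

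I expect the boundary analysis at $x=0$ to be the crux: it is there that the threshold $R_0^s=1$ genuinely originates, out of the competition between the linear drift $(\alpha c_2N-c_1)x$ and the quadratic diffusion coefficient $\tfrac12\alpha^2\sigma^2N^2x^2$, and the borderline regime $R_0^s\le1$ must be checked to admit no workable $\theta$, consistently with Theorem~\ref{2a3}. As an equivalent and more explicit route I would also carry out the classical scale/speed computation for this one-dimensional diffusion: with scale density $s'(x)=\exp\!\bigl(-\int^x 2a/b^2\bigr)$ and speed density $m'(x)=1/(b(x)^2 s'(x))$, a partial-fraction evaluation gives $s'(x)\asymp x^{-p}$ near $0$ with $p=2(\alpha c_2N-c_1)/(\alpha^2\sigma^2N^2)$ and $s'(x)$ growing like $\exp(\kappa/(N-x))$ near $N$ for some $\kappa>0$, so that $R_0^s>1\iff p>1$ makes both endpoints non-attracting ($s(0^+)=-\infty$, $s(N^-)=+\infty$) and renders the speed measure $m(D)=\int_0^N m'(x)\,dx$ finite; the normalized speed measure $\pi(dx)=m'(x)\,dx/m(D)$ is then the unique stationary distribution, and this explicit density also underlies theorem~\ref{meanandvariance}.
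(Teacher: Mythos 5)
Your proposal is correct, but it proves the theorem by a genuinely different route than the paper. The paper verifies Khasminskii's conditions (B.1)--(B.2) directly: (B.1) is immediate since the diffusion coefficient $\sigma\alpha x(N-x)$ is bounded away from zero on any $(a,b)\subset\subset(0,N)$, and (B.2) is obtained by applying It\^o's formula to $\log n_1(t)$ and exploiting the sign structure of the drift $f(x)$ of the logarithm around its root $\xi$ (the same function and root used for Theorem~\ref{flux1}): on $(0,a)$ one has $f\geq f(0)\wedge f(a)>0$ and on $[b,N)$ one has $f\leq f(b)<0$, which via optional stopping gives the explicit bounds $\mathbb{E}_{n_1(0)}\tau\leq \log(a/n_1(0))/(f(0)\wedge f(a))$ and $\mathbb{E}_{n_1(0)}\tau\leq\log(N/b)/|f(b)|$; here $R_0^s>1$ enters exactly as $f(0)>0$. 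You instead (i) build the power-type Lyapunov function $V(x)=x^{-\theta}+(N-x)^{-\theta}$ and check $LV\to-\infty$ at both endpoints, with $R_0^s>1$ appearing, correctly and in essentially the same place, as the room needed to choose $\theta+1<2(\alpha c_2N-c_1)/(\alpha^2\sigma^2N^2)$ at the boundary $x=0$, the boundary $x=N$ being harmless because the drift there is strictly inward; and (ii) offer the scale/speed-density computation, under which $R_0^s>1\iff p>1$ makes $x=0$ non-attracting and the speed measure finite. Both of your routes are sound and sit inside the same Khasminskii framework the paper invokes (your Lyapunov bound plus Dynkin's formula implies exactly the hitting-time condition (B.2)), so the difference is one of technique rather than substance; what your second route buys, and the paper does not provide, is an explicit invariant density $\pi(dx)\propto m'(x)\,dx$, from which the moment identities of Theorem~\ref{meanandvariance} could be derived by direct integration rather than through the ergodic-average argument the paper uses. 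The only point to tighten is bookkeeping: state explicitly that Lemma~\ref{invariance} gives regularity (no exit from $(0,N)$), and that uniqueness follows from nondegeneracy of the one-dimensional diffusion on compacts of $(0,N)$, which is the role played by condition (B.1) in the paper.
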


\begin{Thm}\label{meanandvariance}
Assume that $R_0^s>1$ and let $\mu$ and $\gamma$ denote the mean and variance of the stationary distribution of the system Eq.~\eqref{SDE}, we have
\begin{equation}\label{mean}
\mu=\frac{2 c_2(R_0^s-1)c_1}{2 c_2(\alpha c_2-\alpha^2\sigma^2N)+\alpha\sigma^2(\alpha c_2N-c_1)}
\end{equation}
and
\begin{equation}\label{variance}
\gamma=\frac{\mu(\alpha c_2N-c_1)}{\alpha c_2}-\mu^2.
\end{equation}
\end{Thm}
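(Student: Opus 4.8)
The plan is to pin down the first two moments of the (unique, by Theorem~\ref{stationary}) stationary law $\pi$ by applying the equilibrium identity $\mathbb{E}_\pi[\mathcal{L}f]=0$, where $\mathcal{L}$ is the generator of the diffusion~\eqref{SDE},
\begin{equation*}
\mathcal{L}f(x)=x\bigl(-c_1+c_2\alpha(N-x)\bigr)f'(x)+\tfrac12\sigma^2\alpha^2x^2(N-x)^2f''(x),
\end{equation*}
to two well-chosen test functions. Because the process stays in the bounded interval $(0,N)$ by Lemma~\ref{invariance}, $\pi$ has moments of all orders, so $\mu$ and $\gamma$ are automatically finite. Step one is to take $f(x)=x$: here $\mathcal{L}f(x)=-c_1x+c_2\alpha Nx-c_2\alpha x^2$, and the equilibrium identity reads $(\alpha c_2N-c_1)\mu=\alpha c_2\,\mathbb{E}_\pi[n_1^2]$, i.e.
\begin{equation*}
\mathbb{E}_\pi[n_1^2]=\frac{(\alpha c_2N-c_1)\mu}{\alpha c_2}.
\end{equation*}
The variance formula~\eqref{variance} then follows immediately from $\gamma=\mathbb{E}_\pi[n_1^2]-\mu^2$.

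Step two is to take $f(x)=\log x$, for which $\mathcal{L}f(x)=-c_1+c_2\alpha(N-x)-\tfrac12\sigma^2\alpha^2(N-x)^2$. The equilibrium identity now gives $-c_1+c_2\alpha(N-\mu)-\tfrac12\sigma^2\alpha^2\,\mathbb{E}_\pi[(N-n_1)^2]=0$; expanding $\mathbb{E}_\pi[(N-n_1)^2]=N^2-2N\mu+\mathbb{E}_\pi[n_1^2]$ and substituting the second-moment relation from step one turns this into a linear equation for $\mu$. Collecting the $\mu$-independent terms through the identity $\alpha c_2N-\tfrac12\alpha^2\sigma^2N^2=c_1R_0^s$ leaves $c_1(R_0^s-1)=\mu\bigl(\alpha c_2-\tfrac12\alpha^2\sigma^2N-\tfrac{\alpha\sigma^2c_1}{2c_2}\bigr)$, and clearing the factor $2c_2$ produces exactly~\eqref{mean}. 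One checks that the denominator is positive whenever $R_0^s>1$ — it follows from the elementary bound $(\alpha c_2N-c_1)(\alpha c_2N+c_1)<\alpha^2c_2^2N^2$ — which is consistent with $\mu>0$.

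The main obstacle is to make the two identities $\mathbb{E}_\pi[\mathcal{L}f]=0$ rigorous, and in particular to admit the unbounded test function $f=\log x$, whose singularity at $x=0$ must be controlled. I would do this with an ergodic time-average rather than a direct expectation. Applying Itô's formula to $n_1(t)$ and to $\log n_1(t)$ gives, in each case, a bounded drift plus a stochastic integral whose quadratic variation grows at most linearly in $t$ (the integrands $\sigma\alpha n_1(N-n_1)$ and $\sigma\alpha(N-n_1)$ are bounded on $(0,N)$); dividing by $t$ and letting $t\to\infty$, the martingale parts vanish almost surely, while the ergodic theorem identifies the time-averaged drifts with $\mathbb{E}_\pi\bigl[n_1(-c_1+c_2\alpha(N-n_1))\bigr]$ and $\mathbb{E}_\pi\bigl[-c_1+c_2\alpha(N-n_1)-\tfrac12\sigma^2\alpha^2(N-n_1)^2\bigr]$, respectively. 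For $f=x$ the left-hand side $(n_1(t)-n_1(0))/t\to 0$ trivially by boundedness, giving the first identity. For $f=\log x$ the bound $\log n_1(t)\le\log N$ forces the common limit to be $\le 0$, and it cannot be strictly negative, since a negative limit would mean $n_1(t)\to 0$ exponentially, contradicting the recurrence of the diffusion (Theorem~\ref{stationary}; equivalently $\limsup_{t\to\infty}n_1(t)\ge\xi$ from Theorem~\ref{flux1}, which holds precisely because $R_0^s>1$); hence the second identity holds too. An alternative, more computational route is to write the stationary density explicitly from the Kolmogorov forward equation, $\pi(x)\propto x^{2(\alpha c_2N-c_1)/(\sigma^2\alpha^2N^2)-2}(N-x)^{-2(\alpha c_2N-c_1)/(\sigma^2\alpha^2N^2)-2}\exp\!\bigl(-\tfrac{2c_1}{\sigma^2\alpha^2N(N-x)}\bigr)$, whose exponent of $x$ exceeds $-1$ precisely when $R_0^s>1$ (which also certifies $\mathbb{E}_\pi[|\log n_1|]<\infty$), and then derive both identities by integration by parts against it. In the spirit of~\cite{Gray}, once either route is in place the theorem is a short computation.
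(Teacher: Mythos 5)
Your proposal is correct and follows essentially the same route as the paper's own proof: the paper likewise applies It\^o's formula to $n_1(t)$ and $\log n_1(t)$, divides by $t$, invokes Khasminskii's ergodic theorem (Theorem~\ref{ergodic}) together with the martingale strong law to turn the time-averaged drifts into the two moment identities, and uses Theorem~\ref{flux1} to force $\lim_{t\to\infty}\log n_1(t)/t=0$ before solving the resulting linear system for $\mu$ and $\gamma$. Your added check that the denominator in Eq.~\eqref{mean} is positive when $R_0^s>1$, and the explicit stationary-density alternative, are correct extras not present in the paper but do not change the argument.
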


As the system evolves, it does not asymptotically approach any given state.
Nonetheless, its probability distribution asymptotically approaches a well-defined distribution that does not depend on time.
Intuitively, randomness is expected from a stochastic dynamic system.
The above theorems further show that such randomness, by and large, can be quantitatively captured.

Again, to gain some physical insights, let us explore the limits of the above results.
If we assume the limit $\sigma^2 \to 0$, 
\begin{equation}\label{limmuN}
\lim\limits_{\sigma^2\to 0}\mu = N-\frac{c_1}{c_2}(N_\mathrm{max}-N) = n_1^{g} ,
\end{equation}
where the r.h.s. is compared against Eq.~\eqref{n1AssCon}, and
\begin{equation}\label{limgammaN}
\lim\limits_{\sigma^2\to 0}\gamma = 0 .
\end{equation}

In particular, at the peak of the deterministic flow-concentration curve, one finds
\begin{equation}
\lim\limits_{N\to N_c}\left(\lim\limits_{\sigma^2\to 0}\mu\right) = 0 ,\label{muLim1}
\end{equation}
and
\begin{equation}
\lim\limits_{N\to N_c}\left(\lim\limits_{\sigma^2\to 0}\gamma\right) = 0 .\label{gammaLim1}
\end{equation}
While at the maximal concentration $N\to N_\mathrm{max}$, we have 
\begin{equation}
\lim\limits_{N\to N_\mathrm{max}}\left(\lim\limits_{\sigma^2\to 0}\mu\right) = N_\mathrm{max} ,\label{limmu0}
\end{equation}
and
\begin{equation}
\lim\limits_{N\to N_\mathrm{max}}\gamma = 0 ,\label{limgamma0}
\end{equation}
where one notes that the last expression does not demand the limit $\sigma^2\to 0$.
The above results by Eqs.~\eqref{limmuN}-\eqref{limgamma0} are somewhat expected as the limits fall back to their deterministic counterparts.

Of course, we are more interested in scenarios with nonvanishing stochastic noise.
If we do not take the limit $\sigma^2\to 0$ but still assume $\sigma^2 \ll 1$, it is not difficult to notice that the condition $R_0^s>1$ gives a bound slightly above the location of the deterministic peak, namely, $N\gtrsim N_c$.
In this case, in addition to Eq.~\eqref{limgamma0}, we have
\begin{equation}
\lim\limits_{N\to N_c}\mu = -\frac{c_1^2 N_\mathrm{max}\sigma^2}{2(c_1+c_2)(c_2-c_1\sigma^2)} ,\label{muLim2}
\end{equation}
\begin{equation}
\lim\limits_{N\to N_\mathrm{max}}\mu = N_\mathrm{max} ,\label{limmu}
\end{equation}
and
\begin{equation}
\lim\limits_{N\to N_c}\gamma = -\frac{c_1^4 N_\mathrm{max}^2\sigma^4}{4(c_1+c_2)^2(c_2-c_1\sigma^2)^2} .\label{limgamma2}
\end{equation}
Although tedious, the derivation of the above results is relatively straightforward.
We also note that the condition $R_0^s >1$ is marginally violated at $N=N_c$, which is primarily responsible for the {\it incorrect} signs found on the r.h.s. of Eqs.~\eqref{muLim2} and~\eqref{limgamma2}.
Nonetheless, as an educated estimation, we conclude that the values for $\mu$ and $\gamma$ are significantly suppressed at the emergence of the congestion state, which will be reinforced by corollary~\ref{lemMu} below.

The physical interpretations implied by Eqs.~\eqref{limgamma0}-\eqref{limgamma2} are desirable.
They indicate that the stochastic dynamic system naturally satisfies the relevant physical constraints.
On the one hand, at the emergence of congestion flow, both the flow and its variance asymptotically fall back to those of the deterministic model.
On the other hand, at total congestion, the system again approaches its deterministic counterpart as it falls to the {\it ground state}, where all vehicles move at the lowest speed possible.
Subsequently, the stochastic system asymptotically approaches a deterministic state where the statistical fluctuations vanish.
In other words, Eqs.~\eqref{limgamma0}-\eqref{limgamma2} indicate that the flow approaches its deterministic counterpart at both ends of the congestion state, and meanwhile, its variance forms an almond shape that vanishes at both ends of the congestion state.

Last, one can extract further information on the constraint of the mean and variance by the following corollary.
\begin{Cor}\label{lemMu}
Consider $\mu$ defined by Eq.~\eqref{meanandvariance} as a function of $\sigma$,
$\mu$ is strictly decreasing with increasing $\sigma$, and moreover
\bqn
0\le \mu \le n_1^g ,\label{muBound}
\eqn
and
\bqn
0\le \gamma \le \frac14 \left(n_1^g\right)^2 .\label{gammaBound}
\eqn
The maximum of $\gamma$ is attained at $\mu=\frac12 n_1^g$.
\end{Cor}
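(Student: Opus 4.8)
The plan is to collapse the two formulae of Theorem~\ref{meanandvariance} into a transparent shape and then read off every assertion. Set $s=\sigma^2$ and $A=\alpha c_2 N-c_1$; since $\frac{\alpha c_2 N}{c_1}=R_0^s+\frac{\alpha^2\sigma^2N^2}{2c_1}$ and $R_0^s>1$, one has $A>0$. Writing $R_0^s-1=\frac{1}{c_1}\bigl(A-\tfrac12\alpha^2N^2 s\bigr)$ in the numerator of \eqref{mean} and using $2c_2(\alpha c_2-\alpha^2\sigma^2 N)+\alpha\sigma^2(\alpha c_2N-c_1)=\alpha\bigl(2c_2^2-(A+2c_1)s\bigr)$ in the denominator, one gets
\begin{equation*}
\mu=\frac{c_2\bigl(2A-\alpha^2N^2 s\bigr)}{\alpha\bigl(2c_2^2-(A+2c_1)s\bigr)} .
\end{equation*}
Two identities then do all the work: $A+c_1=\alpha c_2N$, so $(A+c_1)^2=\alpha^2c_2^2N^2$; and $\alpha(N_\mathrm{max}-N)=1$, so $\frac{A}{\alpha c_2}=N-\frac{c_1}{c_2}(N_\mathrm{max}-N)=n_1^g$.

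Next I would verify the denominator is positive on the admissible range. The inequality $R_0^s>1$ is exactly $\alpha^2N^2 s<2A$, whence $(A+2c_1)s<\frac{2A(A+2c_1)}{\alpha^2N^2}=\frac{2\bigl((A+c_1)^2-c_1^2\bigr)}{\alpha^2N^2}=2c_2^2-\frac{2c_1^2}{\alpha^2N^2}<2c_2^2$; since the numerator equals $2c_1c_2(R_0^s-1)>0$, this gives $\mu>0$, the left half of \eqref{muBound}. For monotonicity I would differentiate in $s$: the quotient rule gives $\operatorname{sgn}(\partial_s\mu)=\operatorname{sgn}\bigl(A(A+2c_1)-\alpha^2N^2c_2^2\bigr)$ after the $s$-linear terms cancel, and $A(A+2c_1)=(A+c_1)^2-c_1^2=\alpha^2c_2^2N^2-c_1^2$, so the bracket is $-c_1^2<0$. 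Hence $\mu$ is strictly decreasing in $s$, hence in $\sigma$; its largest value, attained at $s=0$, is $\mu(0)=\frac{A}{\alpha c_2}=n_1^g$, which gives $\mu\le n_1^g$ and completes \eqref{muBound}.

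For the variance I would use the second identity to rewrite \eqref{variance} as $\gamma=\mu\bigl(\tfrac{A}{\alpha c_2}-\mu\bigr)=\mu\,(n_1^g-\mu)$. Then $0\le\mu\le n_1^g$ yields $\gamma\ge0$, and completing the square, $\gamma=\tfrac14(n_1^g)^2-\bigl(\mu-\tfrac12 n_1^g\bigr)^2\le\tfrac14(n_1^g)^2$, with equality precisely when $\mu=\tfrac12 n_1^g$; this is \eqref{gammaBound} together with the final claim. To confirm that this extremal value is genuinely attained, I would note that $\mu$ is continuous in $\sigma$, equals $n_1^g$ at $\sigma=0$, and tends to $0^+$ as $R_0^s\to1^+$, since there the numerator vanishes while the denominator tends to the positive limit $\tfrac{2c_1^2}{\alpha N^2}$; the intermediate value theorem then shows $\mu$ sweeps through $\tfrac12 n_1^g$.

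I do not anticipate a genuine obstacle. The one non-routine step is recognizing the identities $(A+c_1)^2=\alpha^2c_2^2N^2$ and $\frac{A}{\alpha c_2}=n_1^g$, which make $\operatorname{sgn}(\partial_s\mu)$ collapse to $-c_1^2$ and turn $\gamma$ into the product $\mu\,(n_1^g-\mu)$; without them the raw algebra of \eqref{mean}--\eqref{variance} looks forbidding. The only care needed elsewhere is checking the positivity of the denominator of $\mu$, so that no inequality or derivative sign gets inadvertently reversed.
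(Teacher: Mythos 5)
Your proposal is correct, and every step checks out: the rewriting $\mu=\frac{c_2(2A-\alpha^2N^2\sigma^2)}{\alpha\bigl(2c_2^2-(A+2c_1)\sigma^2\bigr)}$ with $A=\alpha c_2N-c_1$, the positivity of the denominator under $R_0^s>1$, the sign computation $\mathrm{sgn}(\partial_{\sigma^2}\mu)=\mathrm{sgn}\bigl(A(A+2c_1)-\alpha^2c_2^2N^2\bigr)=\mathrm{sgn}(-c_1^2)$, and the identity $A/(\alpha c_2)=n_1^g$ that turns Eq.~\eqref{variance} into $\gamma=\mu\,(n_1^g-\mu)$, from which \eqref{gammaBound} and the location of the maximum follow by completing the square. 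Note that the paper itself does not supply a proof of Corollary~\ref{lemMu}: the appendix only proves the lemma and the theorems, and the corollary is left as an algebraic consequence of Theorem~\ref{meanandvariance}. Your argument therefore fills in exactly the omitted computation, and it is consistent with the paper's surrounding statements (e.g.\ the limit \eqref{limmuN}, which is your value $\mu(0)=n_1^g$, and the boundary behaviour $\mu\to 0$ as $R_0^s\to 1^+$ used in the discussion of the almond-shaped variance). The only point worth flagging is cosmetic: the monotonicity claim concerns $\sigma$ on the range where $R_0^s>1$ holds, so one should say explicitly, as you implicitly do, that $\sigma\mapsto\sigma^2$ is increasing and that the admissible interval is $0<\sigma^2<2A/(\alpha^2N^2)$, on which your denominator estimate applies.
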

Since $n_1^g=0$ at $N=N_c$, apart from its finite variance, the average flow coincides with the deterministic model at $N=N_c$.
Since flow increases with decreasing $n_1$, as $N$ increases, the average flow decreases but always stays above its deterministic counterpart owing to Eq.~\eqref{muBound}.
However, due to Eq.~\eqref{limmu}, the flow again coincides with the deterministic model at the maximal concentration $N=N_\mathrm{max}$.
This reiterates the results drawn from theorem~\ref{meanandvariance} and is consistent with the convex shape of the fundamental diagram discussed in the previous subsection.

\section{Numerical simulations and further discussions}\label{sec4}

This section explores the mathematical theorems elaborated in the previous section from a numerical perspective.
Through simulations using the \textit{Python} package \textit{sdeint} to numerically solve the stochastic differential equation~(\ref{SDE}), we assess their feasibility and observable implications to the stochastic two-speed-state traffic model.

First, we numerically verify the lemmas, theorems, and corollaries given in the main text.
For conciseness, we relegate these analyses to Appx.~\ref{app3}.
We then explore the parameter space of the proposed model while reflecting on the relevant features of the observed traffic flow elaborated in the previous section.
Specifically, even as a minimized approach, we show that it is feasible to reproduce the main features observed in the traffic data by appropriately choosing the model parameters.

Before presenting the numerical results, we elaborate on the relevant ranges of the model parameters that will be explored for the numerical simulations.
For the free-flow state elaborated at low vehicle concentration elaborated by theorem~\ref{2a3} to be governed by the deterministic bound $N_c$, one requires
\bqn
\sigma^2 \le \frac{c_2^2}{c_1} ,\label{SigCond1}
\eqn
namely, $N_c<N_s$, according to Eq.~\eqref{bdFreeFlow}.
Moreover, for the estimation Eq.~\eqref{DeltaNcEst} to be valid, one needs
\bqn
\sigma^2 \ll 1.
\eqn
Since $N_\mathrm{max} \gg 1$ and $\frac{c_2}{c_1}\sim O(1)$, we have roughly
\bqn
\sigma^2 \ll \frac{c_2^2}{c_1}N_\mathrm{max} \sim c_2 N_\mathrm{max} .
\eqn

At large vehicle concentration, it turns out that the model might be plagued by nonphysical solutions owing to the following theorem:
\begin{Thm}\label{2a4}
  If $\sigma^2>\frac{ c_2}{\alpha N}\vee \frac{c_2^2}{2c_1}$, the solution of (\ref{SDE}) satisfies
  \begin{equation}
      \limsup_{t\to\infty}\frac{1}{t}\log(n_1(t))\leq -c_1+\frac{c_2^2}{2\sigma^2}<0\hspace{.2cm}\text{a.s.}
  \end{equation}
\end{Thm}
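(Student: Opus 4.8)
The plan is to adapt the argument that underlies Theorem~\ref{2a3}, but to optimize over the quadratic that appears in the drift of $\log n_1(t)$ rather than merely evaluating it at $n_1=0$. First I would apply It\^o's formula to $\log n_1(t)$ using the SDE~\eqref{SDE}. Writing $\alpha=\frac{1}{N_\mathrm{max}-N}$ and $g(n_1)=\alpha\sigma(N-n_1)$ for the coefficient of $dB(t)$ in $dn_1/n_1$, one gets
\begin{equation}
\log n_1(t)=\log n_1(0)+\int_0^t\Big[-c_1+\alpha c_2\big(N-n_1(s)\big)-\tfrac12\alpha^2\sigma^2\big(N-n_1(s)\big)^2\Big]\,ds+\int_0^t\alpha\sigma\big(N-n_1(s)\big)\,dB(s).
\end{equation}
By Lemma~\ref{invariance} the process stays in $(0,N)$ a.s., so $N-n_1(s)\in(0,N)$. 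The next step is to bound the integrand: let $h(x)=-c_1+\alpha c_2 x-\tfrac12\alpha^2\sigma^2 x^2$ for $x\in(0,N)$. This is a downward parabola in $x$ with unconstrained maximum at $x^\star=\frac{c_2}{\alpha\sigma^2}$, where $h(x^\star)=-c_1+\frac{c_2^2}{2\sigma^2}$. The hypothesis $\sigma^2>\frac{c_2}{\alpha N}$ is exactly the statement $x^\star<N$, so the maximum of $h$ over the admissible interval $(0,N)$ is attained in the interior and equals $-c_1+\frac{c_2^2}{2\sigma^2}$; hence the drift integrand is bounded above by this constant for all $s$. The second hypothesis $\sigma^2>\frac{c_2^2}{2c_1}$ guarantees $-c_1+\frac{c_2^2}{2\sigma^2}<0$, which is what makes the bound useful.

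It then remains to control the martingale term $M(t)=\int_0^t\alpha\sigma(N-n_1(s))\,dB(s)$. Its quadratic variation satisfies $\langle M\rangle_t\le \alpha^2\sigma^2N^2 t$, so by the strong law of large numbers for continuous local martingales (i.e. $\lim_{t\to\infty}M(t)/t=0$ a.s. whenever $\limsup \langle M\rangle_t/t<\infty$; see~\cite{khasminskii}) one has $M(t)/t\to0$ a.s. Dividing the displayed identity by $t$, sending $t\to\infty$, and combining the drift bound with the martingale estimate yields
\begin{equation}
\limsup_{t\to\infty}\frac1t\log n_1(t)\le -c_1+\frac{c_2^2}{2\sigma^2}<0\quad\text{a.s.,}
\end{equation}
which is the assertion.

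I expect the only delicate point to be the justification of $M(t)/t\to0$ a.s. rather than just $\liminf$ or convergence in probability; this is standard but relies on an exponential martingale inequality plus a Borel--Cantelli argument along the integers (the same device already used, presumably in Appx.~\ref{app2}, to prove Theorem~\ref{2a3}), so it can simply be cited. Everything else — the It\^o expansion, the elementary maximization of the parabola $h$, and the translation of the two hypotheses into "interior maximum" and "negative value at that maximum" — is routine. One should also remark that the conclusion of Theorem~\ref{2a4} is a.s. extinction of $n_1$ at an exponential rate, i.e. a (spurious, unphysically fast) free-flow attractor, which is precisely why the parameter regime $\sigma^2>\frac{c_2}{\alpha N}\vee\frac{c_2^2}{2c_1}$ must be excluded on physical grounds as discussed around Eq.~\eqref{SigCond1}.
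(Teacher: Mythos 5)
Your proposal is correct and follows essentially the same route as the paper's own sketch proof: apply It\^o's formula to $\log n_1(t)$, bound the drift by the interior maximum $-c_1+\frac{c_2^2}{2\sigma^2}$ of the quadratic (your $h(N-n_1)$ is exactly the paper's $f(n_1)$), use $\sigma^2>\frac{c_2}{\alpha N}$ for interiority and $\sigma^2>\frac{c_2^2}{2c_1}$ for negativity, and kill the martingale term by the strong law of large numbers for martingales. No substantive differences.
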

The above theorem indicates a nonphysical free-flow state, occurring largely at large vehicle concentrations. 
To avoid such a scenario, one can introduce a truncation to the vehicle concentration $N_\mathrm{cut}$ that satisfies the following condition
\bqn
N < N_\mathrm{cut}\lessapprox N_\mathrm{max} ,
\eqn
so that
\bqn
\sigma^2 < \frac{c_2(N_\mathrm{max}-N_\mathrm{cut})}{N_\mathrm{cut}} ,\label{SigCond2}
\eqn
or effectively, $N_\mathrm{cut} < N_s$.
The constraint Eq.~\eqref{SigCond2} should be used in conjunction with Eq.~\eqref{SigCond1}.

While bearing in mind the above model constraints, we investigate the fundamental diagram's parameter dependence.
In particular, we explore the role of the parameters $c_1$, $c_2$, and $\sigma$ in the flow-concentration curve.
The results are shown in Fig.~\ref{fig:figure18}.
The temporal evolution of the system is governed by the stochastic differential equation~\eqref{SDE}.
The initial condition $n_1(0)$ is drawn from a uniform distribution $\mathrm{U}(1, N)$.
A total of 3,000 simulations have been carried out for each plot, where 20 simulations are performed for a given set of parameters.
For each simulation, the time instant $t_s\sim 26$ is chosen to evaluate the flow and concentration, which is drawn randomly from the uniform distribution $\mathrm{U}(25, 27)$.
Each gray dot represents the resulting flow of a given simulation, and the dashed red curve indicates their average. 
In contrast, the blue solid curve gives the underlying deterministic model.
For a more transparent comparison between the plots, we assume a given value for the deterministic bound $N_c= 50$ and choose $N_\mathrm{max}=200$ and $N_\mathrm{cut}=150$. 
The reason for choosing $N_c$ instead of any specific values for $c_1$ and $c_2$ is that the overall shape of the fundamental diagram, and inclusively the location of its maximum, is primarily governed by the observation.
We also note that the remaining parameters are determined up to a relative scaling.
Subsequently, to generate the plots shown in Fig.~\ref{fig:figure18}, one varies the parameters $c_1$ and $\sigma$ in the vertical and horizontal directions, and subsequently, $c_2$ is governed by Eq.~\eqref{defDetBd}.
We relegate more detailed information on the numerical simulations to Appx.~\ref{app3}, and the impact of the chosen parameters is further scrutinized in Appx.~\ref{app4}.

\begin{figure}[!ht]
\begin{minipage}{440pt}
\centerline{\includegraphics[width=1.0\textwidth]{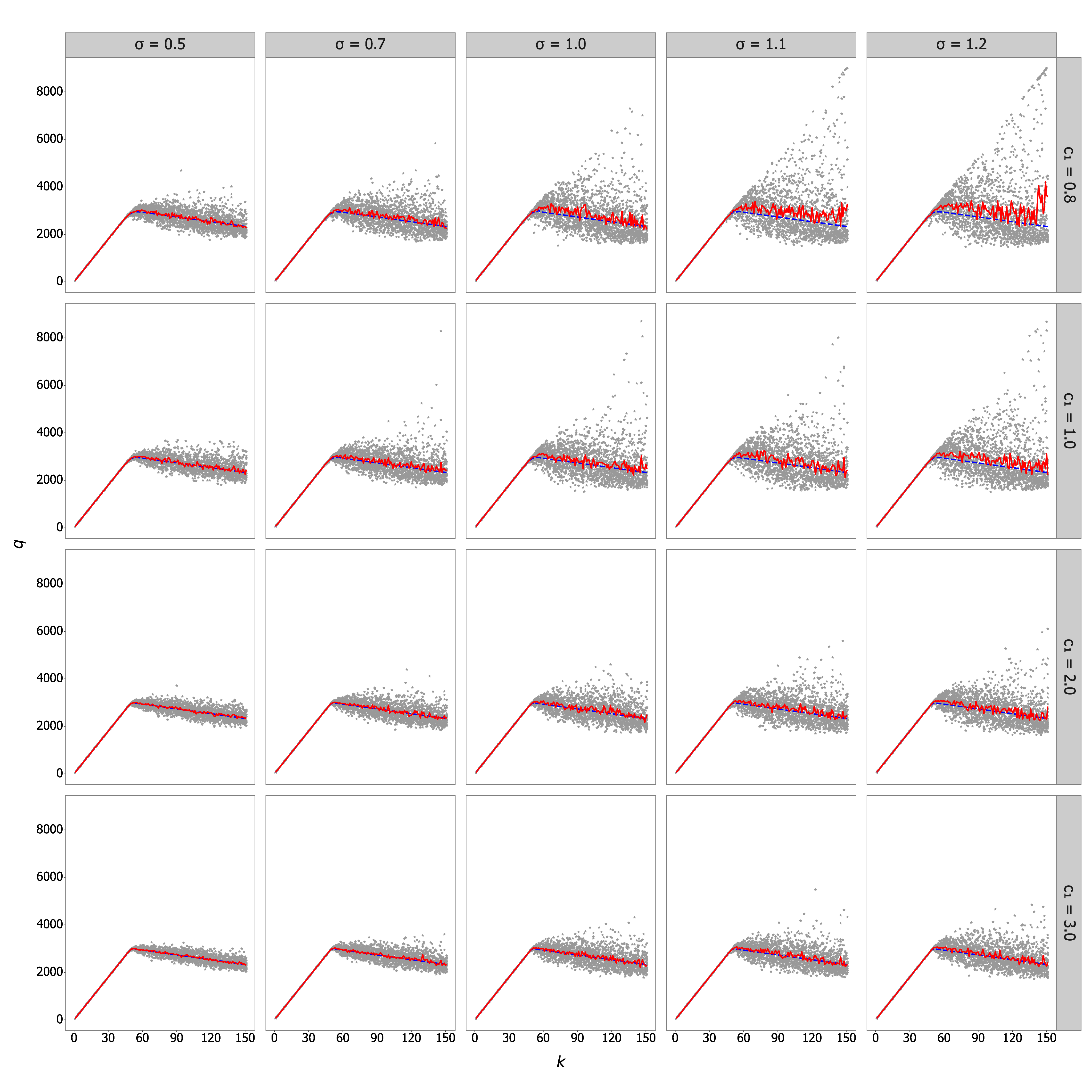}}
\end{minipage}
\renewcommand{\figurename}{Fig.}
\caption{The flow-concentration curve for the stochastic two-speed-state model.
A comprehensive scan of the model parameter space is conducted.
For a given setup, each gray dot represents the resulting flow of a given simulation, and the dashed red curve indicates their average.
The blue solid curve shows the fundamental diagram of the underlying deterministic model.
In the calculations, we vary $\sigma$ and $c_1$ while fixing the remaining parameters $v_1=10$, $v_2=60$, $L=1$, $N_\mathrm{max}=200$, and $N_c = 50$.
For illustration purpose, we choose a constant value $N_\mathrm{cut}=150$, where the inequality Eq.~\eqref{SigCond2} is not always guaranteed.} 
\label{fig:figure18}
\end{figure}

Let us first explore the effect of the parameters regarding the aleatory uncertainty in the flow-concentration curve and, in particular, the capacity drop.
It is pretty intuitive that increasing $\sigma$ results in more significant uncertainties.
While the effect of $c_1$ on the intrinsic variation or randomness of the fundamental diagram is more subtle.
The effect of capacity drop becomes less pronounced as $c_1$ increases while keeping $N_c$ unchanged.
This might be less obvious at first glance, but it is readily shown that such a feature agrees with Eq.~\eqref{DeltaNcEst}.
Second, these parameters also affect the bound of the truncation $N_\mathrm{cut}$ applied to the model.
While the remaining parameters are kept unchanged, as $N_\mathrm{cut}$ increases, the value of $\sigma$ satisfying Eq.~\eqref{SigCond2} will decrease.
Using similar arguments, Eq.~\eqref{SigCond2} implies that the lower bound of $N_\mathrm{cut}$ increases with increasing $c_1$, for given $\sigma$ and $N_\mathrm{max}$.
This explains why the top-right plots of Fig.~\ref{fig:figure18} display nonphysical free-flow states at more significant vehicle concentration.

As a result, there is competition between the two mechanisms.
On the one hand, the value of $ c_1 $ cannot be too small, as it might undermine the model's validity owing to theorem~\ref{2a3}.
On the other hand, to have a more pronounced capacity drop in accordance with the data, it is desirable to choose a moderate value for $c_1$ that is not too large.
A similar argument applies to the choice of $\sigma$.
The significant scatter of the fundamental diagram observed in the congestion phase indicates that a more significant strength of the stochastic noise is favorable, while the latter also possibly invalidates the model.
The scan of model parameters presented in Fig.~\ref{fig:figure18} indicates that a range of model parameters furnishes a physically relevant traffic model, which will be further elaborated below.

\begin{figure}[ht]
\begin{tabular}{ccc}
\begin{minipage}{170pt}
\centerline{\includegraphics[width=1\textwidth]{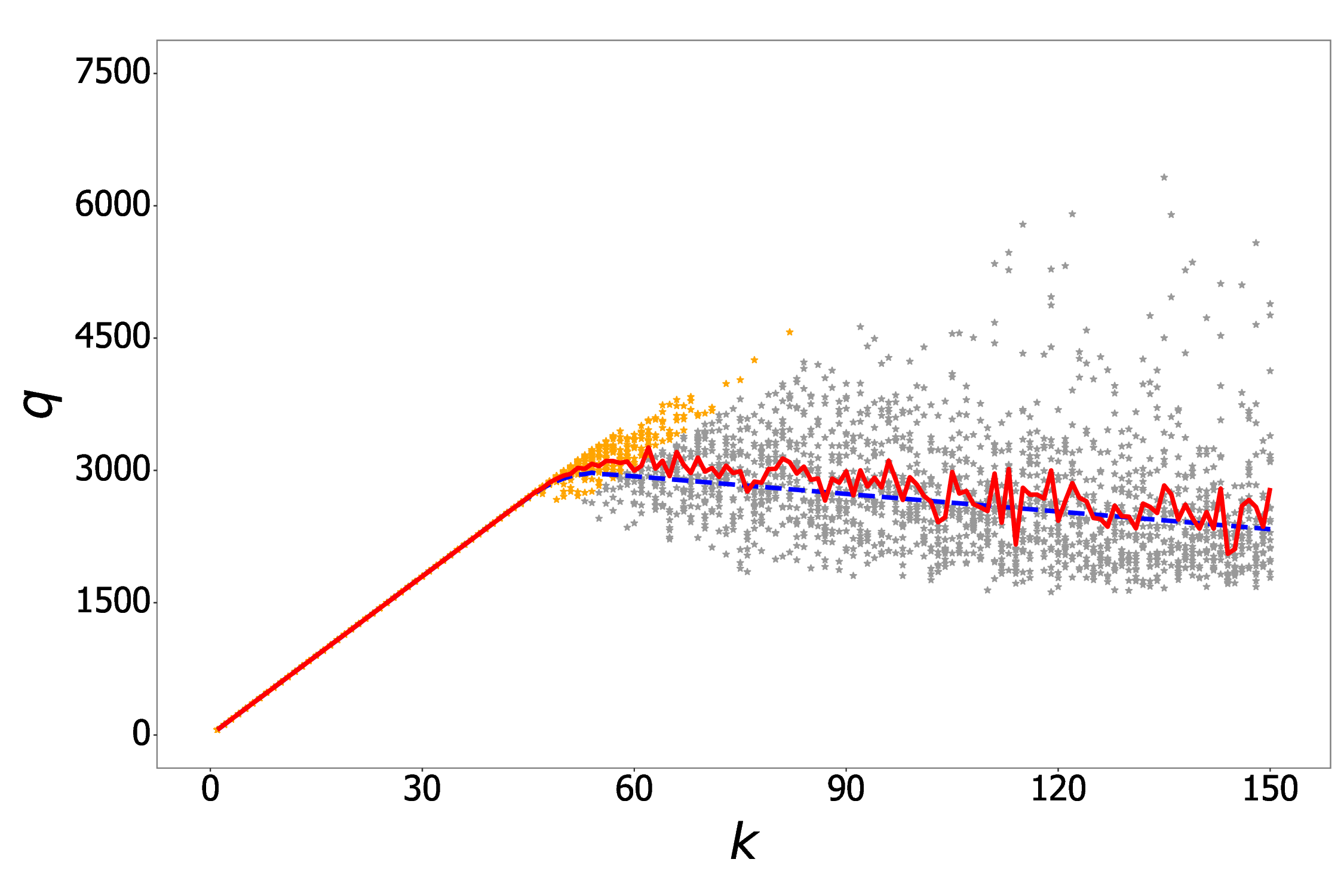}}
\end{minipage}
&
\begin{minipage}{170pt}
\centerline{\includegraphics[width=1\textwidth]{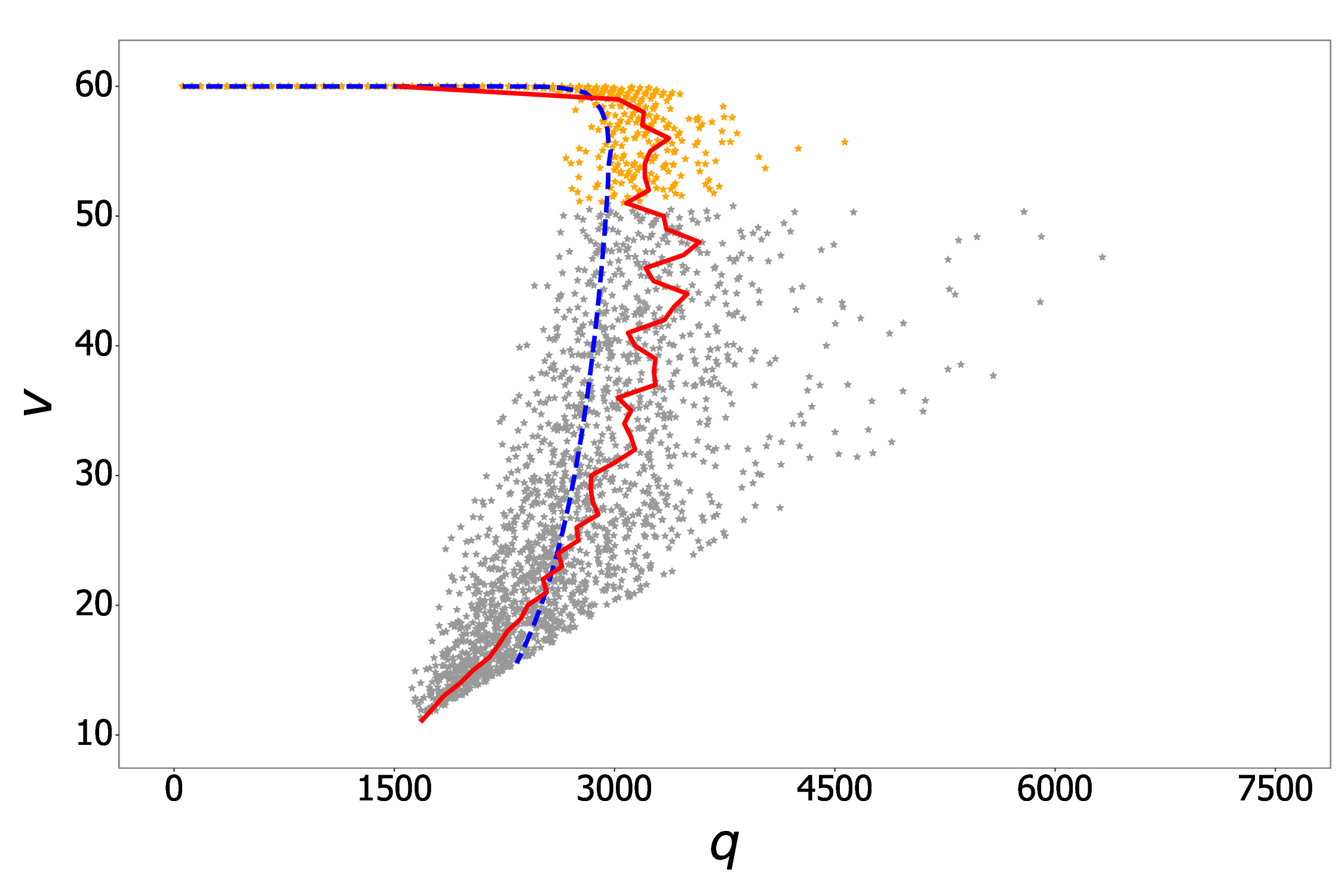}}
\end{minipage}
&
\begin{minipage}{170pt}
\centerline{\includegraphics[width=1\textwidth]{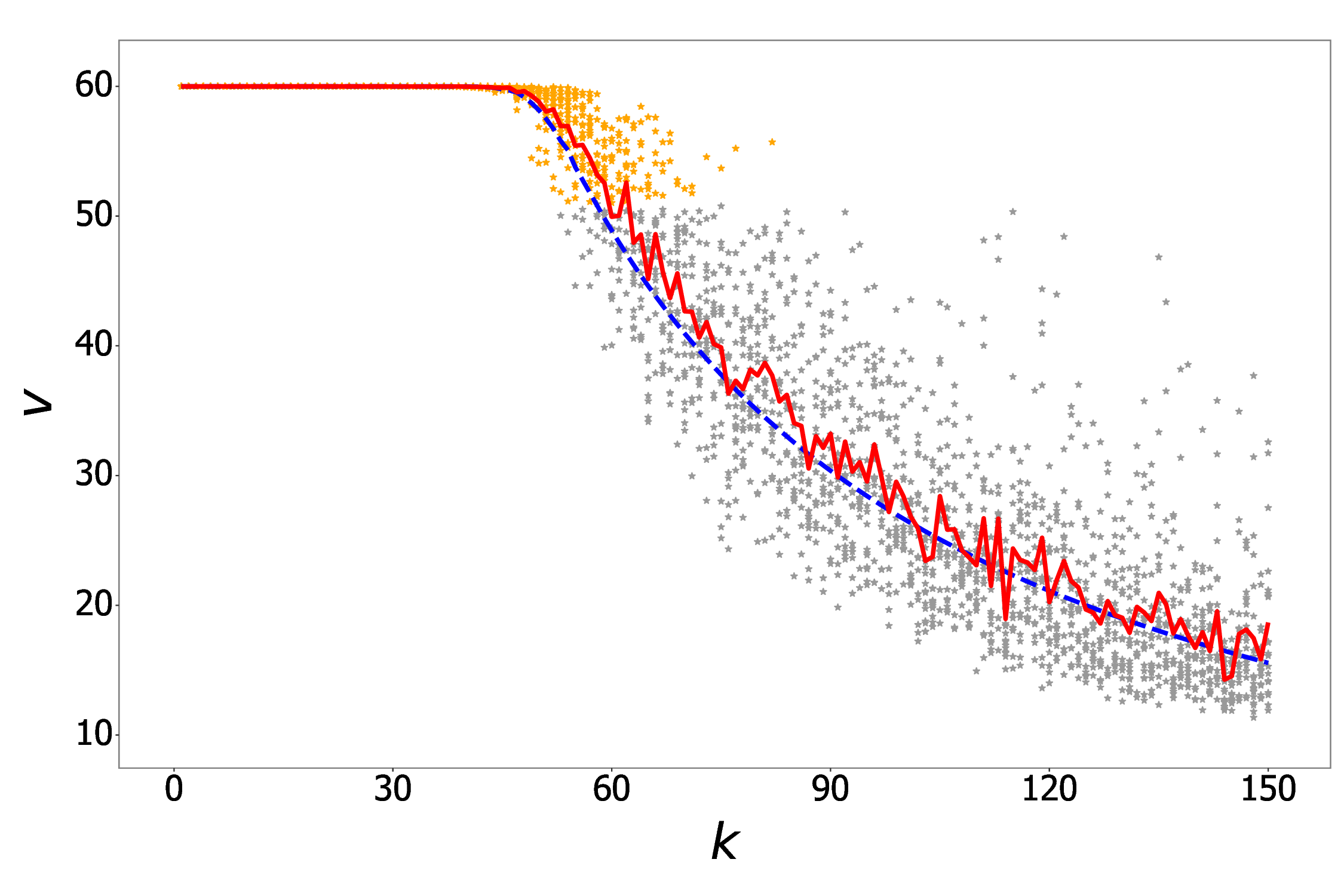}}
\end{minipage}
\end{tabular}
\renewcommand{\figurename}{Fig.}
\caption{The results of the stochastic two-speed-state model, where the calculations are carried out using the parameters Eq.~\eqref{SModelParameters}.
Left: the flow-concentration curve.
Middle: the traffic flow as a function of average vehicle speed.
Right: the vehicle concentration as a function of average vehicle speed.
In the plots, each gray or orange symbol represents the result of a given simulation.
The orange squares represent cases where the flow reaches at least 85\% of the deterministic model's free flow (i.e., $q \geq 0.85 q^f = kv_2$), primarily indicating free-flow conditions, while gray dots denote the remainder.
The dashed red curve indicates the average value for the stochastic model, while the solid blue curve shows that of its deterministic counterpart.}
\label{fig:figure13}
\end{figure}

The preceding results of the parameter scan have clarified individual parameters' roles in reproducing the main features observed in the data.
Based on these analyses, we will employ the following parameters that are mostly relevant from a physical perspective:
\bqn
c_1&=&1, \nb\\
c_2&=&3, \nb\\
v_1&=&10, \nb\\
v_2&=&60, \nb\\
\sigma &=& 1,\nb\\
N_\mathrm{max}&=& 200, \nb\\
N_\mathrm{cut}&=& 150, \nb\\
L &=& 1. \label{SModelParameters}
\eqn
In the numerical calculations, we carry out 20 simulations for a given concentration $N$, each represented by a gray dot.
This gives rise to a total of 3,000 simulations.
The results are presented in Fig.~\ref{fig:figure13}.
Specifically, the results ascertain the following features of the fundamental diagram derived from the stochastic approach.
\begin{itemize}
    \item Stability and insignificant variance in the free-flow state.
    \item Scatter feature and convex shape in the congestion state.
    \item Inverse lambda and capacity drop.
\end{itemize}

The most apparent feature is the scatter exhibited by the simulated data points associated with the congested traffic state.
The almond shape in the flow variance discussed in Sec.~\ref{sec3.4} is also recognizable.
Specifically, the scatter emerges in the vicinity of $k=k_c$, and subsequently, it first increases with increasing concentration and then decreases again.
The convex shape and other features associated with theorems~\ref{flux1} and~\ref{meanandvariance} are somewhat subtle: in the left panel, the average flow remains mostly above its deterministic counterpart while constantly fluctuating.
Moreover, it is observed in the middle panel that the average flow of the stochastic model is higher than the deterministic model, forming an almond shape.
The distribution of the orange dots is rather narrow, primarily as an extension of the free-flow state.
This indicates the persistence of a small fraction of free-flow states amid congested traffic: it demonstrates the enhanced stability of the free flow in the presence of moderate stochastic noise, regarding theorem~\ref{2a3}.
As these states stretch out further into higher vehicle concentration than their deterministic counterpart, they correspond to even larger flow capacities than the peak predicted by the deterministic model.
Eventually, these states might be destabilized and merge with the congested one, indicated by gray symbols, accompanied by a lower flow capacity, triggering the capacity drop phenomenon.

\section{Concluding remarks}\label{sec5}

Capacity drop is one of the most intriguing phenomena in traffic flow analysis that has attracted continuous attention from the community.
Many studies address the problem by explicitly modeling bottlenecks in a highway section from either a microscopic or macroscopic perspective, often incorporating some degree of discontinuity into the model as a prior input.
On the other hand, the present study attributes the discontinuity and the increase in uncertainty in vehicle flow to a type of transition triggered by the underlying interplay between two factors that govern the system's stability.
From a mesoscopic perspective, the model parameters remain continuous and correspond to the transition coefficients in the Boltzmann equation.
While the concept of stability analysis is not new in traffic flow theories, the novelty of this study lies in the distinct role played by stochastic noise. 
This is particularly interesting because the deterministic counterpart of the model neither exhibits any discontinuity in the associated phase transition nor is capable of quantitatively capturing the variance of relevant physical quantities.

In this study, the stochastic noise, introduced as a specific perturbation, does not necessarily destabilize the system. 
This counterintuitive feature is embodied in the proposed stochastic traffic model, which subsequently provides a natural interpretation of the capacity drop due to the enhanced stability of the free-flow state compared to its deterministic counterpart.
Discussions about the physical implications of the results are elaborated based on several mathematical theorems, demonstrating that the present approach offers an alternative yet meaningful interpretation of the observed capacity drop and the accompanying scatter observed in the traffic data.
Specifically, the prolonged free-flow state is derived from the modified stability in the stochastic model, and an almond shape is estimated for the flow variance as a function of the concentration.
Moreover, the congested traffic phase in this model does not converge to any stationary state, as it features persistent stochastic oscillations that repeatedly cross a specific vehicle flow threshold an infinite number of times.
It is worth noting that the stochastic noise was introduced in a minimized fashion by Eq.~\eqref{EqStoAsum}. 
One might wonder about the possible outcomes if stochastic noise were introduced to other transition terms. 
Unfortunately, such alternative assumptions often lead to more complex mathematical problems where similar analytic analysis becomes infeasible.
Nonetheless, existing numerical simulations~\cite{traffic-flow-btz-lob-04} indicate the validity of the main features elaborated in the present paper in a broader context. 
The findings of this study and their possible extension to more general contexts call for further investigation.

\appendix

\section{The derivations of the theorems}\label{app2}

In this appendix, we provide sketch proofs of the theorems presented in this study.
The strategy of the derivations is partly based on Refs.~\cite{khasminskii, Gray}.
For further details, we delegate interested readers to the above literature.

\subsection{Sketch proof of Lemma \ref{invariance}}

The Lyapunov operator associated to equation (\ref{SDE}) is given by
\begin{equation}
L=\frac{\partial}{\partial t}+\left[ -c_1x+\alpha c_2(N-x)x  \right]\frac{\partial}{\partial x}+\frac{1}{2}\sigma^2\alpha^2(N-x)^2x^2\frac{\partial^2}{\partial x^2} .
\end{equation}
By taking $V(x)=\frac{1}{x}+\frac{1}{N-x}$ for $x\in(0,N)$, it is not difficult to show that $LV(x)\leq CV(x)$ for $x\in (0,N)$, where $C=c_1 \vee (c_2\alpha N)+\sigma^2\alpha^2N^2$. 
Then, by the Theorem $3.5$ of~\cite{khasminskii}, we have
\begin{equation}
\mathbb{P}(n_1(t)\in(0,N),\forall t\geq 0)=1 ,
\end{equation}
provided that $n_1(0)\in(0,N)$.

\subsection{Sketch proof of theorem \ref{2a3}}

We first define $f:\mathbb{R}\rightarrow\mathbb{R}$ by
\begin{equation}\label{4.4}
    f(x)=c_2\alpha N-c_1-c_2\alpha x-\frac{1}{2}\sigma^2(N-x)^2.
\end{equation}

By It\^o formula, we have
\begin{equation}\label{4.3}
\log(n_1(t))=\log(n_1(0))+\int_0^tf(n_1(s))ds+\int_0^t\sigma\alpha(N-n_1(s))ds .
\end{equation}
Due to $R_0^s<1$ and $\sigma^2\leq\frac{c_2}{\alpha N}$, we find
\begin{equation}
f(n_1(s))\leq c_2\alpha N -c_1-\frac{1}{2}\sigma^2\alpha^2N^2, \hspace{.2cm}\text{for $n_1(s)\in(0,N)$}.
\end{equation}
It immediately follows from Eq.~\eqref{4.3} that
\begin{equation}\label{4.5}
    \log(n_1(t))\leq \log(n_1(0))+(c_2\alpha N -c_1-\frac{1}{2}\sigma^2\alpha^2N^2)t+\int_0^t\sigma\alpha (N-n_1(s))dB(s).
\end{equation}
Dividing by $t$ on both sides of the above inequality and taking the limit $t\to \infty$, we obtain
\begin{equation}
\limsup_{t\to\infty}\frac{1}{t}\log(n_1(t))\leq c_2\alpha N -c_1-\frac{1}{2}\sigma^2\alpha^2N^2+\limsup_{t\to\infty}\frac{1}{t}\int_0^t \sigma\alpha (N-n_1(s))dB(s), \hspace{.3cm}\text{a.s.}
\end{equation}
Now, by the large number theorem for martingales, one finds
\begin{equation}
\limsup_{t\to\infty}\frac{1}{t}\int_0^t \sigma\alpha (N-n_1(s))dB(s)=0, \text{\,\, a.s.}
\end{equation}

\subsection{Sketch proof of theorem \ref{2a4}}

The function $f(x)$ defined in Eq.~\eqref{4.4} takes its maximum value $f(\hat{x})$ at $\hat{x}=\frac{\sigma^2\alpha-c_2}{\sigma^2\alpha}$. 
As $\sigma^2>\frac{ c_2}{\alpha N}\vee \frac{c_2^2}{2c_1}$, we have $\hat{x}\in(0,N)$ and
\begin{equation}
f(\hat{x})=-c_1+\frac{c_2^2}{2\sigma^2},
\end{equation}
It follows from Eq.~\eqref{4.3} that
\begin{equation}
\log(n_1(t))\leq \log(n_1(0))+f(\hat{x})t+\int_0^t\sigma\alpha(N-n_1(s))dB(s).
\end{equation}

Using the same argument as in the sketch proof of theorem {\ref{2a3}}, we have
\begin{equation}
\limsup_{t\to\infty}\frac{1}{t}\log(n_1(t))\leq f(\hat{x}), \hspace{.3cm}\text{a.s.}
\end{equation}
Finally, as $\sigma^2>\frac{c_2^2}{2c_1}$, we have $f(\hat{x})<0$.

\subsection{Sketch proof of theorem \ref{flux1}}

Let $n_1(0)\in (1, N)$. 
The equation $f(x)=0$ has a positive root given by $\xi$. 
Note that $f(0)=c_2\alpha N-c_1-\frac{1}{2}\sigma^2\alpha^2N^2>0$ due to $R_0^s>1$, and $f(N)=-c_1<0$, then $\xi\in(0,N)$.
Besides, we have
\begin{equation}\label{5.6}
    f(x)>0 \hspace{.3cm} \text{is strictly increasing on $x\in(0,0\vee \hat{x})$},
\end{equation}
\begin{equation}\label{5.7}
    f(x)>0 \hspace{.3cm} \text{is strictly decreasing on $x\in(0\vee \hat{x},\xi)$},
\end{equation}
and
\begin{equation}\label{5.8}
    f(x)<0 \hspace{.3cm} \text{is strictly decreasing on $x\in(\xi,N)$} .
\end{equation}

Suppose the statement Eq.~\eqref{first} is not valid. 
It takes $\epsilon\in(0,1)$ small enough such that $\mathbb{P}(\tilde{\Omega})>\epsilon $ with $\tilde{\Omega}=\{ \limsup_{t\to\infty} n_1(t)\leq \xi-2\epsilon \}$ and $f(0)>f(\xi-\epsilon)$. 
Therefore, there exists a random time $T>0$ on $\tilde{\Omega}$ such that
\begin{equation}\label{5.10}
    n_1(t)\leq \xi-\epsilon, \hspace{.3cm}\text{provided $t\geq T$ on $\tilde{\Omega}$ }.
\end{equation}
From Eqs.~\eqref{5.6},~\eqref{5.7}, and~\eqref{5.10}, it follows 
\begin{equation}\label{5.11}
    f(n_1(t))\geq f(\xi-\epsilon),\hspace{.3cm}\text{provided $t\geq T$ on $\tilde{\Omega}$}
\end{equation}

From Eqs.~\eqref{4.3} and~\eqref{5.11}, on $\tilde{\Omega}$, it follows that, for $t\geq T$
\begin{eqnarray}\label{5.13}
    \log(n_1(t))&=& \log(n_1(0))+\int_0^Tf(n_1(s))ds+\int_T^tf(n_1(s))ds+\int_0^t\sigma\alpha(N-n_1(s))dB(s)\\
    & \geq & \log(n_1(0))+\int_0^Tf(n_1(s))ds+f(\xi-\epsilon)(t-T)+\int_0^t\sigma\alpha(N-n_1(s))dB(s).
\end{eqnarray}
Dividing the inequality by $t$ and taking the limit $t\to\infty$, we have
\begin{equation}
\liminf_{t\to\infty}\frac{1}{t}\log(n_1(t))=\infty \hspace{.3cm}\text{on $\tilde{\Omega}$.}
\end{equation}

Therefore, $\lim_{t\to\infty}n_1(t)=\infty$ on $\tilde{\Omega}$ with $\mathbb{P}(\tilde{\Omega})>0$. 
But, this contradicts that $\mathbb{P}(n_1(t)\in(0,N),\forall t\geq 0)=1$. 
Therefore, the statement (\ref{first}) must hold.

Suppose that the statement Eq.~\eqref{second} is not true, then there is $\epsilon\in(0,1)$ small enough such that $\mathbb{P}(\hat{\Omega})>\epsilon$ with $\hat{\Omega}=\{ \liminf_{t\to\infty} n_1(t)\geq \xi+2\epsilon \}$. 
As a result, there exists a random time $T>0$ such that 
\begin{equation}\label{5.15}
    n_1(t)\geq \xi+\epsilon, \hspace{.3cm}\text{provided $t\geq T$ on $\hat{\Omega}$}.
\end{equation}

From Eqs.~\eqref{4.3} and~\eqref{5.8}, on $\hat{\Omega}$, we have
\begin{equation}\label{5.16}
    \log(n_1(t))\leq \log(n_1(0))+\int_0^Tf(n_1(s))ds+f(\xi+\epsilon)(t-T)+\int_0^T\sigma\alpha (N-n_1(s))dB(s).
\end{equation}
Subsequently, on $\hat{\Omega}$
\begin{equation}
\limsup_{t\to\infty}\frac{1}{t}\log(n_1(t))\leq f(\xi+\epsilon)<0,
\end{equation}
implies that $\lim_{t\to\infty}n_1(t)=0$ on $\hat{\Omega}$, but this contradicts (\ref{5.15}). 
Therefore, the statement (\ref{second}) holds.

\subsection{Sketch proof of theorem \ref{stationary}}

On page $107$ of Ref~\cite{khasminskii}, sufficient conditions for a unique stationary distribution are presented.
When borrowed to the context of the present study, such conditions are:

{\bf Assumption (B)} There is an open interval $(a,b)\subset (0,N)$ with the following properties:
\begin{itemize}
    \item[(B.1)] In $(a,b)$ (the diffusion) $\sigma\alpha(N-x)x$ is bounded below for a positive constant.
    \item[(B.2)] If $n_1(0)\in(0,a]\cup [b,N)$ and $\tau=\inf\{t\geq 0: n_1(t)\in (a,b)\}$, then $\tau$ is a finite stopping time almost sure, and $\sup_{n_1(0)\in K}\mathbb{E}_{n_1(0)}\tau<\infty$ for every compact subset $K\subset (0,N)$.
\end{itemize}

Obviously, for any $(a,b)\subset (0,N)$, the diffusion $\sigma\alpha(N-x)x$ satisfies (B.1). 
One can further show that the model satisfies the condition (B.2) as follows. 
For $0<a<\xi<b<N$, from Eqs.~\eqref{5.6}-\eqref{5.8}, we have
\begin{equation}\label{6.1}
    f(x)>f(0)\wedge f(a)>0 \hspace{.3cm}\text{if $0<x<a$}\hspace{.3cm}\text{and}\hspace{.3cm}f(x)\leq f(b)<0\hspace{.3cm}\text{if $b\leq x<N$}.
\end{equation}

For any $n_1(0)\in (0,a)$, due to Eqs.~\eqref{4.3} and~\eqref{6.1}, we have
\begin{equation*}
    \log(a)\geq \mathbb{E}_{n_1(0)}(\log(n_1(\tau\wedge t)))\geq  \log(n_1(0))+(f(0)\wedge f(a))\mathbb{E}_{n_1(0)}(\tau \wedge  t),\hspace{.3cm}\forall t\geq 0.
\end{equation*}
Taking $t\to\infty$, we obtain
\begin{equation}\label{6.2}
    \mathbb{E}_{n_1(0)}(\tau)\leq \frac{\log(a/n_1(0))}{f(0)\wedge f(a)}, \forall n_1(0)\in(0,a).
\end{equation}
For any $n_1(0)\in (b,N)$, 
\begin{equation}
\log(b)\leq \mathbb{E}_{n_1(0)}(\log(n_1(\tau \wedge t)))\leq \log(n_1(0))+(f(0)\wedge f(a))\mathbb{E}_{n_1(0)}(\tau \wedge t),\hspace{.3cm}\forall t\geq 0.
\end{equation}
Taking $t\to\infty$, we have
\begin{equation}\label{6.3}
\mathbb{E}_{n_1(0)}(\tau)\leq \frac{\log(N/b)}{|f(b)|}, \forall n_1(0)\in (b,N).
\end{equation}
Subsequently, $n_1(0)\in (a,b)$ implies that $\tau\equiv 0$, by Eqs.~\eqref{6.2} and~\eqref{6.3}, the condition (B.2) holds. 

\subsection{Sketch proof of theorem \ref{meanandvariance}}

To proceed, the following theorem is crucial:
\begin{Thm}\label{ergodic}
Suppose that $\pi$ is the stationary distribution of the process $n_1(t)$. 
Let $g(x)$ be a function integrable concerning the measure $\pi$. 
Then, for given $n_1(0)\in(0,N)$, we have
\begin{equation}\label{ergodic1}
    \mathbb{P}\left[ \lim_{t\to\infty}\int_0^t g(n_1(s))ds=\int_0^Ng(y)\pi(dy)     \right]=1
\end{equation}
\end{Thm}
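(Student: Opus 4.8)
\emph{Proof proposal.} The plan is to recognize Eq.~\eqref{ergodic1} (with the normalizing factor $1/t$ in front of the time integral, i.e.\ as a statement about the Ces\`aro time average) as the pathwise ergodic theorem for the one-dimensional diffusion $n_1(t)$, and to deduce it from the recurrence estimates already established in the proof of theorem~\ref{stationary} together with the ergodic theorem for positive-recurrent diffusions of Ref.~\cite{khasminskii} (Chapter IV). Note first that the hypothesis ``$\pi$ is the stationary distribution'' presupposes, by theorem~\ref{stationary}, that $R_0^s>1$, so all estimates of that proof are in force. Two ingredients are thus available: by lemma~\ref{invariance} the process never leaves $(0,N)$, so no boundary explosion has to be excluded; and in proving theorem~\ref{stationary} we checked Khasminskii's conditions (B.1)--(B.2), i.e.\ the dispersion coefficient $\sigma\alpha(N-x)x$ is bounded below on any central interval $(a,b)\subset(0,N)$, and the hitting time $\tau$ of $(a,b)$ from any starting point is almost surely finite with $\mathbb{E}_{n_1(0)}\tau$ bounded uniformly on compacts. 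These are exactly the hypotheses that make $n_1(t)$ positive (Harris) recurrent with $\pi$ as its unique invariant measure.

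First I would record that $n_1(t)$ is a regular (non-singular) diffusion on $(0,N)$: the drift $-c_1x+\alpha c_2(N-x)x$ and the dispersion $\sigma\alpha(N-x)x$ are smooth on $(0,N)$ and the dispersion is strictly positive there, so every point of $(0,N)$ is accessible from every other point in finite time with positive probability. Together with the finite expected return time to a fixed interval $(a,b)\ni\xi$ from (B.2), this produces a regeneration structure: the successive passages of $n_1(t)$ between $\partial(a,b)$ and a fixed interior reference level split the trajectory into cycles that are i.i.d.\ in the usual strong-Markov sense, each of finite mean duration. Applying the strong law of large numbers to the paired cycle functionals $\int_{\mathrm{cycle}}g(n_1(s))\,ds$ and $\int_{\mathrm{cycle}}1\,ds$ yields, for every bounded continuous $g$,
\begin{equation}
\lim_{t\to\infty}\frac1t\int_0^t g(n_1(s))\,ds=\int_0^N g(y)\,\pi(dy)\qquad\text{a.s.},
\end{equation}
independently of the deterministic initial value $n_1(0)\in(0,N)$; alternatively this step may simply be quoted from the ergodic theorem for recurrent diffusions in Ref.~\cite{khasminskii}.

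Next I would drop the boundedness of $g$. For $g\ge 0$ that is $\pi$-integrable, put $g_M=g\wedge M$; the bounded case applied to $g_M$, combined with $\frac1t\int_0^t g(n_1(s))\,ds\ge\frac1t\int_0^t g_M(n_1(s))\,ds$, gives $\liminf_{t\to\infty}\frac1t\int_0^t g(n_1(s))\,ds\ge\int_0^N g_M(y)\,\pi(dy)$, and letting $M\to\infty$ (monotone convergence on the right) gives the bound with $g$ in place of $g_M$. The reverse inequality follows either by the regeneration SLLN applied to the now only integrable but still almost surely finite cycle functionals---whose finiteness again rests on (B.2)---or by a standard comparison using $\int_0^N g(y)\,\pi(dy)<\infty$. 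Writing a general $g=g^{+}-g^{-}$ finishes the proof. For the application in theorem~\ref{meanandvariance} one then takes $g(x)=x$ and $g(x)=x^2$ and combines the resulting time averages with the stationarity identity obtained by integrating the spatial part of the operator $L$ (introduced above in this appendix) against $\pi$, which turns the two space integrals into closed algebraic equations for $\mu$ and $\gamma$.

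The step I expect to be the real obstacle is the passage from bounded to merely $\pi$-integrable $g$, since one must rule out that excursions of $n_1(t)$ toward the endpoints $0$ and $N$ let $\int_0^t g(n_1(s))\,ds$ grow super-linearly in $t$. This is precisely what the uniform bound on $\mathbb{E}_{n_1(0)}\tau$ from (B.2) controls, so the difficulty is tamed rather than genuinely new; what remains is careful bookkeeping with the truncation and the monotone-class argument.
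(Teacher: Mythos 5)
Your proposal is correct and lands essentially where the paper does: the paper's entire proof of theorem~\ref{ergodic} is a citation to page~110 of Ref.~\cite{khasminskii} (the ergodic theorem for positive-recurrent diffusions), whose hypotheses are exactly the conditions (B.1)--(B.2) already verified in the proof of theorem~\ref{stationary}, so your regeneration/cycle sketch and the truncation step merely reconstruct the internals of the result the paper simply quotes. You are also right about the normalization: as printed, Eq.~\eqref{ergodic1} omits the Ces\`aro factor $\tfrac{1}{t}$ in front of $\int_0^t g(n_1(s))\,ds$, and it is the time-averaged form that is actually used in the proof of theorem~\ref{meanandvariance}.
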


\begin{proof}
See page $110$ of Ref.~\cite{khasminskii}.
\end{proof}

Note that $\pi$ is a measure on $\mathcal{B}(0, N)$, and subsequently, $\pi$ has nonvanishing moments of all orders. 
In particular, $\mu=\int_0^Ny\pi(dy)$ and $\gamma=\int_0^Ny^2\pi(dy)-\mu^2$ are real numbers. 
It follows from Eq.~\eqref{SDE} that
\begin{equation}
n_1(t)=n_1(0)+\int_0^t n_1(s)\left[ c_2\alpha N-c_1 -c_2\alpha n_1(s) \right]ds+ \int_0^t \sigma\alpha (N-n_1(s))dB(s) .
\end{equation}
When divided $t$ while taking the limit $t\to\infty$, theorem~\ref{ergodic} implies
\begin{equation}\label{6.6}
0=(c_2\alpha N-c_1)\mu-c_2\alpha (\gamma+\mu^2) .
\end{equation}

From Eq.~\eqref{4.3}, we have
\begin{equation}\label{eq}
    \lim_{t\to\infty}\frac{\log(n_1(t))}{t}=c_2\alpha N-c_1-\frac{1}{2}\sigma^2\alpha^2N^2-(c_2\alpha-\sigma^2N)\mu-\frac{1}{2}\sigma^2\alpha^2(\gamma+\mu^2), \hspace{.2cm}\text{a.s}.
\end{equation}
Subsequently, by theorems~\ref{flux1} and Eq.~\eqref{eq}, we have
\begin{equation}
\lim_{t\to\infty}\frac{\log(n_1(t))}{t}=0, \hspace{.3cm}\text{a.s}.
\end{equation}

By noting that $c_2\alpha N-c_1-\frac{1}{2}\sigma^2\alpha^2N^2=(R_0^s-1)c_1$, one finds
\begin{equation}\label{6.8}
    0=(R_0^s-1)c_1-(c_2\alpha -\sigma^2\alpha^2N)\mu-\frac{1}{2}\sigma^2\alpha^2(\gamma+\mu^2) .
\end{equation}
Substituting Eq.~\eqref{6.6} into Eq.~\eqref{6.8}, we obtain
\begin{equation}
0=(R_0^s-1)c_1-(c_2\alpha-\sigma^2N)\mu-\frac{\sigma^2\alpha^2(c_2\alpha N-c_1)\mu}{2c_2\alpha} ,
\end{equation}
and as a result, Eq~\eqref{mean} holds. 

From Eq.~\eqref{6.6}, we have
\begin{equation}
(\gamma+\mu^2)=\frac{\mu(c_2\alpha N-c_1)}{c_2\alpha} ,
\end{equation}
which gives
\begin{equation}
\gamma=\frac{\mu(c_2\alpha N-c_1)}{c_2\alpha}-\mu^2.
\end{equation}

\section{Numerical validation for some of the theorems}\label{app3}

\subsection{Numerical validation for theorem~\ref{2a3}}\label{app3a1}

In order to corroborate the results of theorem~\ref{2a3}, simulations are performed for Eq.~\eqref{SDE}.
In the calculations, we consider 1,000 different parameter sets consisting of $N$, $\sigma$, $c_1$, and $c_2$, which are selected as follows. 
An integer $N$ is drawn from the interval $[50, 150]$ for the total vehicle number. 
Two real values, $c_1$ and $c_2$, are drawn from the uniform distribution $\mathrm{U}(1,6)$ for the transition coefficients. 
Similarly, $\sigma$ is generated from the uniform distribution $\mathrm{U}(0.2, 1.2)$. 
For each of these combinations, 100 simulations are performed.
In the simulations, we deliberately suppressed the results near the points $R_0^s = 1$ and $\sigma^2 = \frac{c_2}{\alpha N}$ due to convergence issues that significantly raised the computational time.
%The parameters considered above are subject to the following conditions: $R_0 < 0.8$ and $\frac{c_2}{\alpha N} - \sigma^2 > 10$, in accordance with the theorem. 
For the remaining parameters, we adopt $N_\mathrm{max} = 200$, $L = 1$, $v_1 = 10$, and $v_2 = 60$. The variable $n_1$ evolves in time according to the stochastic differential equation Eq.~\eqref{SDE}, considering $n_1(0)$ taken from the uniform distribution $\mathrm{U}(1, N)$.

For each of the 100,000 simulations, 30,000 time steps are considered, ranging from 0 to 30. 
As time $t$ evolves, one constantly verifies the following inequality
\begin{equation}
\label{cond}
n_1(t) < \exp([-\epsilon \,\, t]) .
\end{equation}
Once it is satisfied for $t=t_s$ and remains valid until $t_f \equiv 3t_s$, it is understood that the above inequality is satisfied for any $t> t_s$ from a numerical perspective.
The time instant $t_s$ will be thus denoted as ``the time of convergence,'' indicating how fast a stable free-flow state is achieved.
We adopted the value $\epsilon=0.1$ for these simulations. 

The distribution of the time of convergence $t_s$ in the parameter space is shown as a color map in the left panel of Fig.~\ref{fig:figure4}, where the value $t_s$ is the average among 100 simulations for a given set of parameters $N$, $\sigma$, $c_1$, and $c_2$.
The color map indicates that the system takes longer to converge to the free flow state as the system starts to saturate the stability bound dictated by theorem~\ref{2a3}.

The right panel of Fig.~\ref{fig:figure4} shows the cumulative distribution of $t_s$.
We note that the resulting distribution is not normal.
Instead, it increases monotonically with $t_s$ and seems to converge at a significant convergence time.
This can be understood as not only the stability of a free-flow state becomes increasingly more difficult as the system approaches the bound, but these states also take up increasingly more weight in the parameter space.
As they are located near the congestion phase while featuring more flow fluctuations, these states might be challenging to distinguish from those related to traffic congestion.

\begin{figure}[!htb]
\begin{tabular}{cc}
\begin{minipage}{250pt}
\centerline{\includegraphics[width=1.0\textwidth]{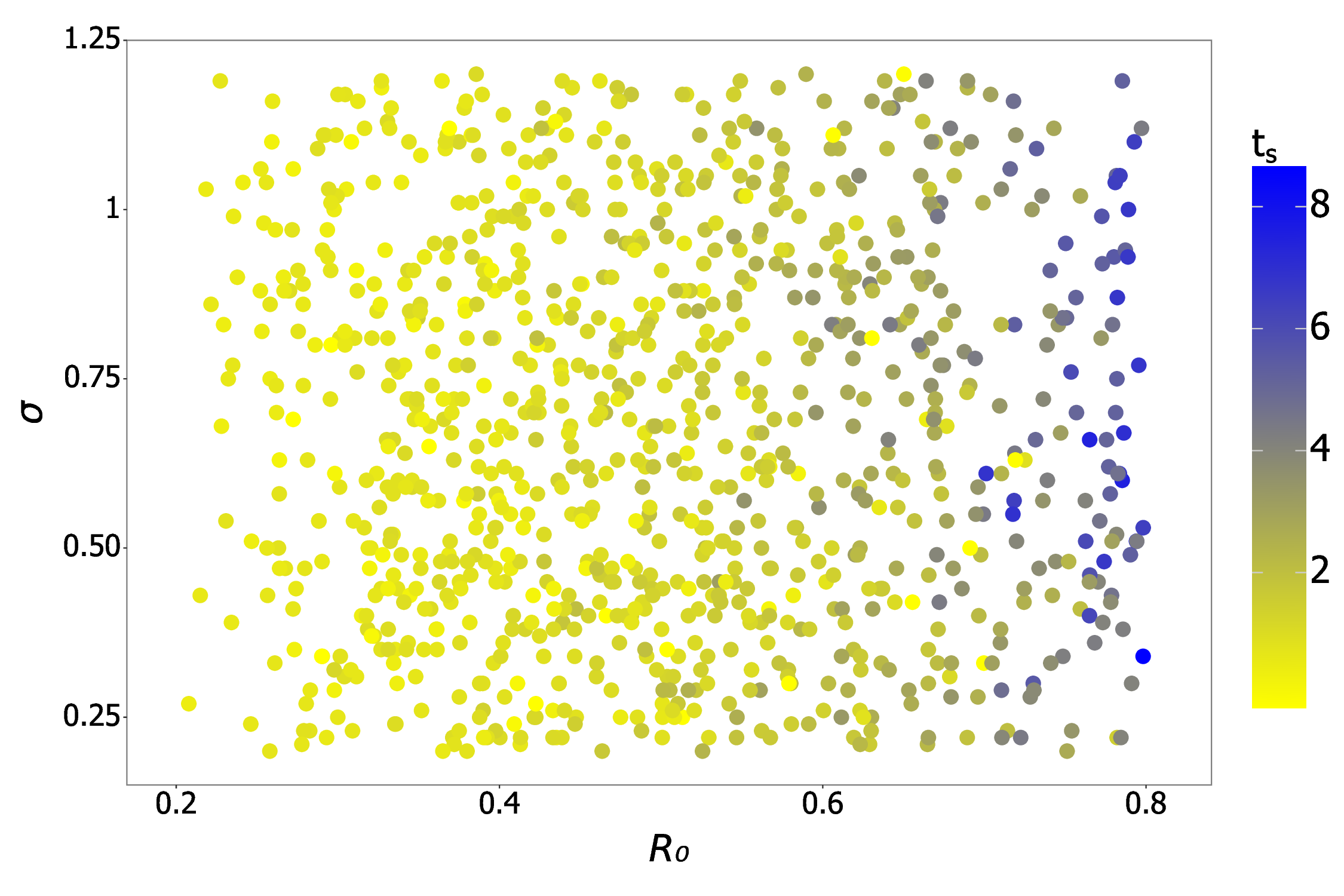}} \scriptsize{(a)}
\end{minipage}
&
\begin{minipage}{250pt}
\centerline{\includegraphics[width=1.0\textwidth]{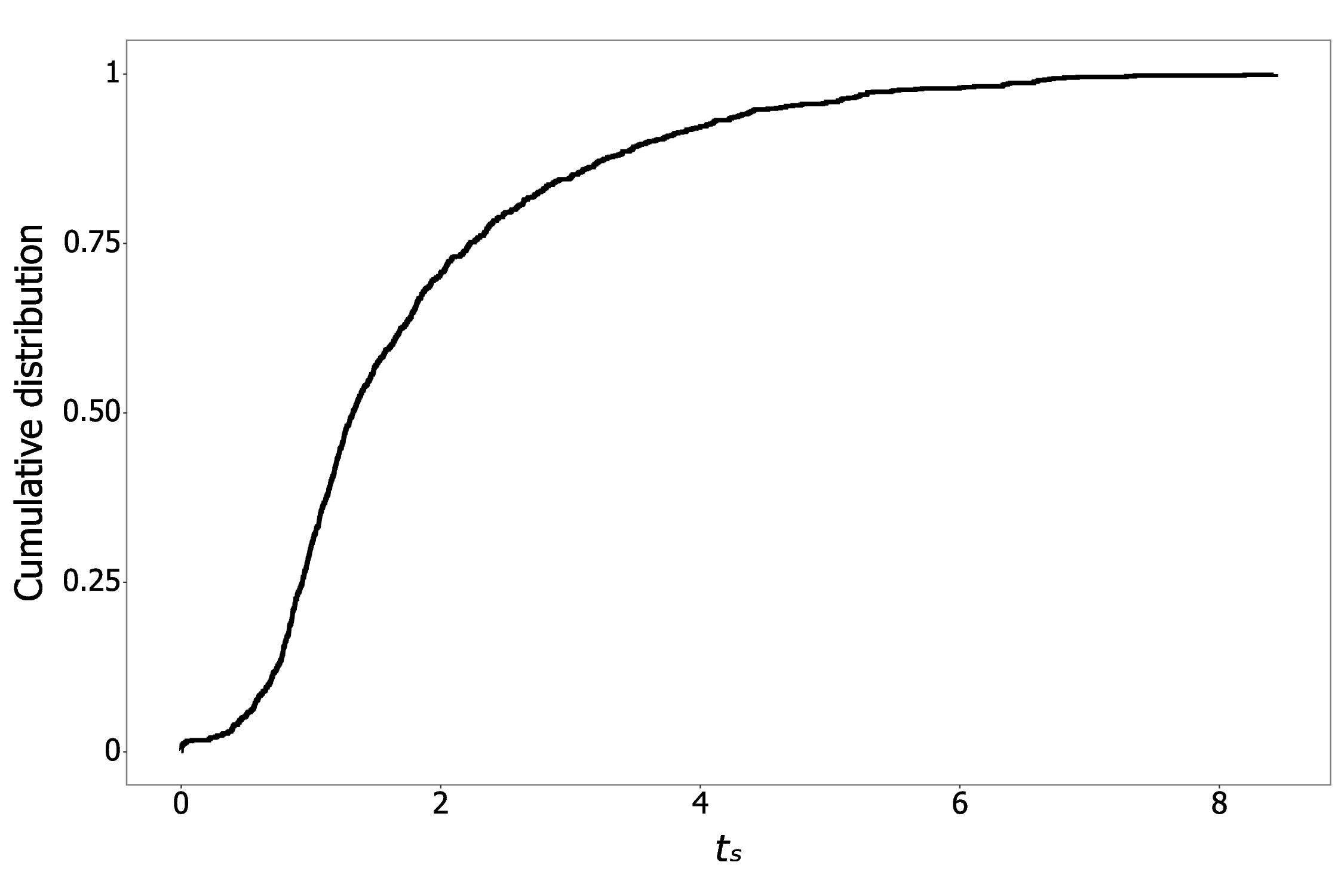}} \scriptsize{(b)}
\end{minipage} %\\
\end{tabular}
\renewcommand{\figurename}{Fig.}
\caption{(a) A color map of the time of convergence $t_s$ as a function of $R_0$ and $\sigma$.
Lighter color (more yellowish) dots represent the states with a smaller $t_s$, while the darker (blueish color) ones correspond to those with more significant $t_s$. 
(b) The corresponding cumulative distribution of $t_s$, obtained using 1,000 parameter combinations.}
\label{fig:figure4}
\end{figure}

Furthermore, simulations are performed to compute the confidence intervals of the concentration $n_1$ evaluated at two different time instants, $t_s= 13.5$ and $98.5$.
Specifically, 2,000 simulations are carried out for two small intervals $t \in [12.5, 14.5]$ and $[97.5, 99.5]$.
In the calculations, the model parameters $N$, $c_1$, $c_2$, $\sigma$, and $N_\mathrm{max}$ are chosen following the same scheme described above.
For each time interval, we consider a sample size of 100 and analyze the following percentiles: $p(0.50)$, $p(0.75)$, $p(0.85)$, $p(0.90)$, $p(0.95)$, $p(0.99)$, $p(0.995)$, and $p(0.999)$.
Theoretically, if many samples were taken and intervals calculated, the percentile $p(k)$ indicates that $k$\% of such intervals would contain the true value.
The interval $p(k)$ is obtained by evaluating the sample mean, standard deviation, and the corresponding t-value.
The results are presented in Tab.~\ref{Table1}. 
Since the value zero is included in all the confidence intervals, the results agree with theorem~\ref{2a3}. 
To be more precise, the obtained confidence intervals do not provide any evidence to reject the null hypothesis of the theorem's validity.
Also, as expected, the amplitude of the interval decreases with increasing time.

\begin{table}[!h]
\centering
\caption{Confidence intervals of $n_1$ for different percentiles evaluated at two different time instants.
The simulations are carried out using the model parameters when both conditions of the theorem~\ref{2a3} are satisfied.}
\begin{tabular}{c | c c c c | c c c c}
\hline \hline
\multirow{3}{*}{percentiles} &  \multicolumn{4}{c|}{time interval $t \in [12.5, 14.5]$} &  \multicolumn{4}{c}{time interval $t \in [97.5, 99.5]$}\\
\cline{2-9}
& point &  \multicolumn{3}{c|}{confidence interval} & point &  \multicolumn{3}{c}{confidence interval} \\
 & estimation & lower limit & upper limit & amplitude & estimation & lower limit & upper limit & amplitude  \\
\hline 
p(0.50) & 0.5580 & -0.3808 & 1.4968 & 1.8777 &
0.1816 & -0.0801 & 0.4432 & 0.5233 \\
p(0.75) & 0.7148 & -0.4930 & 1.9225 & 2.4155 &
0.5837 & -0.3582 & 1.5256 & 1.8838 \\
p(0.85) & 0.7944 & -0.5504 & 2.1392 & 2.6896 &
0.7027 & -0.4438 & 1.8492 & 2.2930 \\
p(0.90) & 0.8267 & -0.5706 & 2.2241 & 2.7947 &
0.7803 & -0.4986 & 2.0593 & 2.5579 \\
p(0.95) & 0.8865 & -0.6074 & 2.3804 & 2.9878 &
0.8534 & -0.5465 & 2.2534 & 2.7998 \\
p(0.99) & 0.9614 & -0.6511 & 2.5740 & 3.2251 &
0.9772 & -0.6181 & 2.5725 & 3.1906 \\
p(0.995) & 0.9814 & -0.6643 & 2.6271 & 3.2914 &
0.9944 & -0.6212 & 2.6101 & 3.2313 \\
p(0.999) & 1.0139 & -0.6902 & 2.7180 & 3.4082 &
1.0314 & -0.6440 & 2.7068 & 3.3508 \\
\hline \hline
\end{tabular}
\label{Table1}
\end{table}

\subsection{Numerical validation for theorem~\ref{2a4}}

In order to confirm the results of theorem~\ref{2a4}, following a similar strategy, 1,000 different parameter sets for $N$, $\sigma$, $c_1$, and $c_2$ are considered, and 100 simulations are performed for each given parameter setup.
Similar to the previous case, in the simulations, we deliberately suppressed the results near the bound $\sigma^2 = \frac{c_2}{\alpha N} \vee \frac{c_2^2}{2c_1}$ due to overwhelming computational costs related to slow convergence issues.
For each of the 100,000 simulations, 30,000 time steps are considered, ranging from 0 to 30. 
As time $t$ evolves, the inequality Eq.~(\ref{cond}) is consistently satisfied. 
Again, once it is satisfied for $t=t_s$ and remains valid until $t_f \equiv 3t_s$, it is understood that this inequality is satisfied for any $t> t_s$ from a numerical perspective.
For these simulations, we also adopted the constant $\epsilon=0.1$. 

The distribution of the time of convergence $t_s$ according to $\frac{c_2}{\alpha N} \vee \frac{c_2^2}{2c_1}$ and $\sigma$ values is presented as a color map in the left panel of Fig.~\ref{fig:figure5}, where the value $t_s$ is the average among 100 simulations for a given set of parameters $N$, $\sigma$, $c_1$, and $c_2$.
The right panel of this figure shows the cumulative distribution of $t_s$.
These results are consistent with those observed in the previous section.

\begin{figure}[!htb]
\begin{tabular}{cc}
\begin{minipage}{250pt}
\centerline{\includegraphics[width=1.0\textwidth]{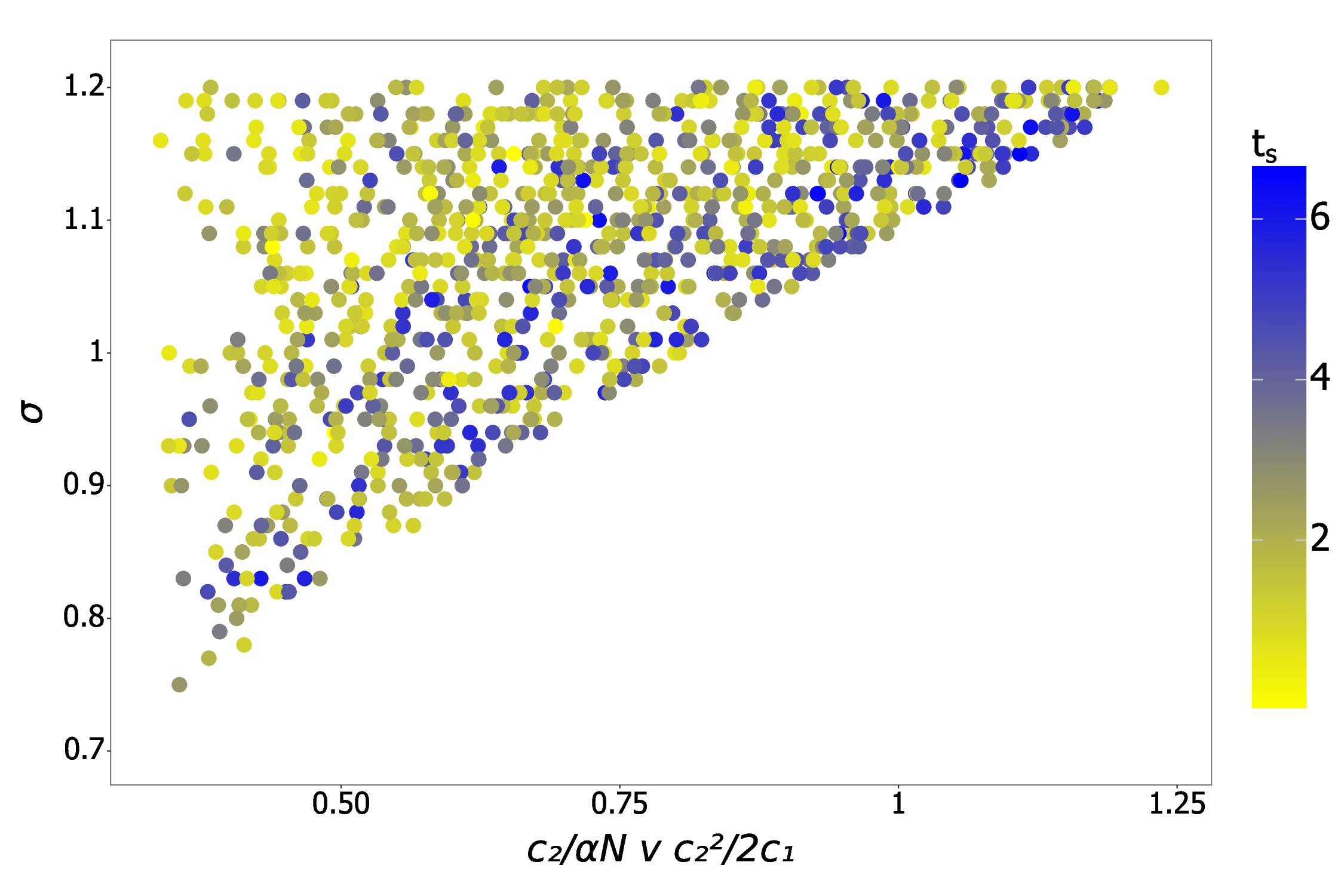}}  \scriptsize{(a)}
\end{minipage}
&
\begin{minipage}{250pt}
\centerline{\includegraphics[width=1.0\textwidth]{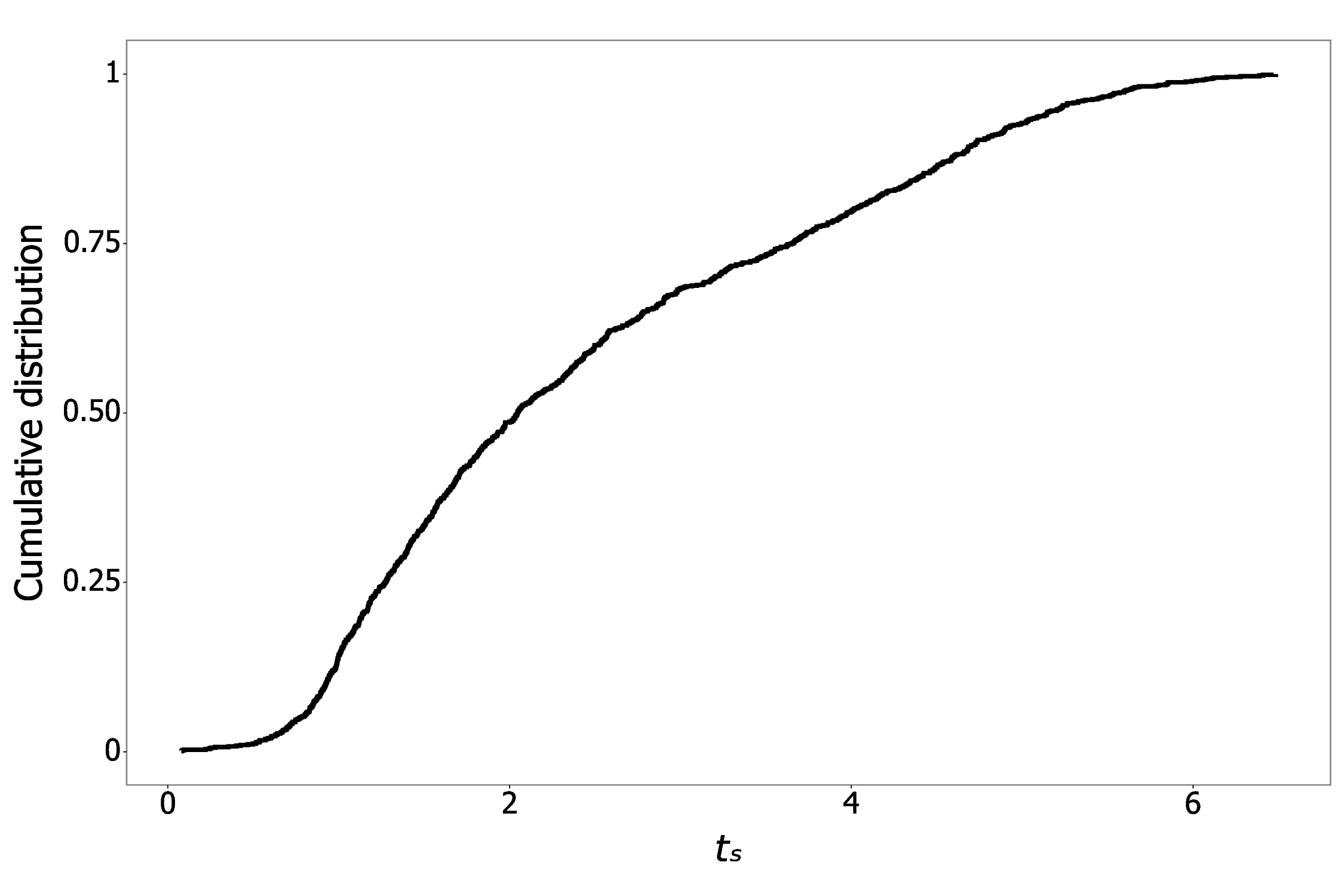}}  \scriptsize{(b)}
\end{minipage} %\\
\end{tabular}
\renewcommand{\figurename}{Fig.}
\caption{(a) A color map of the time of convergence $t_s$ as a function of $\frac{c_2}{\alpha N} \vee \frac{c_2^2}{2c_1}$ and $\sigma$. Lighter color (more yellowish) dots represent the states with a smaller $t_s$, while the darker (blueish color) ones correspond to those with more significant $t_s$. 
(b) The corresponding cumulative distribution of $t_s$, obtained using 1,000 parameter combinations.}
\label{fig:figure5}
\end{figure}

Also, simulations are carried out to calculate the confidence intervals for the percentiles obtained for two time instants $t_s=13.5$ and $t_s=98.5$, again, implemented in terms of small time intervals, $t \in [12.5, 14.5]$ and $t \in [97.5, 99.5]$.
For each arbitrary parameter set consisting of randomly chosen $N$, $c_1$, $c_2$, $\sigma$, and $N_\mathrm{max}$ satisfying the condition of theorem~\ref{2a4}, different percentiles are calculated.
The confidence intervals for each percentile are obtained from a sample size of 100 independent simulations, and the results are presented in Tab.~\ref{Table2}. 
The obtained results agree with theorem \ref{2a4}, as zero is contained in all the calculated confidence intervals. 
In addition, as expected, the amplitude of the interval decreases as time increases.

\begin{table}[!h]
\centering
\caption{Confidence intervals of $n_1$ for different percentiles evaluated at two different time instants.
The simulations are carried out using the model parameters when both conditions of the theorem~\ref{2a4} are satisfied.}
\begin{tabular}{c | c c c c | c c c c}
\hline \hline
\multirow{3}{*}{percentiles} &  \multicolumn{4}{c|}{time interval $t \in [12.5, 14.5]$} &  \multicolumn{4}{c}{time interval $t \in [97.5, 99.5]$}\\
\cline{2-9}
& point &  \multicolumn{3}{c|}{confidence interval} & point &  \multicolumn{3}{c}{confidence interval} \\
 & estimation & lower limit & upper limit & amplitude & estimation & lower limit & upper limit & amplitude  \\
\hline 
p(0.50) & 0.8800 & -0.5898 & 2.3497 & 2.9396 &
0.1894 & -0.1493 & 0.5282 & 0.6775 \\
p(0.75) & 1.1383 & -0.6768 & 2.9534 & 3.6303 &
0.8799 & -0.5898 & 2.3496 & 2.9394 \\
p(0.85) & 1.2841 & -0.6958 & 3.2641 & 3.9599 &
1.0865 & -0.6663 & 2.8394 & 3.5056 \\
p(0.90) & 1.3760 & -0.6943 & 3.4463 & 4.1406 &
1.2128 & -0.6891 & 3.1146 & 3.8037 \\
p(0.95) & 1.5180 & -0.6366 & 3.6725 & 4.3091 &
1.3760 & -0.6943 & 3.4463 & 4.1406 \\
p(0.99) & 1.6616 & -0.6074 & 3.9306 & 4.5380 &
1.6091 & -0.6171 & 3.8353 & 4.4525 \\
p(0.995) & 1.7051 & -0.6126 & 4.0228 & 4.6355 &
1.6616 & -0.6074 & 3.9306 & 4.5380 \\
p(0.999) & 1.7658 & -0.6208 & 4.1524 & 4.7733 &
1.7509 & -0.6223 & 4.1241 & 4.7464 \\
\hline \hline
\end{tabular}%
\label{Table2}%
\end{table}%

\subsection{Numerical validation for theorem~\ref{meanandvariance}}

To assess theorem~\ref{meanandvariance}, 300 randomly generated parameter sets for $N$, $c_1$, $c_2$, and $\sigma$ are utilized.
Due to convergence issues that significantly raise the computational costs, we deliberately avoid the choices near the boundary $R_0^s = 1$.
The remaining parameters take the following values: $N_\mathrm{max} = 200$, $L=1$, $v_1=10$, and $v_2=60$. 
In our simulations, one considers 30,000 time steps, ranging from 0 to 30.
For a given parameter set, 100 simulations are carried out, where the initial value of $n_1(0)$ is drawn from the uniform distribution $\mathrm{U}(1, N)$.
The simulation is terminated at $t_s=29.25$, implemented by a short time interval $t \in [29,29.5]$, for which it is understood that the asymptotical state described by the theorem has been reached.

Subsequently, the mean and the variance of $n_1$ are evaluated and compared with those estimated by theorem~\ref{meanandvariance}.
The comparisons are carried out by evaluating the ratio between the values obtained by the simulation and those obtained using Eqs.~(\ref{mean}) and (\ref{variance}).
These results are shown in Fig.~\ref{fig:figure6}.
Tab.~\ref{Table3} gives a more qualitative summary in terms of percentiles. 
These results provide no evidence to reject the null hypothesis of the theorem's validity.

\begin{figure}[!htb]
\begin{tabular}{cc}
\begin{minipage}{250pt}
\centerline{\includegraphics[width=1.0\textwidth]{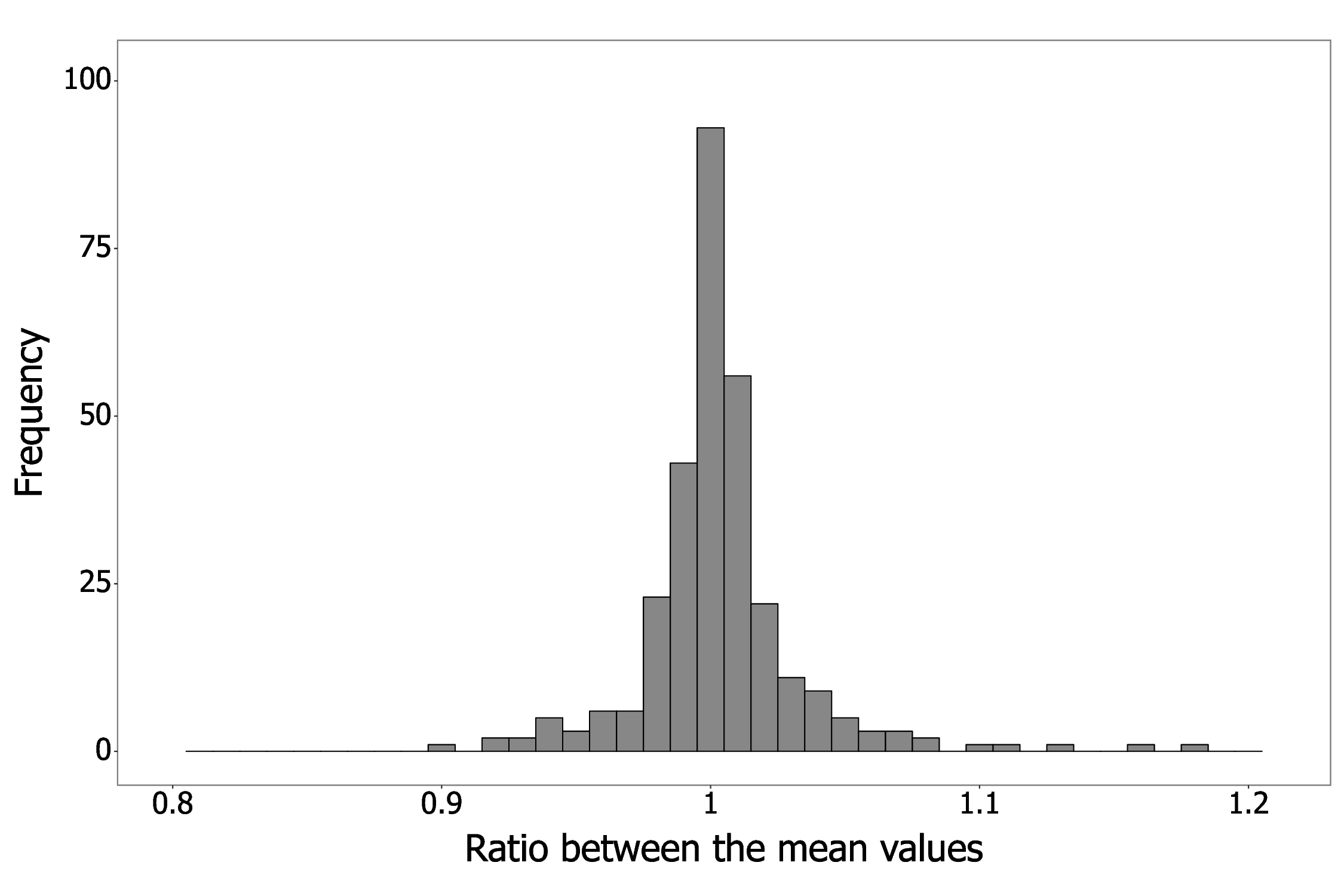}}  \scriptsize{(a)}
\end{minipage} &
\begin{minipage}{250pt}
\centerline{\includegraphics[width=1.0\textwidth]{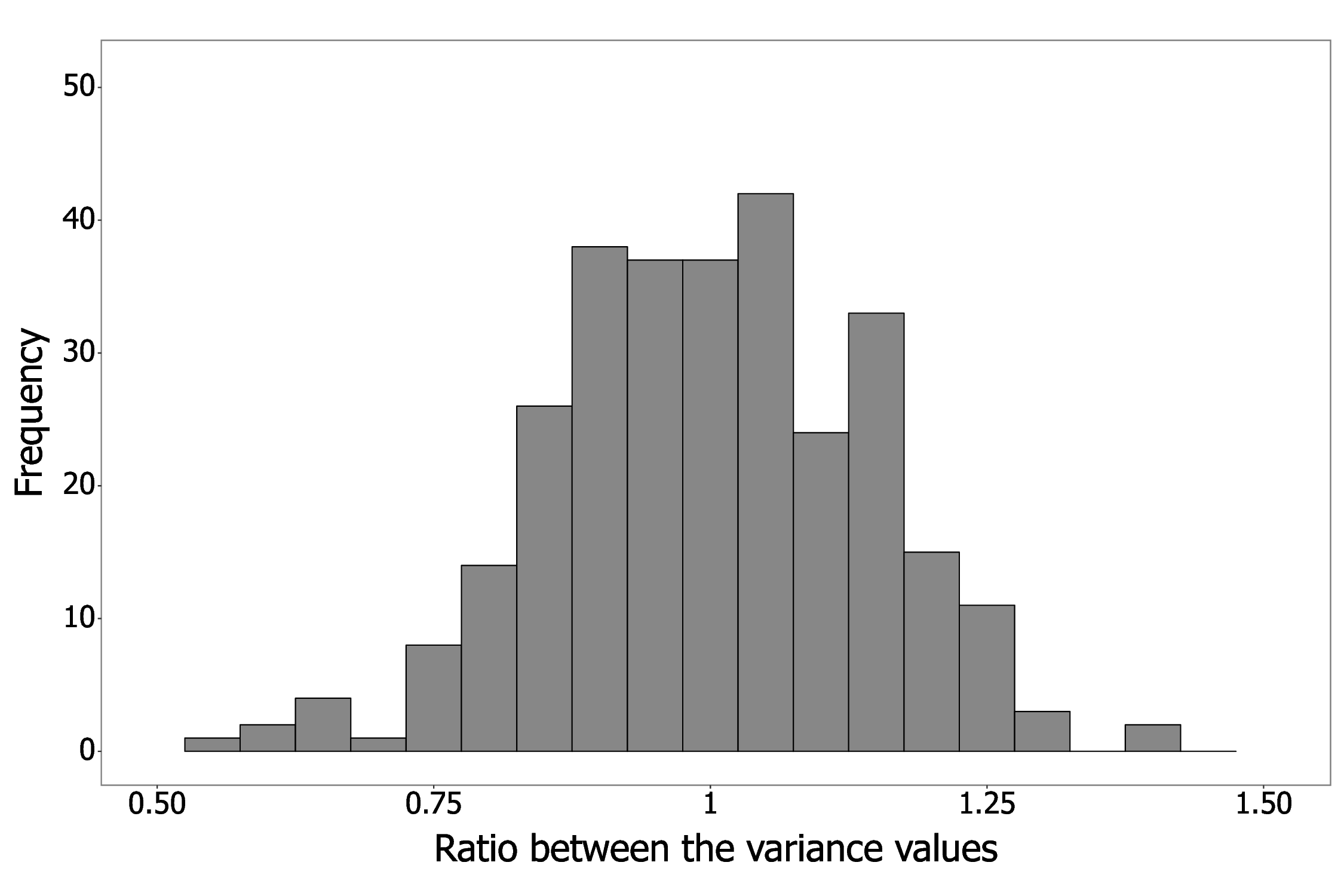}}  \scriptsize{(b)}
\end{minipage} \\
\end{tabular}
\renewcommand{\figurename}{Fig.}
\caption{The ratios between the values obtained by the simulation and those obtained using theorem~\ref{meanandvariance} for the mean (left panel) and variance (right panel).}
\label{fig:figure6}
\end{figure}

\begin{table}[!h]
\centering
\caption{A summary of the ratios between the values estimated by the simulations and those calculated according to theorem~\ref{meanandvariance}.}
\begin{tabular}{c c c}
\hline \hline
\textbf{summary measures} & \textbf{ratio of means}  & \textbf{ratio of variances} \\
\hline
mean           & 1.0033 & 1.0031 \\
std.deviation  & 0.0297	& 0.1555 \\
minimum        & 0.9007  & 0.5656 \\
p(0.25)        & 0.9920  & 0.8985 \\
p(0.50)        & 1.0017  & 1.0030 \\ 
p(0.75)        & 1.0106  & 1.1009 \\
maximum        & 1.1821  & 1.9100 \\
\hline \hline
\end{tabular}
\label{Table3}%
\end{table}%

\subsection{Numerical validation for theorem~\ref{flux1}}

To illustrate the validity of theorem~\ref{flux1}, simulations are carried out using an arbitrarily chosen model parameter set $N = 150$, $c_1 = 1$, $c_2 = 3$, $\sigma = 1$, $N_\mathrm{max} = 200$, $L=1$, $v_1=10$, and $v_2=60$.
For these parameters, one finds $R_0^s = 4.5$, which satisfies the criterion for the theorem's validity. 
The specific vehicle occupation $\xi$, associated with the crossings and given by Eq.~(\ref{xi}), is 132.2876.
The initial condition of the simulations $n_1(0)$ is taken from the uniform distribution $\mathrm{U}(1, N)$.

The results are presented in Fig.~\ref{fig:figure7}, considering 20 simulations and 30,000 time steps, ranging from 0 to 30.
In the left panel, we show the accumulated count for the number of crossings (w.r.t. $\xi$) as a function of time. 
To illustrate the results more transparently, we use different colors to represent different simulations.
It is apparent that the number of crossings increases monotonically in time with a roughly constant rate, in agreement with theorem~\ref{flux1}.
In the right panel of Fig.~\ref{fig:figure7}, we show the value of $n_1$ as a function of time.
The feature of the trajectories indicates that the temporal dependence of these states is not stationary.
The results shown in both panels are confirmed (not shown here) by carrying out a more significant number of simulations over a more extensive period.

\begin{figure}[!htb]
\begin{tabular}{cc}
\begin{minipage}{250pt}
\centerline{\includegraphics[width=1.0\textwidth]{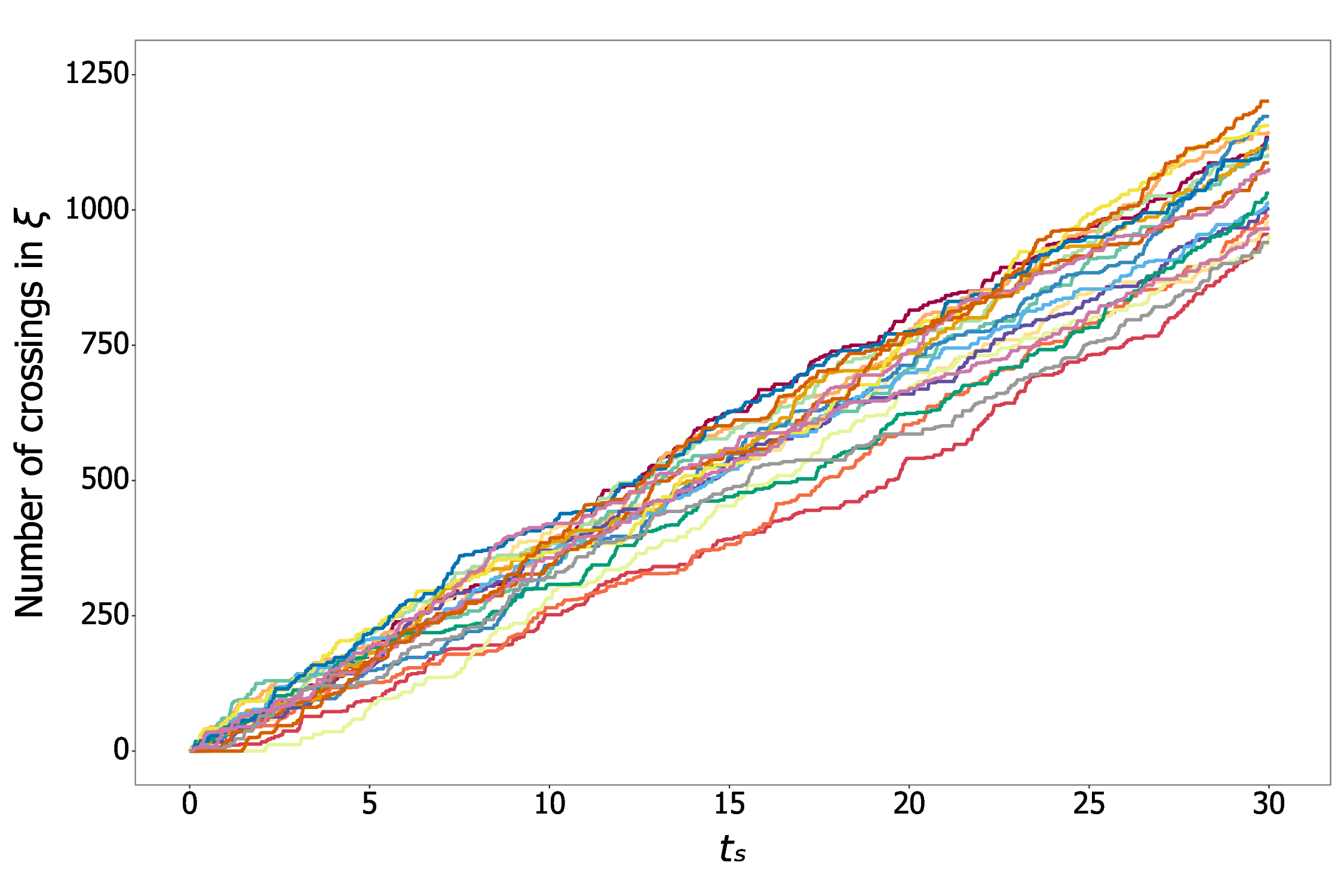}} \scriptsize{(a)}
\end{minipage} 
&
\begin{minipage}{250pt}
\centerline{\includegraphics[width=1.0\textwidth]{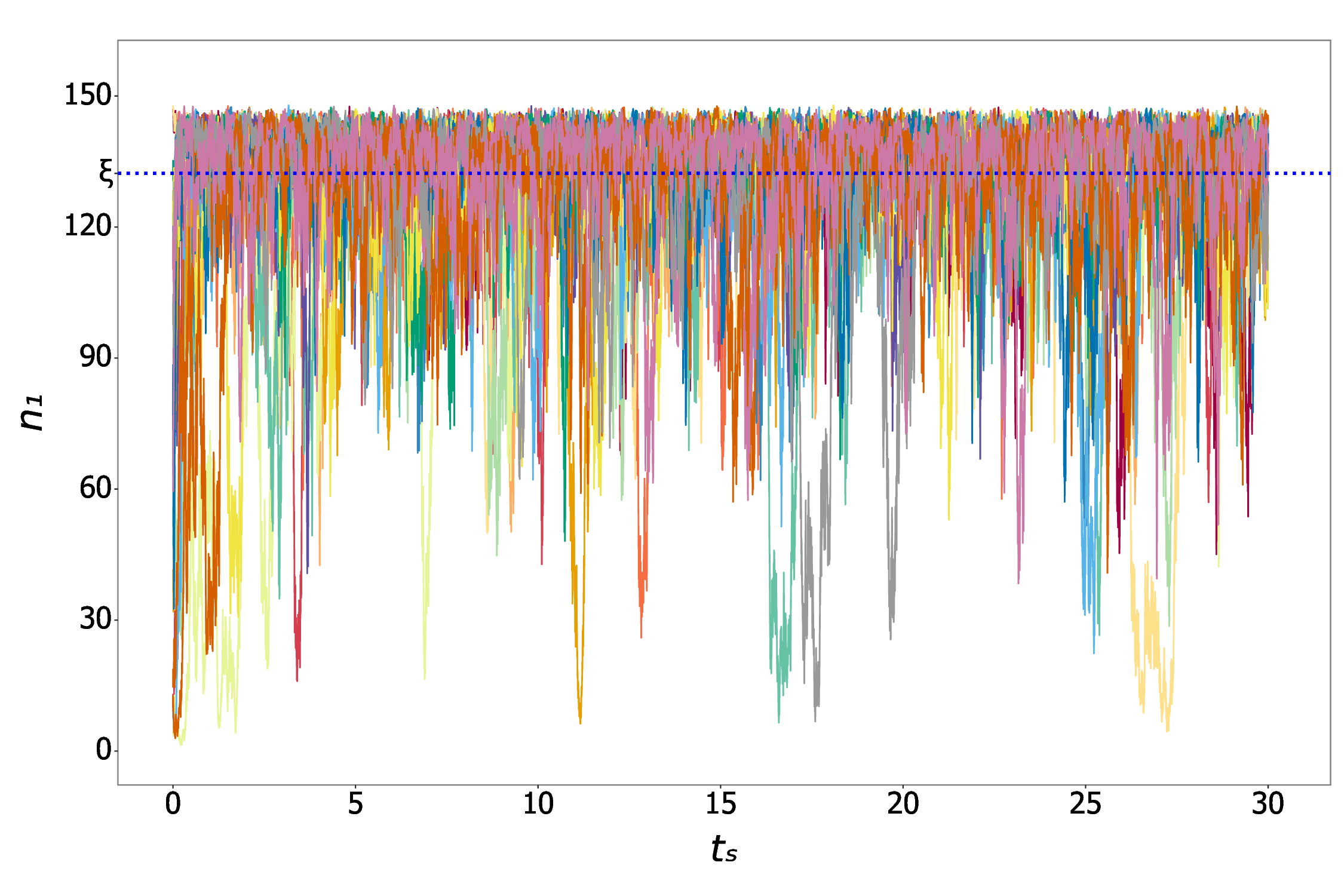}} \scriptsize{(b)}
\end{minipage} 
\end{tabular}
\renewcommand{\figurename}{Fig.}
\caption{(a) The cumulative number of crossings as a function of time. 
(b) The values of vehicle accumulation $n_1$ as a function of time.
The results are obtained by considering 20 simulations, indicated by different colors.}
\label{fig:figure7}
\end{figure}

\section{Effects of the remaining model parameters}\label{app4}

\subsection{Effect of time adopted in the calculations of the flow-concentration curves}

In calculating the fundamental diagrams shown in Figs.~\ref{fig:figure18} and~\ref{fig:figure13} in the main text, the relevant quantities are evaluated at the time instant $t_s\sim 26$.
To confirm the robustness of the obtained result, different values of $t_s$ are utilized for evaluating the flow-concentration curve, and the results are shown in Fig.~\ref{fig:figure8}.
In the figure, the gray points represent the individual estimated values, the solid red line corresponds to their mean, and the dashed blue line indicates that of the deterministic model.
In the calculations, one considers the parameters $N_\mathrm{max} = 200$, $L = 1$, $v_1 = 10$, $v_2 = 60$, $c_1 = 1$, $c_2 = 3$, $\sigma = 1$, and for different $t \in \{5, 10, 15, 20, 25, 30\}$.
The initial condition $n_1(0)$ is drawn from the uniform distribution $\mathrm{U}(1, N)$.
One carries out 20 simulations for each given concentration $k$, ranging from 1 to 150 and incremented by 1. 
The deterministic counterpart is obtained by repeating one simulation assuming $\sigma = 0$.

As $t$ increases, it is observed that the plots converge in terms of the distribution of the gray symbols and the variance of the red curves.
The average flow in most of these plots is observed to be close to or above the value governed by the deterministic model. 
The difference observed between the plots with smaller values of $t$ could be primarily attributed to the randomness of the initial condition $n_1(0)$, as the latter influences the estimation more significantly during the earlier stage.
By Fig.~\ref{fig:figure8} and considering the estimated time of convergence $t_s$ in Appx.~\ref{app3a1}, the value $t_s\sim 26$ is taken for our calculations.

\begin{figure}[!htb]
\begin{minipage}{250pt}
\centerline{\includegraphics[width=1.8\textwidth]{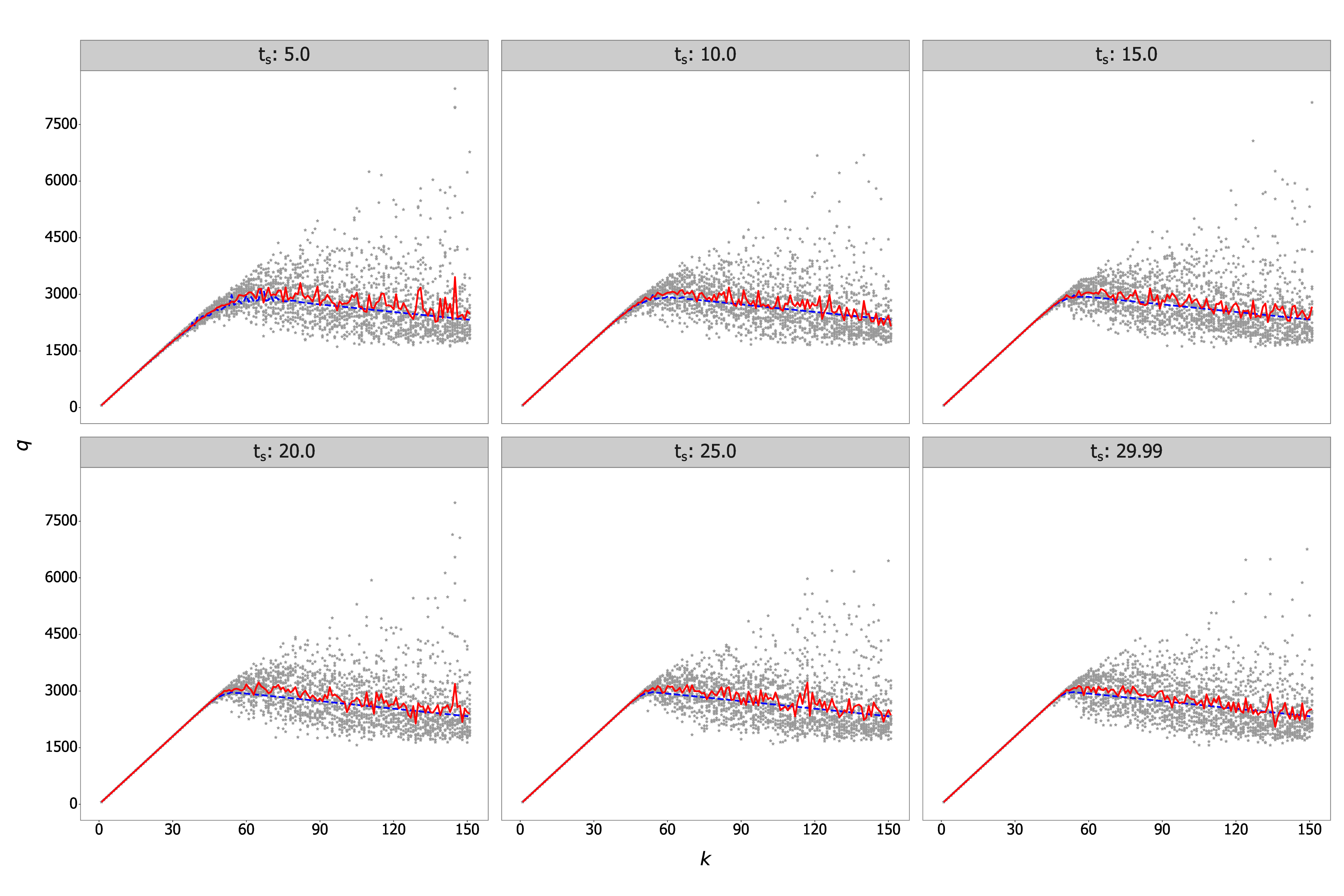}}
\end{minipage}
\renewcommand{\figurename}{Fig.}
\caption{The flow-concentration curves evaluated at different time instants. 
The convention used in the plots is the same as in Fig.~\ref{fig:figure18}.
In the calculations, we vary the value $t$ and adopt the following model parameters $v_1 = 10$, $v_2 = 60$, $L = 1$, $N_\mathrm{max} = 200$, $c_1 = 1$, $c_2 = 3$, $\sigma = 1$, and $N_\mathrm{cut} = 150$.}
\label{fig:figure8}
\end{figure}

\subsection{Effect of the strength of the stochastic noise in the flow-concentration curve}

Here, we turn to assess the impact of $\sigma$.
In Fig.~\ref{fig:figure9}, we show the resulting fundamental diagrams using different values of stochastic noise.
One considers 20 simulations for a given concentration value, which takes 150 discrete values on the axis.
The initial condition $n_1(0)$ is drawn from the uniform distribution $\mathrm{U}(1, N)$ and $t_s=30$. 
The calculations are carried out using the parameters $v_1 = 10$, $v_2 = 60$, $L = 1$, $N_\mathrm{max} = 200$, $c_1 = 1$, $c_2 = 3$, $t_s= 26$, and $N_\mathrm{cut} = 150$.
In the upper row of Fig.~\ref{fig:figure9}, one considers a relatively minor stochastic noise with $\sigma=0.5$, while in the bottom row, one adopts a more significant one with $\sigma=1.2$.

When compared with Fig.~\ref{fig:figure13}, it is apparent that the flow variance increases with increasing $\sigma$, per the intuition and theorem~\ref{meanandvariance}.
Also, it is observed that the deviation of flow average from the deterministic model increases as the stochastic noise becomes more vigorous, as predicted by corollary~\ref{lemTen}.
Moreover, the convex form in the average flow and the almond shape of the flow variance, inferred from corollary~\ref{lemMu} and theorem~\ref{meanandvariance}, become more pronounced for larger $\sigma$.

\begin{figure}[ht]
\begin{tabular}{ccc}
\begin{minipage}{170pt}
\centerline{\includegraphics[width=1\textwidth]{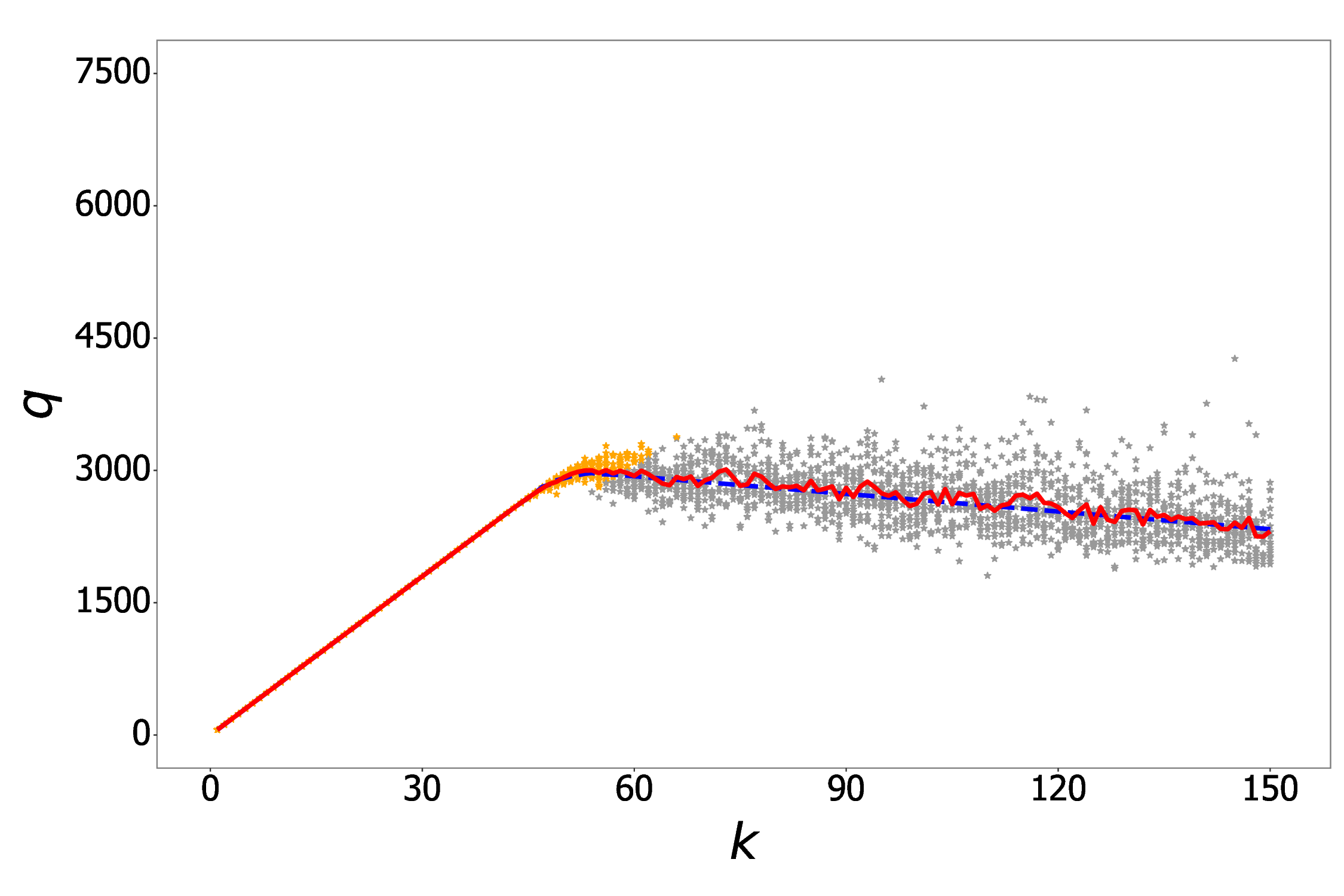}}
\end{minipage}
&
\begin{minipage}{170pt}
\centerline{\includegraphics[width=1\textwidth]{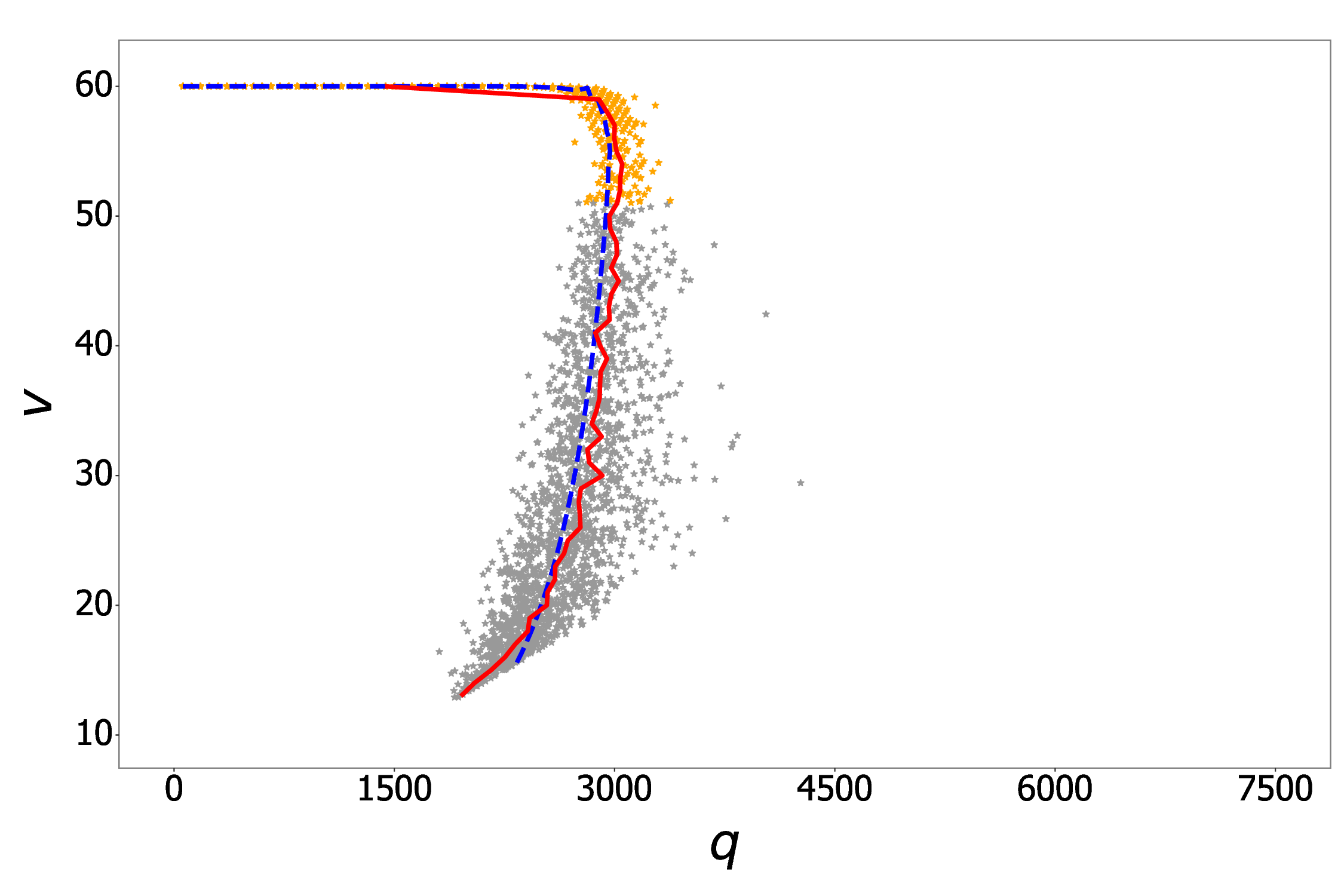}}
\end{minipage}
&
\begin{minipage}{170pt}
\centerline{\includegraphics[width=1\textwidth]{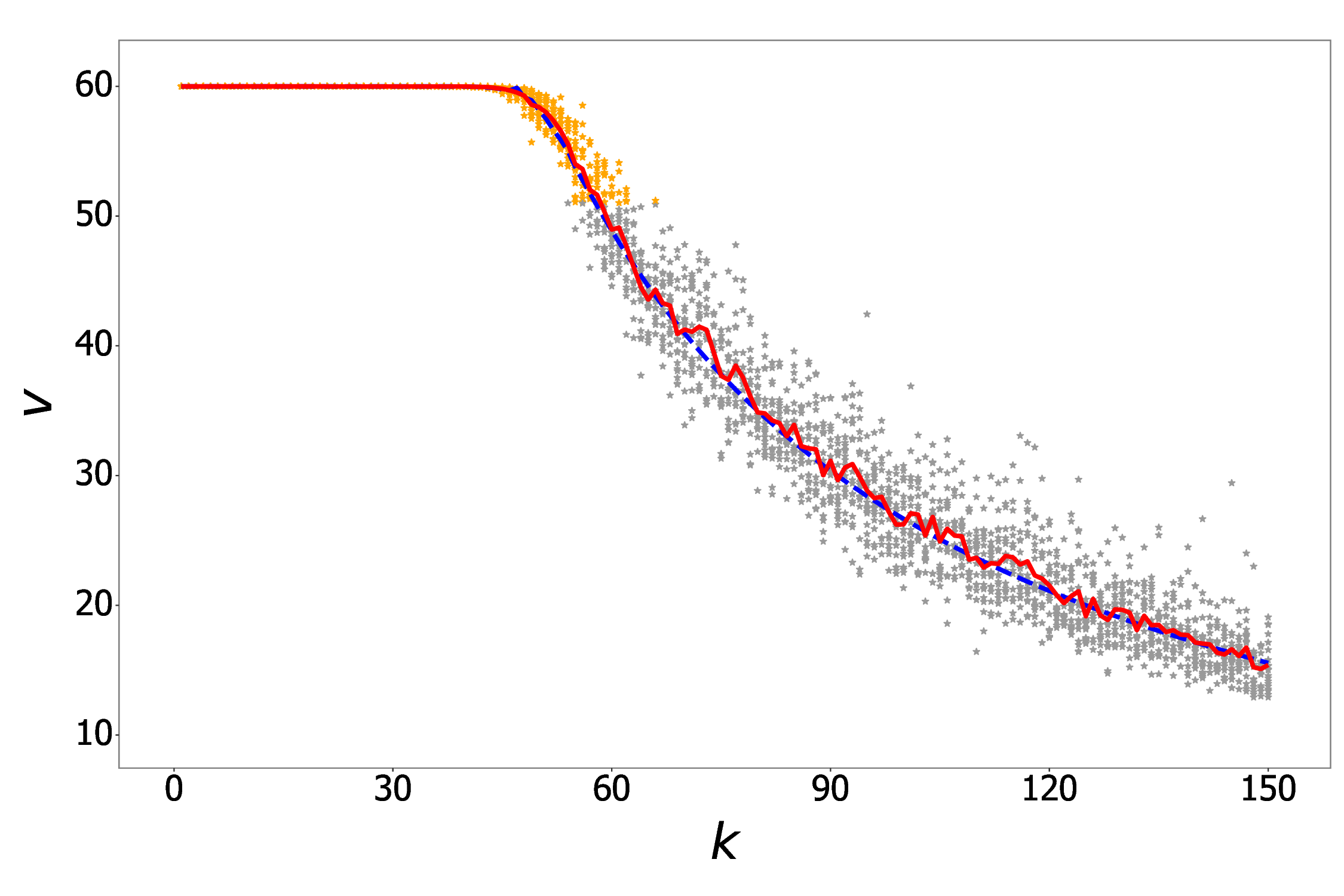}}
\end{minipage}\\
\begin{minipage}{170pt}
\centerline{\includegraphics[width=1\textwidth]{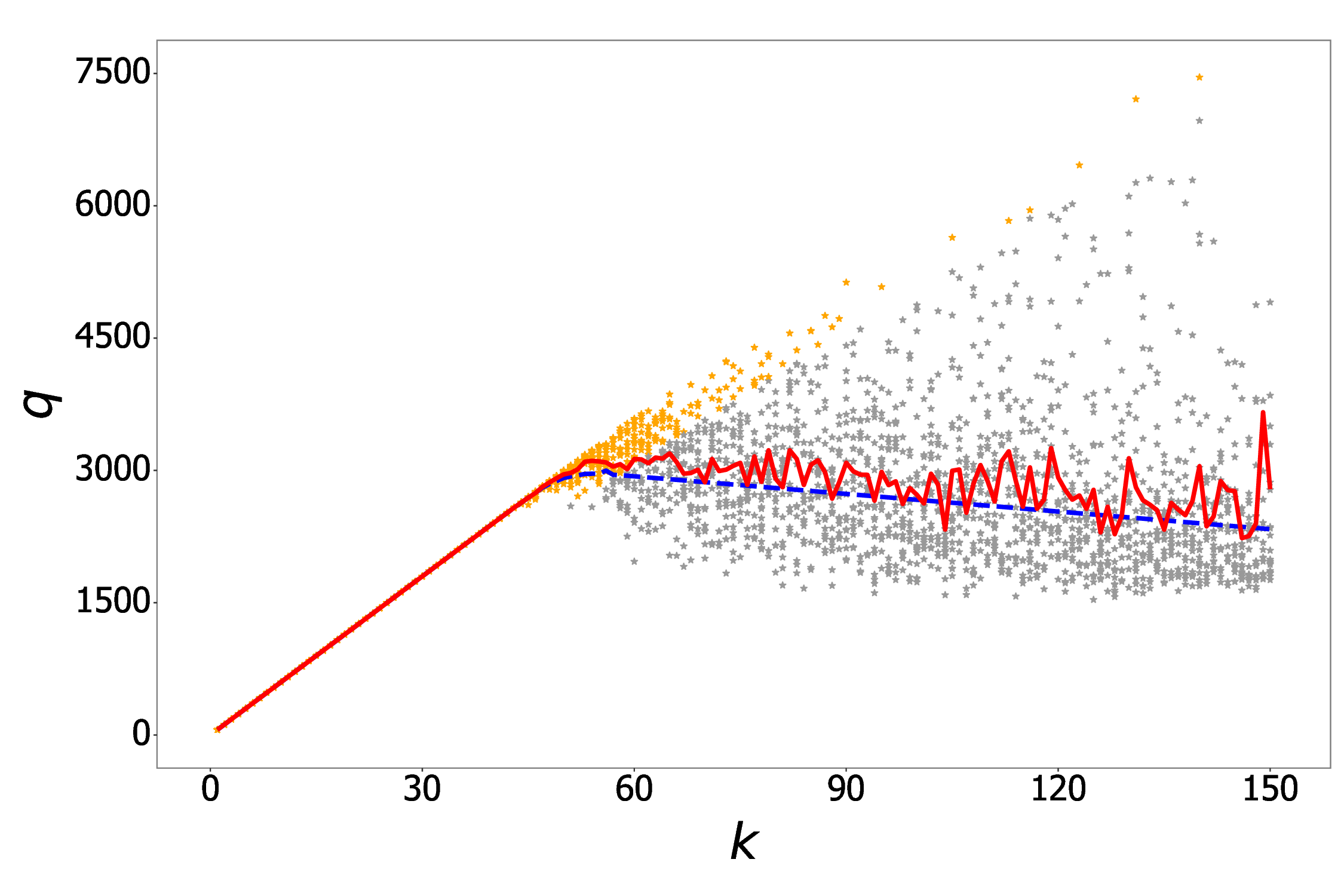}}
\end{minipage}
&
\begin{minipage}{170pt}
\centerline{\includegraphics[width=1\textwidth]{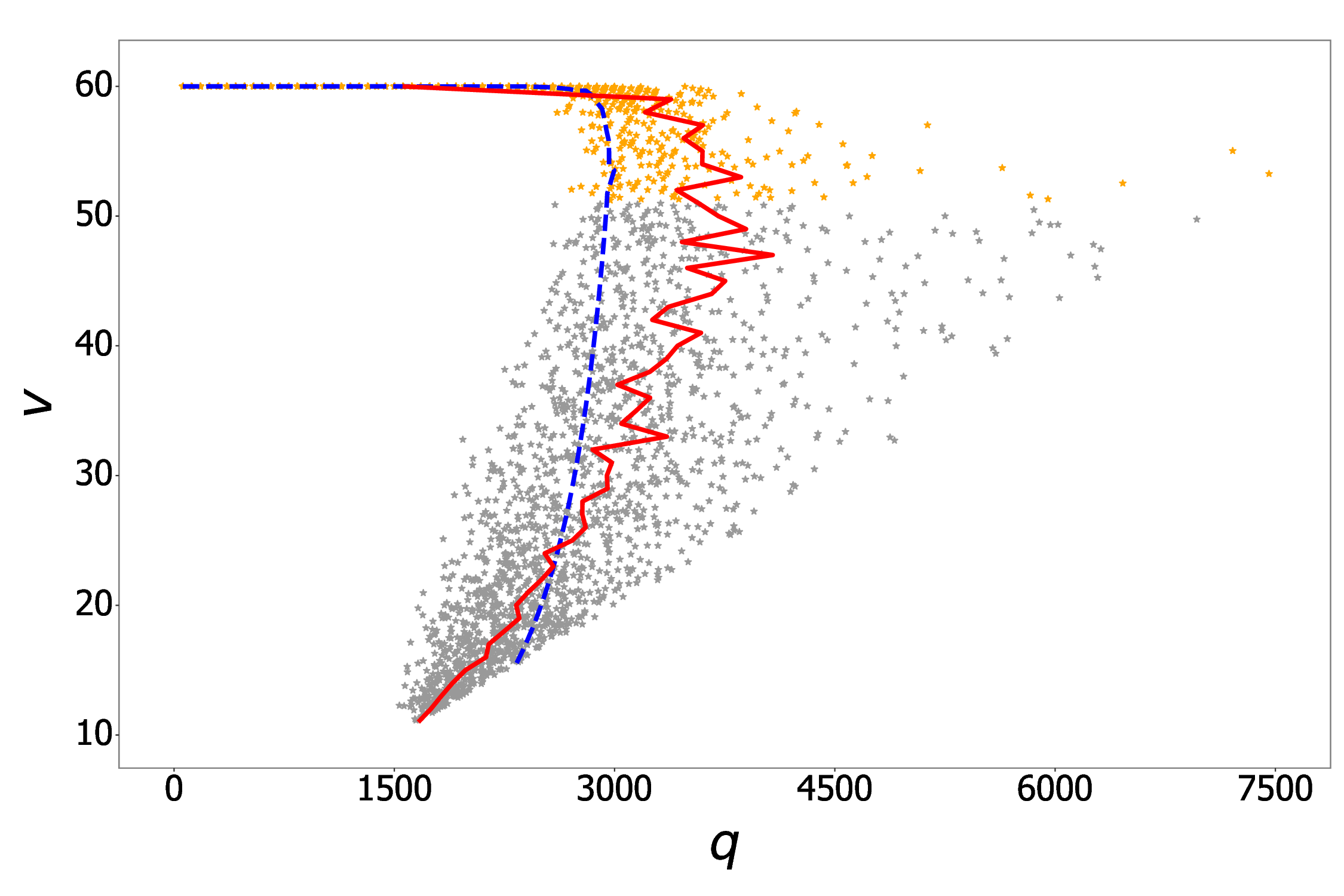}}
\end{minipage}
&
\begin{minipage}{170pt}
\centerline{\includegraphics[width=1\textwidth]{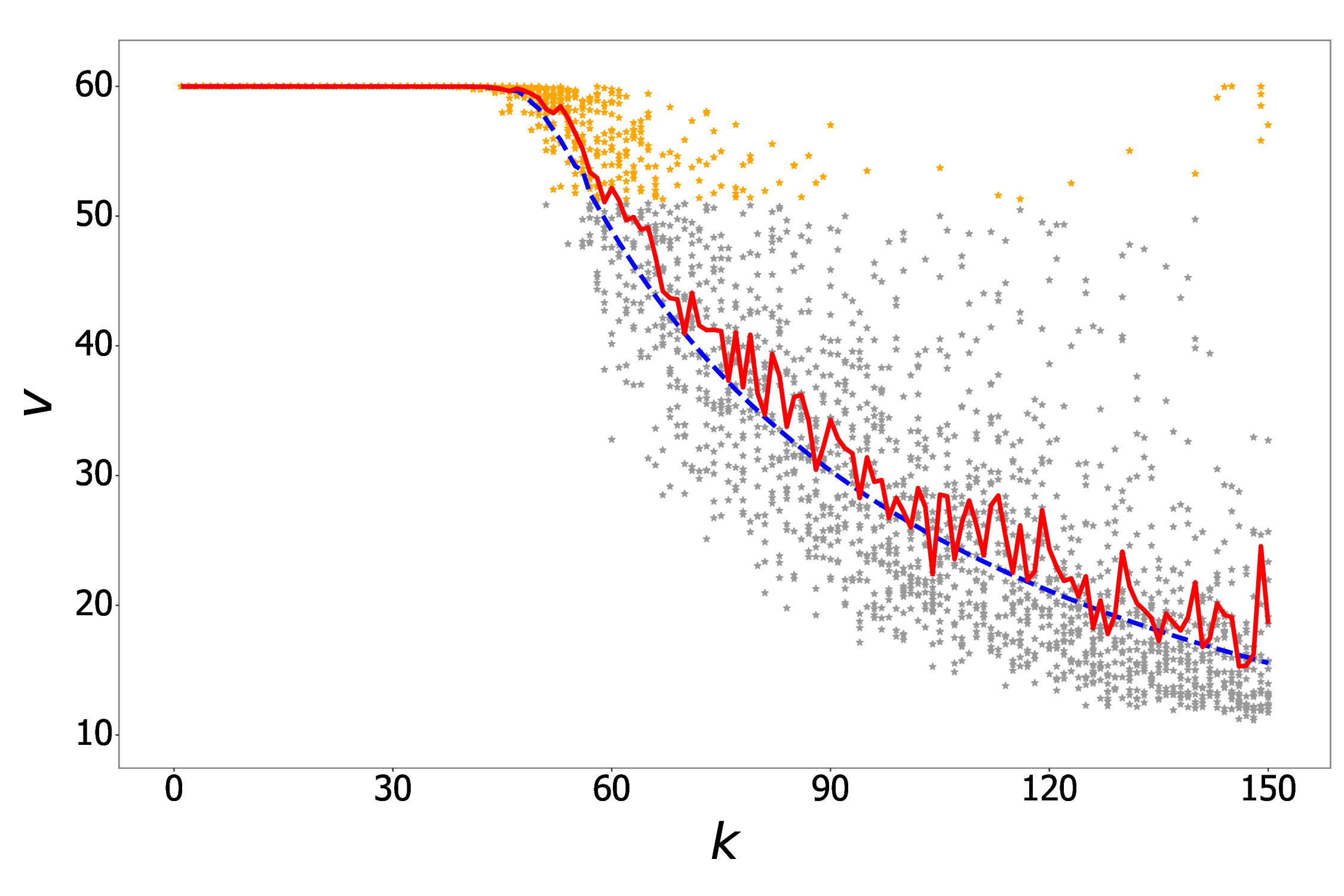}}
\end{minipage}
\end{tabular}\\
\renewcommand{\figurename}{Fig.}
\caption{The same as Fig.~\ref{fig:figure13} but considering $\sigma = 0.5$ (upper row) and $\sigma = 1.2$ (bottom row).
In the calculations, the following values are adopted for the remaining parameters: $v_1 = 10$, $v_2 = 60$, $L = 1$, $N_\mathrm{max} = 200$, $c_1 = 1$, $c_2 = 3$, $t_s= 26$, and $N_\mathrm{cut} = 150$.}
\label{fig:figure9}
\end{figure}

\bibliographystyle{h-physrev}
\bibliography{references_traffic, references_qian}

\begin{thebibliography}{10}

\bibitem{traffic-flow-data-03}
B.~D. Greenshields, J.~R. Bibbins, W.~Channing, and H.~H. Miller,
\newblock Highway Research Board Proceedings {\bf 14}, 448 (1935).

\bibitem{traffic-flow-hydrodynamics-01}
M.~J. Lighthill and G.~B. Whitham,
\newblock Proc. R. Soc. Lond. {\bf A229}, 281 (1955).

\bibitem{traffic-flow-hydrodynamics-02}
P.~Richards,
\newblock Oper. Res. {\bf 4}, 42 (1956).

\bibitem{traffic-flow-hydrodynamics-11}
G.~F. Newell,
\newblock Transp. Res. B: Methodol. {\bf 27}, 281 (1993).

\bibitem{traffic-flow-hydrodynamics-12}
G.~F. Newell,
\newblock Transp. Res. B: Methodol. {\bf 27}, 289 (1993).

\bibitem{traffic-flow-hydrodynamics-13}
G.~F. Newell,
\newblock Transp. Res. B: Methodol. {\bf 27}, 305 (1993).

\bibitem{traffic-flow-hydrodynamics-14}
B.~S. Kerner and P.~Konhauser,
\newblock Phys. Rev. {\bf E48}, R2335 (1993).

\bibitem{traffic-flow-hydrodynamics-17}
B.~S. Kerner and P.~Konhauser,
\newblock Phys. Rev. {\bf E50}, 54 (1994).

\bibitem{traffic-flow-hydrodynamics-22}
D.~Helbing, A.~Hennecke, and M.~Treiber,
\newblock Phys. Rev. Lett. {\bf 82}, 4360 (1999).

\bibitem{traffic-flow-hydrodynamics-23}
H.~Y. Lee, H.-W. Lee, and D.~Kim,
\newblock Phys. Rev. Lett {\bf 81}, 1130 (1998).

\bibitem{traffic-flow-hydrodynamics-24}
M.~Treiber, A.~Hennecke, and D.~Helbing,
\newblock Phys. Rev. {\bf E59}, 239 (1999).

\bibitem{traffic-flow-hydrodynamics-25}
H.~Y. Lee, H.-W. Lee, and D.~Kim,
\newblock Phys. Rev. {\bf E59}, 5101 (1999).

\bibitem{traffic-flow-hydrodynamics-30}
A.~Aw and M.~Rascle,
\newblock SIAM J. Appl. Math. {\bf 60}, 916 (2000).

\bibitem{traffic-flow-hydrodynamics-36}
J.~Li and H.~Zhang,
\newblock Transp. Res. {\bf B57}, 314 (2013).

\bibitem{traffic-flow-hydrodynamics-37}
G.~Costeseque and J.-P. Lebacque,
\newblock Transp. Res. {\bf B70}, 112 (2014).

\bibitem{traffic-flow-hydrodynamics-40}
W.-L. Jin, Q.-J. Gan, and J.-P. Lebacque,
\newblock Transp. Res. {\bf B81}, 316 (2015).

\bibitem{traffic-flow-btz-01}
I.~Prigogine and F.~C. Andrews,
\newblock Oper. Res. {\bf 8}, 789 (1960).

\bibitem{traffic-flow-btz-02}
I.~Prigogine and R.~Herman,
\newblock {\em Kinetic theory of vehicular traffic} ({American Elsevier},
  1971).

\bibitem{traffic-flow-btz-03}
S.~Paveri-Fontana,
\newblock Transp. Res. {\bf 9}, 225 (1975).

\bibitem{traffic-flow-btz-05}
R.~Herman and I.~Prigogine,
\newblock Science {\bf 204}, 148 (1979).

\bibitem{traffic-flow-btz-14}
P.~Nelson,
\newblock Transp. Theory Stat. Phys. {\bf 24}, 383 (1995).

\bibitem{traffic-flow-btz-15}
P.~Nelson,
\newblock Transp. Res. {\bf B32B}, 589 (1998).

\bibitem{traffic-flow-btz-lob-02}
W.-L. Qian, B.~Wang, K.~Lin, R.~F. Machado, and Y.~Hama,
\newblock Int. Jour. Non-Linear Mech. {\bf 89}, 59 (2017), arXiv:1508.02768.

\bibitem{traffic-flow-micro-15}
M.~Bando, K.~Hasebe, A.~Nakayama, A.~Shibata, and Y.~Sugiyama,
\newblock Phys. Rev. E {\bf 51}, 1035 (1995).

\bibitem{traffic-flow-micro-18}
E.~Ben-Naim, P.~L. Krapivsky, and S.~Redner,
\newblock Phys. Rev. E {\bf 50}, 822 (1994).

\bibitem{traffic-flow-micro-19}
T.~S. Komatsu and S.~ichi Sasa,
\newblock Phys. Rev. E {\bf 52}, 5574 (1995).

\bibitem{traffic-flow-micro-32}
G.~Newell,
\newblock Transp. Res. {\bf B36}, 195 (2002).

\bibitem{traffic-flow-micro-33}
M.~Treiber, A.~Hennecke, and D.~Helbing,
\newblock Phys. Rev. {\bf E62}, 1805 (2000).

\bibitem{traffic-flow-micro-37}
M.~Treiber and D.~Helbing,
\newblock Phys. Rev. {\bf E68}, 046119 (2003).

\bibitem{traffic-flow-micro-51}
M.~Treiber, A.~Kesting, and D.~Helbing,
\newblock Transp. Res. {\bf B44}, 983 (2010).

\bibitem{traffic-flow-cellular-01}
K.~Nagel and M.~Schreckenberg,
\newblock J. Phys. I France {\bf 2}, 2221 (1992).

\bibitem{traffic-flow-cellular-02}
Schadschneider and Schreckenberg,
\newblock J. Phys. A {\bf 26}, L679 (1993).

\bibitem{traffic-flow-cellular-03}
M.~Rickert, K.~Nagel, M.~Schreckenbergc, and A.~Latour,
\newblock Physica {\bf A231}, 534 (1996), arXiv:cond-mat/9512119.

\bibitem{traffic-flow-cellular-04}
B.~Eisenblatter, L.~Santen, A.~Schadschneider, and M.~Schreckenberg,
\newblock Phys. Rev. E {\bf 57}, 1309 (1998).

\bibitem{traffic-flow-phenomenology-12}
J.~A. Laval and C.~F. Daganzo,
\newblock Trans. Res. {\bf B40}, 251  (2006).

\bibitem{traffic-flow-micro-41}
A.~Kesting, M.~Treiber, and D.~Helbing,
\newblock Transp. Res. Rec. {\bf 1999}, 86 (2007).

\bibitem{traffic-flow-micro-62}
K.~Kang and H.~A. Rakha,
\newblock Transp. Res. Rec. {\bf 2672}, 144 (2018).

\bibitem{traffic-flow-micro-63}
H.~Li, Z.~Huang, L.~Qin, S.~Zheng, and Y.~Yang,
\newblock Smart Resil. Transp. {\bf 1}, 17 (2019).

\bibitem{traffic-flow-phenomenology-40}
D.~Merchán, M.~Winkenbach, and A.~Snoeck,
\newblock Transp. Res. Part A {\bf 135}, 38 (2020).

\bibitem{traffic-flow-phenomenology-41}
F.~Xin, Z.~Yue, Q.~Shuo, and S.~Liming,
\newblock Complexity {\bf 2021}, 9292389 (2021).

\bibitem{traffic-flow-phenomenology-42}
Z.~Liao, H.~Huang, Y.~Zhao, Y.~Liu, and G.~Zhang,
\newblock ISPRS Int. J. Geo-Inf. {\bf 12}, 144 (2023).

\bibitem{traffic-flow-micro-78}
S.~Leal and P.~de~Almeida,
\newblock Sci. Rep. {\bf 13}, 15550 (2023).

\bibitem{traffic-flow-micro-81}
X.~Wang {\em et~al.},
\newblock Nat. Commun. {\bf 15}, 1306 (2024).

\bibitem{traffic-flow-phenomenology-43}
S.~de~Luca, R.~D. Pace, S.~Memoli, and L.~Pariota,
\newblock Sustainability {\bf 12}, 726 (2020).

\bibitem{traffic-flow-phenomenology-48}
Y.~Chen {\em et~al.},
\newblock npj. Sustain. Mobil. Transp. {\bf 2}, 8 (2025).

\bibitem{traffic-flow-phenomenology-25}
R.~Alsalhi, V.~V. Dixit, and V.~V. Gayah,
\newblock PLoS ONE {\bf 13}, e0200541 (2018).

\bibitem{traffic-flow-review-20}
L.~Zhang, Z.~Yuan, L.~Yang, and Z.~Liu,
\newblock Transp. Rev. {\bf 40}, 689–710 (2020).

\bibitem{traffic-flow-phenomenology-46}
M.~Saavedra, A.~P. Muñuzuri, M.~Menendez, and J.~Balsa-Barreiro,
\newblock Phil. Trans. R. Soc. A. {\bf 382}, 20240102 (2024).

\bibitem{traffic-flow-micro-57}
M.~Treiber and A.~Kesting,
\newblock Procedia - Social and Behavioral Sciences {\bf 80}, 922 (2013),
\newblock 20th International Symposium on Transportation and Traffic Theory
  (ISTTT 2013).

\bibitem{traffic-flow-micro-80}
Y.~Wang, F.~de~Souza, and D.~Karbowski,
\newblock (2024), arXiv:2409.19090.

\bibitem{traffic-flow-data-ML-15}
Y.~Lv, Y.~Duan, W.~Kang, Z.~Li, and F.-Y. Wang,
\newblock IEEE Trans. Intell. Transp. Syst. {\bf 16}, 865 (2015).

\bibitem{traffic-flow-data-ML-20}
L.~Zhao {\em et~al.},
\newblock IEEE Trans. Intell. Transp. Syst. {\bf 21}, 3848 (2020).

\bibitem{traffic-flow-data-ML-21}
S.~Guo, Y.~Lin, N.~Feng, C.~Song, and H.~Wan,
\newblock  {\bf 33}, 922 (2019).

\bibitem{traffic-flow-data-ML-27}
Y.~Yuan, G.~Cheng, and J.~Zhu,
\newblock Transp. Res. B: Methodol. {\bf 136}, 1 (2020).

\bibitem{traffic-flow-data-ML-28}
K.~Zhang, Z.~Chu, J.~Xing, H.~Zhang, and Q.~Cheng,
\newblock Mathematics {\bf 11}, 4075 (2023).

\bibitem{traffic-flow-data-ML-29}
A.~Sroczyński and A.~Czyżewski,
\newblock Sci. Rep. {\bf 13}, 14563 (2023).

\bibitem{traffic-flow-data-ML-30}
D.~Guan {\em et~al.},
\newblock Sci. Rep. {\bf 14}, 30409 (2024).

\bibitem{traffic-flow-btz-lob-04}
W.-L. Qian, A.~F. Sequeira, R.~F. Machado, K.~Lin, and T.~W. Grant,
\newblock Transp. Res. {\bf B105}, 328 (2017), arXiv:1603.06175.

\bibitem{traffic-flow-micro-45}
M.~Treiber and D.~Helbing,
\newblock Eur. Phys. J. B {\bf 68}, 607 (2009).

\bibitem{traffic-flow-micro-46}
S.~H. Hamdar, H.~S. Mahmassani, and M.~Treiber,
\newblock Transp. Res. B: Methodol. {\bf 78}, 32 (2015).

\bibitem{traffic-flow-micro-47}
A.~Tordeux and A.~Schadschneider,
\newblock J. Phys. A: Math. Theor. {\bf 49}, 185101 (2016).

\bibitem{traffic-flow-micro-48}
Y.~Wang, X.~Li, J.~Tian, and R.~Jiang,
\newblock Transp. Sci. {\bf 54}, 274 (2020).

\bibitem{traffic-flow-micro-77}
M.~Friesen, H.~Gottschalk, B.~R{\"u}diger, and A.~Tordeux,
\newblock SIAM J. Appl. Math. {\bf 81}, 853 (2021).

\bibitem{traffic-flow-micro-82}
B.~Rüdiger, A.~Tordeux, and B.~E. Ugurcan,
\newblock J. Phys. A: Math. Theor. {\bf 57}, 295203 (2024).

\bibitem{traffic-flow-cellular-20}
A.~Sopasakis,
\newblock Physica A {\bf 342}, 741 (2004).

\bibitem{traffic-flow-cellular-07}
M.~Kanai, K.~Nishinari, and T.~Tokihiro,
\newblock Phys. Rev. {\bf E72}, 035102(R) (2005).

\bibitem{traffic-flow-cellular-25}
S.~Liu, D.~Kong, and L.~Sun,
\newblock Promet - Traffic\&Transportation {\bf 34}, 567 (2022).

\bibitem{traffic-flow-hydrodynamics-52}
T.~Fan, S.~C. Wong, Z.~Zhang, and J.~Du,
\newblock Transp. B Transp. Dyn. {\bf 12}, 2419402 (2024).

\bibitem{traffic-flow-btz-lob-01}
A.~F. Siqueira, C.~J.~T. Peixoto, C.~Wu, and W.-L. Qian,
\newblock Transp. Res. {\bf B87}, 1 (2016), arXiv:1408.2902.

\bibitem{traffic-flow-btz-lob-03}
W.-L. Qian, R.~F. Machado, K.~Lin, and A.~F. Sequeira,
\newblock Transp. Res. Procedia , 1227 (2017).

\bibitem{traffic-flow-data-inverse-lambda-01}
L.~Edie,
\newblock Oper. Res. {\bf 9}, 66  (1961).

\bibitem{traffic-flow-data-inverse-lambda-02}
J.~S. Drake and J.~L. Schofer,
\newblock H. Res. Rec. {\bf 156}, 53  (1967).

\bibitem{traffic-flow-data-inverse-lambda-03}
H.~Payne,
\newblock Transp. Res. Rec. {\bf 971}, 104  (1984).

\bibitem{traffic-flow-data-inverse-lambda-04}
F.~Hall, V.~Hurdle, and J.~Banks,
\newblock Transp. Res. Rec. {\bf 1365}, 12  (1992).

\bibitem{traffic-flow-data-capacity-drop-01}
J.~H. Banks,
\newblock Transp. Res. Rec. {\bf 1287}, 20  (1990).

\bibitem{traffic-flow-data-capacity-drop-02}
J.~H. Banks,
\newblock Transp. Res. Rec. {\bf 1320}, 234  (1991).

\bibitem{traffic-flow-data-capacity-drop-03}
K.~Hall, F. L.and Agyemang-Duah,
\newblock Transp. Res. Rec. {\bf 1320}, 91  (1991).

\bibitem{traffic-flow-review-02}
N.~J. Pedersen and et.al.,
\newblock Transp. Res. Cir. {\bf E-C149}, 1 (2011).

\bibitem{traffic-flow-three-phase-01}
B.~S. Kerner,
\newblock Physica A {\bf 333}, 379 (2004).

\bibitem{traffic-flow-three-phase-02}
B.~S. Kerner and S.~L. Klenov,
\newblock Transp. Res. Rec. {\bf 1965}, 70 (2006).

\bibitem{traffic-flow-data-17}
J.~A. Laval,
\newblock Transp. Res {\bf B45}, 385 (2011).

\bibitem{traffic-flow-data-18}
B.~Coifman,
\newblock Transp. Res {\bf B68}, 173 (2014).

\bibitem{traffic-flow-hydrodynamics-35}
S.~Blandin, J.~Argote, A.~M. Bayen, and D.~B. Work,
\newblock Transp. Res. {\bf B52}, 31 (2013).

\bibitem{traffic-flow-data-capacity-drop-08}
K.~Yuan, V.~L. Knoop, and S.~P. Hoogendoorn,
\newblock Transp. B Transp. Dyn. {\bf 4}, 138 (2015).

\bibitem{traffic-flow-data-capacity-drop-09}
M.~E. Cao, G.~Nilsson, and S.~Coogan,
\newblock 2021 IEEE Conf. Control Technol. Appl. (CCTA) , 1037 (2021).

\bibitem{traffic-flow-data-capacity-drop-10}
Y.~Han {\em et~al.},
\newblock in press  (2024).

\bibitem{khasminskii}
R.~Khasminskii,
\newblock {\em Stochastic Stability of Differential Equations} (Springer
  Berlin, Heidelberg 2nd edition, 2012).

\bibitem{Gray}
A.~Gray, D.~Greenhalgh, L.~Hu, X.~Mao, and J.~Pan,
\newblock SIAM J. Appl. Math. {\bf 71}, 876 (2011).

\bibitem{book-statistical-inference-Wasserman}
L.~Wasserman,
\newblock {\em All of Statistics: A Concise Course in Statistical Inference}, 1
  ed. (Springer, 2003).

\end{thebibliography}

\end{document}